\DeclareMathOperator\tr{tr}
\DeclareMathOperator*\argmin{\arg\min}
\def\bA{{A}}
\def\Ex{{E}}
\def\bt{{t}}
\def\bT{{T}}
\def\pr{{\rm{pr}}}
\def\bK{{K}}
\def\bT{{T}}
\def\bW{{W}}
\def\bY{{Y}}
\def\bx{{x}}
\def\by{{y}}
\def\bGamma{{\Gamma}}
\def\bnu{{\nu}}
\def\bSigma{{\Sigma}}
\def\bsigma{{\sigma}}
\def\balpha{{\alpha}}
\def\bgamma{{\gamma}}
\def\bTheta{{\Theta}}
\def\bxi{{\xi}}
\def\bl{{l}}
\def\bmu{{\mu}}
\def\vecone{{\rm vech}}
\def\vectwo{{\rm vech_2}}
\def\R{{ R}}
\def\D{{ D}}
\def\F{{ F}}
\def\bH{{ H}}
\def\trans{^{\rm T}}
\begin{document}

\title{Network Analysis of Count Data from Mixed Populations}

\author{\name Junjie Tang \email junjie.tang@pku.edu.cn \\
	    \name Changhu Wang \email wangch156@pku.edu.cn \\
	   	\name Feiyi Xiao \email xiaofeiyi1217@pku.edu.cn \\
	   	\name Ruibin Xi \email ruibinxi@math.pku.edu.cn \\
       \addr School of Mathematical Sciences\\
       Peking University\\
       Beijing, 100871, CHN}

\editor{My editor}

\maketitle

\begin{abstract}
In applications such as gene regulatory network analysis based on single-cell RNA sequencing data, samples often come from a mixture of different populations and each population has its own unique network. Available graphical models often assume that all samples are from the same population and share the same network. One has to first cluster the samples and use available methods to infer the network for every cluster separately. However, this two-step procedure ignores uncertainty in the clustering step and thus could lead to inaccurate network estimation. Motivated by these applications, we consider the mixture Poisson log-normal model for network inference of count data from mixed populations. The latent precision matrices of the mixture model correspond to the networks of different populations and can be jointly estimated by maximizing the lasso-penalized log-likelihood. Under rather mild conditions, we show that the mixture Poisson log-normal model is identifiable and has the positive definite Fisher information matrix. Consistency of the maximum lasso-penalized log-likelihood estimator is also established. To avoid the intractable optimization of the log-likelihood, we develop an algorithm called VMPLN based on the variational inference method. Comprehensive simulation and real single-cell RNA sequencing data analyses demonstrate the superior performance of VMPLN. 
\end{abstract}

\begin{keywords}
  Graphical model, Identifiability, Mixed model, Single-cell RNA sequencing, Variational inference
\end{keywords}

\section{Introduction}
\label{sec:intro}
Graphical models \citep{drton2017structure} such as the Gaussian graphical model \citep{meinshausen2006high,friedman2008sparse} have been widely applied to many different fields for identifying key interactions between random variables \citep{farasat2015probabilistic,wille2004sparse,dobra2011copula}. These graphical models usually assume that all samples, or at least the samples under a known condition, are sampled from the same population and thus have the same network. However, in applications such as the recent single-cell RNA sequencing (scRNA-seq) studies \citep{aibar2017scenic,specht2017leap,chan2017gene,aibar2017scenic,song2022single}, samples often come from a mixture of different populations and each population has its own unique network. In scRNA-seq studies, samples are single cells of different cell types and each cell type has its unique gene regulatory network. Gene expressions of single cells are measured as counts of short reads or unique molecular identifiers. Because the cell identities are unknown, one has to first assign single cells to different cell types (e.g. by clustering) and then estimates the gene regulatory networks using available methods. This two-step procedure can provide accurate network estimation if different populations are well separated. If, instead, different populations have a higher mixing degree, a large proportion of samples cannot be confidently assigned with a label and the incorrect label assignment could seriously influence the performance of network inference (See Figure \ref{fig:total} for an example).

\begin{figure}[htbp]
	\centering
	\includegraphics[width = 1.00\textwidth, height=0.35\textwidth]{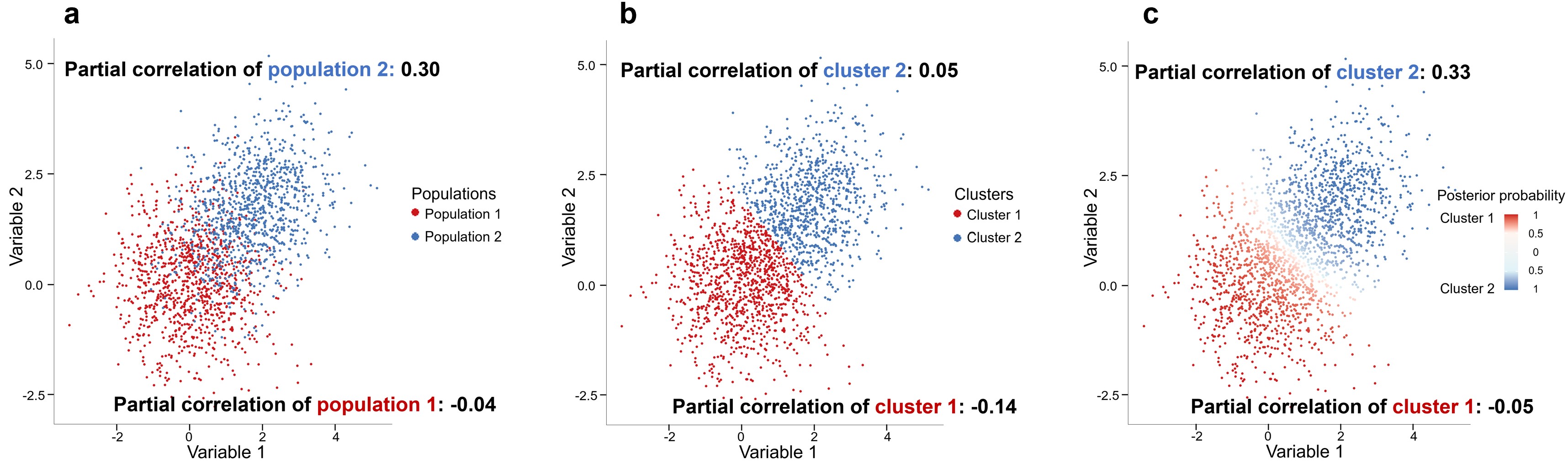}
	\caption{An example of data from a two-dimensional mixture Gaussian distribution. We sample 2000 observations from the Gaussian mixture model $0.5\ \mbox{N}\left(\mu = (0,0) ,\Sigma =  \Theta_1^{-1}\right) + 0.5\ \mbox{N}\left( \mu = (1.7,1.7) , \Sigma = \Theta_2^{-1}\right)$. The diagonal elements of precision matrices $\Theta_1$ and $\Theta_2$ are 1, and the non-diagonal elements of $\Theta_1$ and $\Theta_2$ are 0 and $-0.3$, respectively. (a) The scatter plot of the 2000 samples with true labels (red: population 1, blue: population 2). Using the true label, as expected, samples in population 1 demonstrate a partial correlation very close to 0, and samples in population 2 have a large nonzero partial correlation (0.30). (b) The samples are first clustered using the K-means algorithm. Partial correlations are calculated for each cluster and are far away from the true partial correlations, which would lead to incorrect network estimates. (c) Parameters are estimated by maximizing the likelihood of the mixture Gaussian distribution. The estimates of the partial correlations are now close to the true values.}
	\label{fig:total}
\end{figure}

Motivated by the gene regulatory network inference problem in scRNA-seq studies, we consider developing a network inference method for count data from mixed populations using mixture models. One major advantage of mixture models for network inference is that sample identities need not be predetermined before network inference. Instead, mixture models can allow joint analyses of clustering and network inference, and thus could give better network estimation when different populations are poorly separated (See Figure \ref{fig:total} for an example). Available graphical models for count data include Poisson graphical models and their extensions \citep{yang2012graphical,allen2013local}, negative binomial graphical models \citep{park2021negative} and Poisson log-normal models \citep{Wu2018,chiquet2019variational,silva2019multivariate}. The Poisson and negative binomial graphical models can only allow negative interactions, and the extension of the Poisson graphical models have no explicit form of the joint distribution and cannot adequately account for over-dispersion in the data, such as in the scRNA-seq data \citep{ziegenhain2017comparative}. We therefore consider the mixture of Poisson log-normal models for network inference of count data from mixed populations.

A non-negative integer random vector $Y = (Y_1,\dots,Y_p)^T \in \mathbb{R}^p$ follows a Poisson log-normal distribution, if conditional on a latent normal random vector $X = (X_1,\dots,X_p)^T \sim \mbox{N}(\mu,\Sigma)$, each element $Y_j$ of $Y$ independently follows the univariate Poisson distribution $\mbox{Poisson}\left\{\exp(X_j)\right\}$ ($j = 1,\dots,p$). Similar to the Gaussian graphical model, the network of the Poisson log-normal model is the precision matrix $\Theta = {\Sigma}^{-1}$ of the latent variable $X$. The mixture Poisson log-normal model is a mixture of $G$ different Poisson log-normal models, and the underlying $G$ latent precision matrices are the networks of different populations. Assuming that the networks are sparse, we can maximize the lasso-penalized log-likelihood of the mixture Poisson log-normal model to estimate the networks. The mixture Poisson log-normal model has been used for clustering count data \citep{silva2019multivariate} and EM algorithms coupled with MCMC steps are developed to maximize its computational intractable log-likelihood. However, theoretical properties of the mixture Poisson log-normal model are not studied, and EM algorithms with MCMC steps are computationally very expensive.

Here, we propose to use these mixture models for network inference. We establish the basic properties of the mixture Poisson log-normal model including its identifiability and the positive definiteness of its Fisher information matrix. We further show that the network estimator by maximizing the lasso-penalized log-likelihood is consistent. All technical proofs are in appendix Section \ref{Appendix:Technicalproofs}. We adopt the variational inference approach \citep{jordan1999introduction,wainwright2008graphical} and develop a more efficient algorithm called VMPLN for the network inference. We compare VMPLN with popular graphical methods and state-of-the-art gene regulatory network inference methods using simulation and real data analyses. We also demonstrate an application of VMPLN to a large scRNA-seq dataset from patients infected with severe acute respiratory syndrome coronavirus 2 (SARS-CoV-2). All data used and corresponding source codes are available and can be accessed at \url{https://doi.org/10.5281/zenodo.7069698}.

\section{Model and Theoretical Properties}
\label{sec:Model}
In this section, we describe the mixture Poisson log-normal model for network inference of count data from mixed populations, and then establish the theoretical properties of the mixture Poisson log-normal model as well as the consistency of network estimator by maximizing the lasso-penalized log-likelihood.
\subsection{The Mixture Poisson Log-normal Model of Count data from Mixed Populations}
Let $Y_{i}=\left(Y_{i 1}, \dots, Y_{i p}\right)^T$ be the $i$th observation ($i=1,\dots,n$), where $Y_{ij}$'s are all non-negative integers. The $n$ samples belong to $G$ different populations. Conditional on latent variables $X_{ij}$, we assume that $Y_{ij}$'s are independent Poisson variables with  parameters $\lambda_{i j} = l_i \exp\left(X_{ij}\right)$, where $l_i$'s are known scaling factors. The underlying population $Z_i$ of the $i$th sample follows a multinomial distribution $\mbox{Multinomial} (1,\pi)$, where $\pi = (\pi_1,\dots,\pi_{G})^T$ is the proportion parameter representing the composition of the mixture components. Given $Z_i = g$ ($g=1,\dots,G$), the latent vector $X_i = \left(X_{i 1}, \dots, X_{i p}\right)^T$ is normally distributed with a mean $\mu_g\in \mathbb{R}^p$ and a covariance matrix $\Theta_g^{-1}$. In summary, the mixture Poisson log-normal model can be written as, 

\begin{equation}\label{equ0}
	\begin{aligned}
		Y_i \mid X_i &\sim \prod_{j=1}^p\mbox{Poisson}\left\{l_i\ \exp \left(X_{i j}\right)\right\}, \\
		X_i \mid Z_i = g &\sim \mbox{N}\left( \mu_g ,{\Theta_g}^{-1} \right), \Theta_g \succ 0,\\
		Z_i &\sim \mbox{Multinomial} (1,\pi),
	\end{aligned}
\end{equation}
where $\Theta_g \succ 0$ means that $\Theta_g$ is positive definite. In scRNA-seq data, $Y_{ij}$ is the observed expression of the $j$th gene in the $i$th cell, $\exp\left(X_{ij}\right)$ represents the underlying ``true'' expression, and $l_i$ is the library size of the $i$th cell and can be readily estimated \citep{hafemeister2019normalization,lun2016pooling}.  

Denote $\theta = \big(\pi, \mu=\{\mu_g\}_{g = 1}^G,\Theta = \{\Theta_g \}_{g = 1}^G\big)$ as unknown parameters of the model (\ref{equ0}). Let $p(Y_i \mid X_i) = \prod_{j = 1}^{p} \left[\left\{l_i \exp \left(X_{i j}\right)\right\}^{Y_{i j}} \exp\{-l_i \exp \left(X_{i j}\right)\}(Y_{i j }!)^{-1}\right]$ be the conditional probability mass function of $Y_i$ given $X_i$. The conditional density function $p(X_i \mid Z_i;\mu,\Theta)$ of $X_i$ given $Z_i$ can be written as $p(X_i \mid Z_i;\mu,\Theta) = \prod_{g = 1}^{G} \left\{p(X_i;\mu_g,\Theta_g)\right\}^{I\left(Z_i = g\right)}$, where $p(\cdot\ ;\mu_g,\Theta_g)$ is the density of $\mbox{N}\left( \mu_g ,{\Theta_g}^{-1} \right)$. Denote $p(Z_i ; \pi) = \prod_{g = 1}^{G} \pi_{g}^{I(Z_i = g)}$ as the probability mass function of the multinomial distribution. The log-likelihood of the mixture model (\ref{equ0}) is
\begin{equation}\label{equ_likelihood}
	\ell_n(\theta) = \sum_{i = 1}^{n} \log\left\{p(Y_i;\theta)\right\} = \sum_{i = 1}^{n}\log \iint p(Y_i \mid X_i) p(X_i \mid Z_i;\mu,\Theta) p(Z_i;\pi) \mathrm{d} X_i \mathrm{d} Z_i\ ,
\end{equation}
where $p(Y_i;\theta)$ is the marginal probability mass function of $Y_i$. The precision matrices $\Theta_g$'s represent the population-specific network. Let $\Theta_{g,l m}$ be the $(l,m)$th element of $\Theta_g$. The networks are sparse and can be estimated by minimizing the lasso-penalized negative log-likelihood 
\begin{equation}\label{penalizedloglike}
	- n^{-1}\ell_n(\theta)+\lambda_{n} \sum_{g=1}^{G}\| {\Theta_{g}}\|_{1, \text { off }},
\end{equation}
where $\lambda_{n}>0$ is a tuning parameter and $\| {\Theta_{g}}\|_{1, \text { off }} = \sum_{l \neq m} |{\Theta_{g,l m}}|$  is the off-diagonal $l_1$-norm of $\Theta_{g}$. In the following, we first establish the consistency of the estimator obtained by minimizing (\ref{penalizedloglike}) and then derive an algorithm for estimating the precision matrices.

\subsection{Theoretical Properties}
In this section, we always assume that the true means $\mu_g^*$ and proportions $\pi_g^*$ ($g=1,\dots,G$) are known. Let $\nu_g = {\rm vech}(\Theta_g) $ be the vectorization of the precision matrix $\Theta_g$ (appendix Section \ref{Appendix:Technicalproofs_notation}) and ${\nu} =  (\nu_1^{\rm T}, \ldots, \nu_G^{\rm T})^{\rm T}$ . In this case, the log-likelihood $\ell_n(\theta)$ can be viewed as a function of ${\nu}$, also denoting as $\ell_n(\nu)$, and we consider the estimator $\hat{\nu}_n$ that minimizes $-n^{-1}\ell_n(\nu) + \lambda_{n} \sum_{g=1}^{G}\| {\Theta_{g}}\|_{1, \text { off }}$ subject to $\Theta_{g} \succ 0\  (g=1,\dots,G)$.  Denote $\nu^*$ as the true value of the unknown parameter $\nu$, $S(\nu) = \{i  \mid  \ \nu_{i} \neq 0\}$ as the support of $\nu$, and $S^{*} = S({\nu}^{*})$. Suppose that $Y$ follows the mixture model (\ref{equ0}) with its log-likelihood function $\ell(\nu,Y)$. Denote $\Gamma^* = \Gamma(\nu^*) = -{\rm E}\left\{\partial^2\ell(\nu, Y)/\partial \nu \partial \nu^T\right\}\mid_{\nu = \nu^*}$ as the Fisher information matrix at $\nu^*$, and ${\Gamma^*}_{T_1 T_2}$ as the submatrix of ${\Gamma^*}$ with rows and columns index by sets $T_1$ and $T_2$, respectively. Before presenting the theoretical properties, we give the following conditions:\\
\textbf{Condition C1}. The eigenvalues of the precision matrices are bounded in $[m,M]$, where $0<m<M<\infty$.\\
\textbf{Condition C2}. The scaling factors $l_i>0$ ($i = 1,\dots,n$) are independent and identically distributed random variables with a bounded support.\\
\textbf{Condition C3}. The true mean vectors $\mu_g^*$ ($g = 1,\dots,G$) are bounded and different from each other.\\
\textbf{Condition C4} (The irrepresentability condition).  $||\Gamma^*_{{S^{*}}^c S^{*}}{(\Gamma^{*}_{S^{*} S^{*}})}^{-1}||_{\infty} < 1$. 

Denote $\lambda_{min}(\Theta_g)$ and $\lambda_{max}(\Theta_g)$ as the minimum and maximum eigenvalue of $\Theta_g$. Define $\mathcal{D} = \{ {\nu} =  (\nu_1^{\rm T}, \nu_2^{\rm T}, \ldots, \nu_G^{\rm T})^{\rm T}|~\nu_g = {\rm vech}(\Theta_g), m \leq \lambda_{min}(\Theta_g)\leq\lambda_{max}(\Theta_g)\leq M \}$ and $ \kappa =  \lambda_{min}(\Gamma^*)$ be the minimum eigenvalue of the Fisher information matrix at $\nu^{*}$. Condition C1-C2 are commonly used in the literature \citep{cai2011constrained, li2020transfer}. Condition C3 is to ensure that different components of the mixture model (\ref{equ0}) can be distinguished from each other. Under Condition C3, the mixture model (\ref{equ0}) is identifiable and its Fisher information matrix $\Gamma^*$ is positive definite, and thus $\kappa>0$. The irrepresentability condition C4 is also commonly used \citep{zhao2006model, ravikumar2011high}. Based on the above conditions, we present the theoretical properties of the mixture model (\ref{equ0}) and the estimator $\hat{\nu}_n$ in the following theorems.

\begin{theorem} \label{thm:basic}
	Under Condition C1-C3, the mixture Poisson log-normal model  (\ref{equ0}) is identifiable, and its Fisher information matrix  $\Gamma^*$ at $\nu^*$ is positive definite.
\end{theorem}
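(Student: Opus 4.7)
The plan is to handle the two claims separately: identifiability of (\ref{equ0}) up to label permutation, and positive definiteness of $\Gamma^*$.

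For identifiability I would pass through factorial moments. Since $Y_j\mid X$ is $\mathrm{Poisson}(l_i e^{X_j})$, the mixed factorial moments simplify:
\begin{equation*}
\mathrm{E}\!\left[\prod_j Y_j(Y_j-1)\cdots(Y_j-a_j+1)\,\Big|\, l_i\right] \;=\; l_i^{|a|}\sum_{g=1}^G \pi_g\,\exp\!\bigl(a^T\mu_g + \tfrac{1}{2}a^T\Sigma_g a\bigr),\quad a\in\mathbb{N}^p,
\end{equation*}
with $\Sigma_g = \Theta_g^{-1}$. The right-hand side is the moment generating function of the latent Gaussian mixture $\sum_g\pi_g\mathrm{N}(\mu_g,\Sigma_g)$ evaluated at $a$. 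Because this function is entire in $a$, its values on $\mathbb{N}^p$ determine it everywhere on $\mathbb{R}^p$, so identifiability of (\ref{equ0}) reduces to identifiability of finite mixtures of $p$-variate Gaussians. Under Condition C3 the means $\mu_g^*$ are pairwise distinct, and a classical Teicher-type argument --- sorting components by the dominant exponential rate $a^T\mu_g$ along generic half-lines in $a$ --- recovers $(\pi_g,\mu_g,\Sigma_g)$ up to permutation.

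For positive definiteness I would exploit the score identity $\Gamma^* = \mathrm{E}[s(Y)s(Y)^T]$: it is always positive semi-definite, so the task is to show that $v^T s(Y;\nu^*) = 0$ a.s.\ forces $v = 0$, where $v = (v_1^T,\ldots,v_G^T)^T$ with $v_g = {\rm vech}(V_g)$ for a symmetric matrix $V_g$. Since the Poisson log-normal puts positive mass on every $y\in\mathbb{N}^p$, this is equivalent to $v^T\nabla_\nu p(y;\nu^*) = 0$ for all $y$. Summing against $\prod_j t_j^{y_j}$ transfers the identity to the probability generating function, $v^T\nabla_\nu G(t;\nu^*) = 0$ for all admissible $t$. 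Using Fisher's identity $\partial_{\Theta_g}\log p_g = \tfrac{1}{2}\{\Sigma_g - \mathrm{E}_g[(X-\mu_g)(X-\mu_g)^T\mid Y]\}$ together with the Gaussian shift $\mathrm{E}_g[e^{\alpha^T X}f(X)] = e^{\alpha^T\mu_g+\tfrac{1}{2}\alpha^T\Sigma_g\alpha}\widetilde{\mathrm{E}}_g[f(X)]$ (where $\widetilde{\mathrm{E}}_g$ has mean $\mu_g+\Sigma_g\alpha$), and differentiating $v^T\nabla_\nu G$ in the variable $t-\mathbf{1}$ to order $\alpha\in\mathbb{N}^p$ at $t=\mathbf{1}$, all integrals collapse and yield
\begin{equation*}
\sum_{g=1}^G \pi_g^*\, e^{\alpha^T\mu_g^* + \tfrac{1}{2}\alpha^T\Sigma_g^*\alpha}\,\bigl(\alpha^T\Sigma_g^* V_g \Sigma_g^*\alpha\bigr) \;=\; 0,\qquad \alpha\in\mathbb{N}^p,
\end{equation*}
which extends to $\alpha\in\mathbb{R}^p$ by analyticity. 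Linear independence of polynomial-weighted Gaussian MGFs with distinct means (C3) --- proved by peeling off the dominant exponential along $\alpha=td$, $t\to\infty$, with ties in $d^T\Sigma_g^* d$ broken by distinctness of $d^T\mu_g^*$ --- then forces $\alpha^T\Sigma_g^* V_g\Sigma_g^*\alpha\equiv 0$, hence $V_g = 0$ and $v=0$.

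The main obstacle is in the positive-definiteness step. Two points need care: (i) justifying the interchange of differentiation and integration so that the abstract identity $v^T\nabla_\nu p(y;\nu^*) = 0$ can be lifted to the closed-form functional equation above, for which Conditions C1 (spectral bounds on $\Theta_g$) and C2 (bounded support of $l_i$) supply the necessary tail control on the Poisson log-normal marginal; and (ii) the peeling argument for linear independence, which must handle components whose quadratic exponents $d^T\Sigma_g^* d$ coincide, using C3 to split them by the linear term $d^T\mu_g^*$. The identifiability step is essentially algebraic once the factorial moments are in hand.
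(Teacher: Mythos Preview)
Your strategy is essentially the paper's: both parts pass through factorial moments (your PGF-derivative formulation is equivalent, since $\partial_t^{\alpha}G(t)|_{t=\mathbf 1}$ returns exactly $\mathrm E[\prod_j Y_j^{(\alpha_j)}]$), arrive at the identity
\[
\sum_{g=1}^G \pi_g\, e^{\alpha^T\mu_g+\tfrac12\alpha^T\Sigma_g\alpha}\,P_g(\alpha)=0,\qquad \alpha\in\mathbb N^p,
\]
with $P_g\equiv 1$ for identifiability and $P_g(\alpha)=\alpha^T\Sigma_g V_g\Sigma_g\alpha$ for the Fisher-information part, and then separate the summands by dominant exponential growth.

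The one genuine gap is the passage ``extends to $\alpha\in\mathbb R^p$ by analyticity.'' An entire function is \emph{not} determined by its values on $\mathbb N^p$: $\sin(\pi z)$ already fails in one variable, and here the exponents are quadratic so the growth is of order two and Carlson-type uniqueness theorems do not apply. If the extension were free, you could peel along a generic real direction $d$ and be done; but it is not, and this is precisely the point where the paper works harder. The paper never leaves $\mathbb N^p$. It restricts to rays $\alpha=zN$, $z\in\mathbb N$, and runs the peeling argument (Lemma~\ref{goodvec}) over integer $z$; the price is that one must find an \emph{integer} direction $N$ for which at least one $N^T\mu_g$ is isolated from the others. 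This is furnished by a pigeonhole-type argument (Lemmas~\ref{lem:subspace} and \ref{lem:mixsubspace}): finitely many proper hyperplanes $\{x:x^T(\mu_1-\mu_g)=0\}$, together with the single quadric $\{N:N^T\Sigma_g V_g\Sigma_g N=0\}$ in Part~II, cannot exhaust $\mathbb N^p$. Once such an $N$ is in hand, the peeling over $z\in\mathbb N$ isolates the corresponding coefficient and an induction on $G$ finishes.

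So your peeling is the right mechanism; you just need to run it on integer rays and supply the subspace-avoidance lemma rather than appeal to analytic continuation. With that substitution your proof coincides with the paper's.
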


Theorem \ref{thm:basic} establishes basic properties of the mixture model (\ref{equ0}) and ensures that it is well-behaved under rather mild conditions. The proof of this theorem is nontrivial because the Poisson log-normal distribution has no finite moment-generating function and its density function is rather complex. However, its moments are finite and have closed forms. We use its moments to prove Theorem \ref{thm:basic}. For the identifiability, the basic idea of the proof is that  identifiability of a mixture model is equivalent to linear independence of its components. Using moments of the Poisson log-normal distribution, we can show that only the zero vector can make the linear combination of the components of the mixture model (\ref{equ0}) as zero. To prove the positive definiteness of the Fisher information matrix, we also use the moments of the Poisson log-normal distribution and convert the problem to showing that a set of equations only have zero solutions. Based on this result, we can further prove the consistency and the sign consistency of the estimator $\hat{\nu}_n$.
\begin{theorem} \label{thmConvergenceRate}
	Under Condition C1-C3, we have
	$$pr\left[||\hat{\nu}_n-\nu^*||_2 \leq  (3/\kappa) \left\{G{p(p+1)}/2\right\}^{1/2}\big(n^{-1} ||\nabla{\ell}_n(\nu^*)||_{\infty} + 2{\lambda_n}\big)\right] \rightarrow 1, \mbox{ as } n \rightarrow \infty,$$
	where $\nabla{\ell}_n(\nu^*)$ is the gradient of $\ell_n(\nu)$ at $\nu^*$ and $\lambda_n > 0$ is the regularization parameter.
\end{theorem}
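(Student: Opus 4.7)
My plan is to use a standard local analysis around the true parameter $\nu^{*}$, adapted to the non-convex mixture likelihood. First I would set $\Delta = \nu-\nu^{*}$ and introduce the centered criterion
\[
G_n(\Delta) = -n^{-1}\{\ell_n(\nu^{*}+\Delta)-\ell_n(\nu^{*})\} + \lambda_n\sum_{g=1}^{G}\{\|\Theta_g(\nu^{*}+\Delta)\|_{1,\mathrm{off}} - \|\Theta_g(\nu^{*})\|_{1,\mathrm{off}}\},
\]
so that $G_n(0)=0$ and $\hat\Delta = \hat\nu_n-\nu^{*}$ satisfies $G_n(\hat\Delta)\le 0$ by definition of $\hat\nu_n$. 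The aim is to show that, with probability tending to one, $G_n(\Delta)>0$ uniformly on the sphere $\|\Delta\|_2=r_n$ with $r_n = (3/\kappa)\sqrt{Gp(p+1)/2}\,(n^{-1}\|\nabla\ell_n(\nu^{*})\|_\infty + 2\lambda_n)$. Continuity of $G_n$ together with $G_n(0)=0$ then forces a local minimizer into $\{\|\Delta\|_2<r_n\}$, and the positive-definite constraint $\Theta_g\succ 0$ is automatically preserved there because Condition C1 and the smallness of $r_n$ keep $\nu^{*}+\Delta$ inside $\mathcal D$.

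Next I would Taylor-expand to second order to get
\[
-n^{-1}\{\ell_n(\nu^{*}+\Delta)-\ell_n(\nu^{*})\} = -n^{-1}\nabla\ell_n(\nu^{*})^{\mathrm T}\Delta + \tfrac12\,\Delta^{\mathrm T} Q_n(\tilde\nu)\,\Delta,
\]
with $Q_n(\tilde\nu) = -n^{-1}\nabla^{2}\ell_n(\tilde\nu)$ at some $\tilde\nu$ between $\nu^{*}$ and $\nu^{*}+\Delta$. The key technical ingredient is a uniform law of large numbers giving $\sup_{\|\tilde\nu - \nu^{*}\|_2 \le \delta}\|Q_n(\tilde\nu)-\Gamma(\tilde\nu)\| = o_p(1)$; the moment/envelope conditions are supplied by Conditions C1--C3 together with the finiteness of every moment of the Poisson log-normal. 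Since $\Gamma$ is continuous in $\nu$ and $\lambda_{\min}(\Gamma^{*})=\kappa>0$ by Theorem~\ref{thm:basic}, I can shrink $\delta$ so that $\lambda_{\min}(Q_n(\tilde\nu))\ge 2\kappa/3$ uniformly on the ball with probability tending to one, yielding $\tfrac12\Delta^{\mathrm T}Q_n(\tilde\nu)\Delta \ge (\kappa/3)\|\Delta\|_2^{2}$.

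I would then control the remaining pieces via $|\nabla\ell_n(\nu^{*})^{\mathrm T}\Delta| \le \|\nabla\ell_n(\nu^{*})\|_\infty\|\Delta\|_1$ and, by reverse triangle inequality, $\sum_g\|\Theta_g(\nu^{*}+\Delta)\|_{1,\mathrm{off}} - \sum_g\|\Theta_g(\nu^{*})\|_{1,\mathrm{off}} \ge -\|\Delta\|_1$, and invoke $\|\Delta\|_1\le\sqrt{Gp(p+1)/2}\,\|\Delta\|_2$ from Cauchy--Schwarz. Combining these bounds and leaving a bit of slack (which is what the extra factor of $\lambda_n$ inside $2\lambda_n$ absorbs, so that strict positivity holds on the sphere rather than mere nonnegativity) gives
\[
G_n(\Delta) \ge \frac{\kappa}{3}\|\Delta\|_2^{2} - \sqrt{\frac{Gp(p+1)}{2}}\Bigl(n^{-1}\|\nabla\ell_n(\nu^{*})\|_\infty + 2\lambda_n\Bigr)\|\Delta\|_2,
\]
which is strictly positive for $\|\Delta\|_2=r_n$; the usual inside-outside continuity argument then completes the proof.

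The main obstacle I anticipate is the non-convexity of the mixture log-likelihood: no global argument is available, so I must work locally and, strictly speaking, identify $\hat\nu_n$ with the local minimizer produced inside $\{\|\Delta\|_2<r_n\}$. The uniform convergence of $Q_n$ to $\Gamma$ on a neighborhood of $\nu^{*}$ is also nontrivial because the second derivatives of $\log p(Y;\nu)$ entangle Poisson counts with log-normal latent variables and require careful envelope control via the moments of the Poisson log-normal; fortunately, Theorem~\ref{thm:basic} supplies $\kappa>0$, reducing this step to a quantitative continuity check rather than a fresh structural argument.
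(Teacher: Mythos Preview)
Your decomposition, Taylor expansion, uniform law of large numbers for the empirical Hessian, and the final quadratic-versus-linear inequality are exactly the ingredients the paper uses; the displayed lower bound
\[
G_n(\Delta)\ \ge\ \frac{\kappa}{3}\|\Delta\|_2^{2}-\sqrt{\frac{Gp(p+1)}{2}}\bigl(n^{-1}\|\nabla\ell_n(\nu^{*})\|_\infty+2\lambda_n\bigr)\|\Delta\|_2
\]
is precisely the paper's key inequality, and its proof of the curvature term $(\kappa/3)\|\Delta\|_2^{2}$ proceeds through the same continuity-of-$\Gamma$ and ULLN argument you sketch.

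There is, however, a genuine gap in how you pass from this inequality to the conclusion. Your sphere argument only shows that some \emph{local} minimizer of the penalized criterion lies inside $\{\|\Delta\|_2<r_n\}$; you explicitly note this and propose to ``identify $\hat\nu_n$ with the local minimizer''. But the theorem is about $\hat\nu_n=\arg\min_{\nu\in\mathcal D}\{\,-n^{-1}\ell_n(\nu)+\lambda_n\mathcal R(\nu)\,\}$, the global minimizer over the compact set $\mathcal D$, and nothing in your argument rules out that this global minimizer sits far from $\nu^{*}$, outside the small ball where your quadratic lower bound is valid. The non-convexity you flag is not a cosmetic issue here: it is the reason the sphere argument by itself is insufficient.

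The paper closes this gap by first proving a preliminary consistency result for $\hat\nu_n$ via Wald's M-estimator theorem: using the identifiability of Theorem~\ref{thm:basic} (so that $\nu^{*}$ is the unique maximizer of the population criterion), an integrable envelope for $\sup_{\nu\in U}\ell(\nu,Y)$, and the compactness of $\mathcal D$, one obtains $\|\hat\nu_n-\nu^{*}\|_2\to 0$ in probability. With this in hand, $\Delta_n=\hat\nu_n-\nu^{*}$ is known to lie, with high probability, inside the fixed neighborhood where the curvature bound holds, and then the inequality $G_n(\Delta_n)\le 0$ applied directly to $\Delta_n$ yields the rate without any sphere or inside--outside device. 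In short, you should replace the local-minimizer argument by a Wald-type consistency step for the actual estimator $\hat\nu_n$; everything else in your plan matches the paper.

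A minor bookkeeping point: the factor $2$ in $2\lambda_n$ does not come from ``slack'' but from the definition $\|\Theta_g\|_{1,\mathrm{off}}=\sum_{l\neq m}|\Theta_{g,lm}|$, which counts each off-diagonal entry twice relative to the $\mathrm{vech}$ parametrization of $\nu$; the reverse-triangle bound on the penalty therefore gives $-2\lambda_n\|\Delta\|_1$, not $-\lambda_n\|\Delta\|_1$.
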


\begin{theorem} \label{signconsistency}
	Under Condition C1-C4, choosing $\lambda_n > 0$ such that $\lambda_n \rightarrow 0$ and ${n^{1/2}\lambda_n} \rightarrow \infty$, we have $pr\left\{S(\hat{\nu}_n) = S^{*}\right\} \rightarrow 1 \mbox{ as } n \rightarrow \infty.$
\end{theorem}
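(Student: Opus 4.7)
\textbf{Proof proposal for Theorem~\ref{signconsistency}.} The plan is to use the primal-dual witness (PDW) construction, the standard route to support recovery for $\ell_1$-penalized M-estimators under an irrepresentability condition \citep{ravikumar2011high}. The argument naturally splits into two halves: showing no false positives, which uses C4 together with concentration of the score $\nabla\ell_n(\nu^*)$ and of the empirical Hessian, and showing no false negatives, which follows from the $\ell_2$-rate in Theorem~\ref{thmConvergenceRate} combined with a minimum-signal argument based on the fact that $S^*$ is a fixed finite set with $\min_{j\in S^*}|\nu^*_j|>0$.

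First I would introduce the oracle estimator
\[
\tilde{\nu} \in \argmin_{\nu \in \mathcal{D},\ \nu_{{S^*}^c} = 0}\ \left\{-n^{-1}\ell_n(\nu) + \lambda_n \sum_{g=1}^{G}\|\Theta_g\|_{1,\text{off}}\right\},
\]
i.e., the lasso restricted to the true support. By Theorem~\ref{thm:basic}, $\Gamma^*_{S^* S^*}$ is positive definite, so the restricted negative log-likelihood is strictly convex on a neighborhood of $\nu^*_{S^*}$ and $\tilde{\nu}$ exists and is unique with probability tending to one. Specializing the argument of Theorem~\ref{thmConvergenceRate} to this reduced coordinate system gives $\|\tilde{\nu}-\nu^*\|_2 = O_p(n^{-1/2}+\lambda_n) = o_p(1)$ under the stated scaling $\lambda_n \to 0$. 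Since $\min_{j\in S^*}|\nu^*_j|$ is a positive constant, this immediately yields $\sgn(\tilde{\nu}_j) = \sgn(\nu^*_j)\neq 0$ for every $j \in S^*$ with probability tending to one, delivering the inclusion $S^* \subseteq S(\tilde{\nu})$.

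Next I would construct a dual certificate: take $\tilde{z}_{S^*} = \sgn(\tilde{\nu}_{S^*})$ so that KKT holds on the support by construction, and solve the stationarity equation $-n^{-1}\nabla_{{S^*}^c}\ell_n(\tilde{\nu}) + \lambda_n \tilde{z}_{{S^*}^c} = 0$ for $\tilde{z}_{{S^*}^c}$. A Taylor expansion of the score around $\nu^*$ followed by elimination of the on-support first-order condition gives
\[
\tilde{z}_{{S^*}^c} = \Gamma^*_{{S^*}^c S^*}\,(\Gamma^*_{S^* S^*})^{-1}\,\sgn(\nu^*_{S^*}) + R_n,
\]
whose leading term has $\ell_\infty$-norm strictly below one by C4. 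The hard part will be showing that the stochastic remainder satisfies $\|R_n\|_\infty < 1 - \|\Gamma^*_{{S^*}^c S^*}(\Gamma^*_{S^* S^*})^{-1}\|_\infty$ with probability tending to one; this reduces to bounding $n^{-1}\|\nabla\ell_n(\nu^*)\|_\infty$ at rate $o_p(\lambda_n)$ (achievable because $\sqrt{n}\lambda_n \to \infty$ while the score has $\sqrt{n}$-scale fluctuations) and controlling the deviation of the empirical Hessian from $\Gamma^*$ uniformly over a neighborhood of $\nu^*$. Both tasks are delicate because the mixture Poisson log-normal log-density is a double integral against discrete labels and continuous latent Gaussians and has no moment-generating function; however, as exploited in Theorem~\ref{thm:basic}, the closed-form finite moments of the Poisson log-normal together with the bounded-scaling-factor assumption C2 and the bounded-means assumption C3 yield sub-exponential tails sufficient for a Bernstein-type concentration argument.

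Once strict dual feasibility $\|\tilde{z}_{{S^*}^c}\|_\infty < 1$ is in hand, the pair $(\tilde{\nu},\tilde{z})$ certifies the KKT conditions of the unrestricted lasso problem, so $\tilde{\nu}$ is a global minimizer. Strict convexity of the restricted negative log-likelihood on $\mathcal{D}$ (from positive definiteness of $\Gamma^*_{S^* S^*}$ plus C1) combined with the strict inequality on the off-support subgradient forces every minimizer of the full problem to be supported within $S^*$, hence $S(\hat{\nu}_n) \subseteq S^*$. Combined with the inclusion $S^* \subseteq S(\hat{\nu}_n)$ from the minimum-signal step, this gives $pr\{S(\hat{\nu}_n)=S^*\}\to 1$, as claimed.
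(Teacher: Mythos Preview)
Your PDW outline matches the paper's proof almost step for step: define the support-restricted oracle $\tilde\nu$, Taylor-expand the score, use the empirical Hessian to eliminate the on-support block, and verify strict dual feasibility off-support via C4 plus a remainder bound; no false negatives then follows from the $\ell_2$-rate. Two points deserve correction. First, your claim that C2--C3 ``yield sub-exponential tails sufficient for a Bernstein-type concentration argument'' is not right: the Poisson log-normal has no finite moment-generating function (this is noted explicitly in the paper), so the score and Hessian entries are \emph{not} sub-exponential. The paper instead builds polynomial dominating functions for the score and Hessian (Lemmas~\ref{lem:poly}--\ref{pdim}, \ref{lem:PLNFisher}, \ref{lem:MPLNFisher}), which give finite second moments; Chebyshev's inequality then controls $n^{-1}\|\nabla\ell_n(\nu^*)\|_\infty$ (Lemma~\ref{noiseterm}), and a classical uniform law of large numbers with an integrable envelope handles the Hessian (Lemma~\ref{lem:ULLN}). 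Since $p$ and $G$ are fixed, these second-moment tools are enough and Bernstein is neither available nor needed. Second, because $\|\Theta_g\|_{1,\mathrm{off}}=\sum_{l\neq m}|\Theta_{g,lm}|$ counts each off-diagonal entry twice, the subgradient of $\mathcal{R}$ has components in $[-2,2]$, so the strict dual feasibility target is $\|\tilde z_{{S^*}^c}\|_\infty\leq 2$, not $<1$ (cf.\ Lemma~\ref{dualf}).
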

Theorem \ref{signconsistency} says that if we choose $\lambda_n > 0$ such that it does not converge to 0 too fast, e.g. $\lambda_n = \left(\log n / n\right)^{1/2}$, $\hat{\nu}_n$ can consistently recover the nonzero elements of $\nu^*$. Theorem \ref{thmConvergenceRate} and \ref{signconsistency} imply that, for the Poisson log-normal model, the network estimated by minimizing its lasso-penalized negative log-likelihood is consistent. All proofs are given in appendix Section \ref{Appendix:Technicalproofs}.

\section{Variational Inference for the Mixture Poisson Log-normal Model}
\label{sec:Algorithm}
The log-likelihood (\ref{equ_likelihood}) of the mixture Poisson log-normal model involves an intractable integration and thus directly minimizing (\ref{penalizedloglike}) is computationally very difficult. We therefore adopt the variational inference approach to estimate the networks \citep{jordan1999introduction, wainwright2008graphical}. We approximate the log-likelihood by the evidence low bound $\ell_{\text{E}}\left(\eta,\theta \right)$ and estimate $\theta$ by minimizing $-\ell_{\text{E}}\left(\theta,\eta \right)+\lambda_{n} \sum_{g=1}^{G}\left\| {\Theta_{g}}\right\|_{{1, \text { off }}}$, where $\eta\in \mathscr{H}$ is the parameter of the variational distribution family $\mathscr{L}=\{q(X,Z;\eta): \eta \in \mathscr{H}\}$. For $\eta\in \mathscr{H}$,  the evidence low bound $\ell_{\text{E}}\left(\eta,\theta \right)$ is defined as $\ell_{\text{E}}\left(\eta,\theta \right) = {\mbox{E}}_{q(X,Z;\eta)}\left\{\log p(X,Z, Y;\theta) - \log q(X,Z;\eta)\right\}$. 

For computational considerations, we consider the following variational distribution family. Conditional on $Z_i=g$, this distribution family assumes that $X_{ij}$ ($j=1,\dots,p$) are independent normal variables with a mean $M_{g,ij}$ and a variance $S_{g,ij}$. The distribution of $Z_i$ is a multinomial distribution with proportion parameters $P_i = (P_{i1},\dots,P_{iG})^T$. Denote $M_{g}=(M_{g,ij})_{n\times p}$, $S_{g}=(S_{g,ij})_{n\times p}$ and $P = (P_1,\dots,P_n)^T$. The variational parameters are $\eta = \left\{\eta_g\right\}_{g=1}^{G} = \left\{M_{g}, S_{g}, \{P_{ig}\}_{i = 1}^{n}\right\}_{g = 1}^{G}$ with $\eta\in\mathscr{H} = \left\{\eta \big| \ S_{g,ij}>0,~P_{ig}\geq0,~\sum_{g=1}^G P_{ig} =1\right\}$. Thus, the variational distribution family is
$$
\mathscr{Q} = \left\{q(X,Z;\eta) = \prod_{i=1}^{n} \left\{q(Z_i;P_i) \prod_{j=1}^{p} \prod_{g=1}^{G} q\big(X_{i j} \mid Z_{i} = g; M_{g,i j},S_{g,i j}\big)\right\},\ \eta \in \mathscr{H}\right\},
$$
where $q\big(X_{i j} \mid Z_{i} = g; M_{g,i j},S_{g,i j}\big)$ is the density of $N(M_{g,i j},S_{g,i j})$ and $q(Z_i;P_i)$ is the density of $\mbox{Multinomial} (1,P_i)$. 

Given two matrices $A$ and $B$, let  $A \odot B$ be their Hadamard product. Denote $A_{i \cdot}$ and $A_{\cdot j}$ as the $i$th row and the $j$th column vectors of $A$, respectively. Given a vector $c$, define $D(c)$ as the diagonal matrix whose diagonal elements are $c$. Denote $l = \left(l_1,\dots,l_n\right)^T$, $\Sigma_{g i} = (M_{g,i\cdot} - \mu_g){(M_{g, i\cdot} - \mu_g)}^T + D \left(S_{g ,i\cdot}\right)$, $F_1\left(l_i,M_{g,i j},S_{g,i j}\right) = \exp\left(M_{g,i j} + 2^{-1}S_{g,i j} + \log l_i\right)$, $F_2\left(\Theta_{g},M_{g,i \cdot},S_{g,i \cdot},\mu_{g}\right) = $\\
$2^{-1} \left\{\log \det \Theta_g - \tr (\Theta_g \Sigma_{g i})\right\}$ and $\theta_g = \left(\pi_{g}, \mu_g, \Theta_g\right)$ as the unknown model parameters of the $g$th population. With the variational distribution family $\mathscr{L}$, the evidence low bound can be written as $\ell_{\text{E}}\left(\eta,\theta \right) = \sum_{g = 1}^{G} \ell_{\text{E}}^{(g)} \left( \eta_g,\theta_g \right)$ with 
$$
\ell_{\text{E}}^{(g)} \left(\eta_g,\theta_g \right) = P_{\cdot g}^{T} \left(\Lambda_{g}^{(1)} - \Lambda_{g}^{(2)} + \Lambda_{g}^{(3)}\right) 1_{p} + P_{\cdot g}^{T} \left(\Lambda_{g}^{(4)} + \Lambda_{g}^{(5)}\right) + K_g\left(Y\right),
$$
where $\Lambda_{g}^{(1)} = Y \odot M_{g}$, $\Lambda_{g}^{(2)} = \left(\Lambda_{g,i j}^{(2)}\right)_{n \times p} = \big(F_1\left(l_i,M_{g,i j},S_{g,i j}\right)\big)_{n \times p}$, $\Lambda_{g}^{(3)} = 2^{-1} \log S_{g}$, $\Lambda_{g}^{(4)} = \log (\pi_{g}) 1_{n} - \log \left(P_{\cdot g}\right)$, $\Lambda_{g}^{(5)} = \left(\Lambda_{g,i}^{(5)}\right)_{n} = \big(F_2\left(\Theta_{g},M_{g,i \cdot},S_{g,i \cdot},\mu_{g}\right)\big)_{n}$ and $K_g(Y) = \sum_{i, j} P_{i g} \left\{- \log \left(Y_{i j} !\right) + Y_{i j}\log l_i\right\}$.

In real applications, we may have prior knowledge that some node pairs cannot have direct interactions. In this case, we can directly set the corresponding edges as zero. Denote $E_p$ as the set of the edges that are priorly known to be zero. Generally, we consider the following optimization problem
\begin{equation}\label{con:VMPLNloss}
	\begin{aligned}
		\underset{\eta,\theta}{\min}\left\{-\ell_{\text{E}}\left(\theta,\eta \right)+\lambda_{n} \sum_{g=1}^{G}\left\| {\Theta_{g}}\right\|_{{1, \text { off }}}: ~ {\Theta}_{g} \succ 0,\ {\Theta_{g,l m}} = 0 ~ \mbox{for}~(l,m) \in E_{p},\ \eta \in \mathscr{H}\right\}.
	\end{aligned}
\end{equation}

\begin{algorithm}
	\footnotesize
	\caption{Framework of VMPLN.}
	\label{alg:algorithm1}
	\KwIn{Count data $Y$, the pre-estimated scaling factor $l$, the number of populations $G$, the tuning parameter $\lambda_{n}$, the optional prior set of zero edges $E_{p}$, the maximum iteration number $K>0$ and the convergence thresholds $\epsilon_{L},\epsilon_{s}$.}
	\KwOut{$\hat{\pi}, \hat{\mu}, \{\hat{\Theta}_g \}_{g = 1}^G, \{\hat{M}_{g}\}_{g = 1}^{G}, \{\hat{S}_{g}\}_{g = 1}^{G}, \hat{P}$.}
	\BlankLine
	\underline{Initialization step}. Let $\delta_{L} = 10^{6}, \delta_{s} = p(p+1)/2, k = 0$ and initialize $\theta, \eta$ as $\theta^{(0)}, \eta^{(0)}$.
	
	\While{\textnormal{ $\left(\delta_{L} > \epsilon_{L} \ \text{ or } \ \delta_{s} > \epsilon_{s}\right) \ \text{ and } \ k \leq K $ }}{
		\underline{$P$-step}. For each $\left(i,g\right) \in \mathscr{N} \times \mathscr{G}$, compute $U_{i g}^{(k)} = F_2\left(\hat{\Theta}_{g}^{(k)}, \hat{M}_{g,i \cdot}^{(k)}, \hat{S}_{g,i \cdot}^{(k)},\hat{\mu}_{g}^{(k)}\right)$ and update
		$${\hat{P}_{i g}}^{(k+1)}  = \hat{\pi}_{g}^{(k)} \exp \left({U}_{i g}^{(k)}\right) \bigg/ \sum_{l = 1}^{G} \left\{\hat{\pi}_{l}^{(k)} \exp \left({U}_{i l}^{(k)}\right)\right\}.$$
		
		\underline{$\pi$-step}. For each $g \in \mathscr{G}$, update
		$${\hat{\pi}_{g}}^{(k+1)} = n^{-1}\sum_{i = 1}^{n} {\hat{P}_{i g}}^{(k+1)}.$$
		
		\underline{$M$-step}. For each $\left(i,g\right) \in \mathscr{N} \times \mathscr{G}$, update
		\begin{equation}\label{opti_M}
			{\hat{M}_{g,i\cdot}}^{(k+1)} = \underset{M_{g,i\cdot}}{\argmin} \left\{ L_{1} \left(M_{g, i \cdot}, \hat{\Theta}_{g}^{(k)}, \hat{\mu}_g^{(k)}\right) + \sum_{j = 1}^{p} L_{2}\left(M_{g,i j}, l_i, \hat{S}_{g,i j}^{(k)}\right) \right\}.
		\end{equation}
		
		\underline{$S$-step}. For each $\left(i,g,j\right) \in \mathscr{N} \times \mathscr{G} \times \mathscr{P}$, update
		$${\hat{S}_{g,i j}}^{(k+1)} = \underset{S_{g,i j} > 0}{\argmin} \left\{F_1\left(l_i,\hat{M}_{g,i j}^{(k+1)},S_{g,i j}\right) + 2^{-1} \hat{\Theta}_{g,j j}^{(k)} S_{g,i j} - 2^{-1} \log S_{g,i j} \right\}.$$
		
		\underline{$\mu$-step}. For each $\left(g,j\right) \in \mathscr{G} \times \mathscr{P}$, update
		$$\hat{\mu}_{g j}^{(k+1)} = \sum_{i = 1}^{n} \left(\hat{P}_{i g}^{(k+1)} \hat{M}_{g,i j}^{(k+1)}\right)\bigg/ \sum_{i = 1}^{n} \hat{P}_{i g}^{(k+1)}.$$
		
		\underline{$\Theta_{g}$-step}. \ For each $g \in \mathscr{G}$, compute $\hat{\Sigma}_g^{(k+1)} = \Sigma\left(\hat{P}_{.g}^{(k+1)}, \hat{\mu}_{g}^{(k+1)},\hat{M}_{g}^{(k+1)},\hat{S}_{g}^{(k+1)} \right)$ and update
		\begin{equation*}
			\begin{aligned}
				\hat{\Theta}_{g}^{(k+1)} = \underset{\Theta_{g}}{\argmin} \bigg\{&- 2^{-1} \log \det \Theta_{g } + 2^{-1} \tr\big(\Theta_{g} \hat{\Sigma}_{g}^{(k+1)}\big) + \left(\lambda_{n} / 1_{n}^{T} \hat{P}_{\cdot g}^{(k+1)} \right) \left\| {\Theta_{g}}\right\|_{{1, \text { off }}}:\\
				&{\Theta}_{g} \succ 0,\ {\Theta_{g,l m}} = 0 ~ \mbox{for}~(l,m) \in E_{p}\bigg\}.
			\end{aligned}
		\end{equation*}
		
		\underline{Evaluation step}. \ Update \begin{equation*}
			\begin{aligned}
				\delta_{s} &= \underset{g \in \mathscr{G}}{\max} \left\{\sum_{l<m} \left|\text{sign}\left(\hat{\Theta}_{g,lm}^{(k+1)}\right) - \text{sign}\left(\hat{\Theta}_{g,lm}^{(k)}\right) \right|  {\tbinom{p}{2}}^{-1} \right\},\\
				\delta_{L} &= \delta\left(\ell_{\text{E}}\left(\hat{\eta}^{(k+1)},\hat{\theta}^{(k+1)} \right),  \ell_{\text{E}}\left(\hat{\eta}^{(k)},\hat{\theta}^{(k)} \right)\right),\ \ \delta \left(a,b\right) = \left| a - b\right|/b, \ k = k+1.
			\end{aligned}
		\end{equation*}
		\BlankLine
	}
\end{algorithm}

\subsection{The Optimization Process}
\label{subsec:optiprocess}
We develop a block-wise descent algorithm called VMPLN to optimize (\ref{con:VMPLNloss}). Given initial values, we iteratively update $P$, $\pi$, $\left\{M_{g}\right\}_{g = 1}^{G}$, $\left\{S_{g}\right\}_{g = 1}^{G}$, $\mu$ and $\left\{\Theta_{g}\right\}_{g = 1}^{G}$ and terminate the iteration if changes between two successive update steps are small. The VMPLN algorithm is summarized in Algorithm \ref{alg:algorithm1}, where we have used the following notations, $\mathscr{N} = \left\{1,\dots,n\right\}, \mathscr{G} = \left\{1,\dots,G\right\}, \mathscr{P} = \left\{1,\dots,p\right\}$, $L_{1}\left(M_{g, i \cdot}, \Theta_{g}, \mu_g\right) = 2^{-1} {\left(M_{g, i \cdot} - \mu_g\right)}^T \Theta_{g} \left(M_{g, i \cdot} - \mu_g\right)$, $L_{2}\left(M_{g,i j}, l_i, S_{g,i j}\right) = - \Lambda_{g,i j}^{(1)} + F_{1}\left(l_i, M_{g,i j}, S_{g,i j}^{(k+1)}\right)$, and 
$$\Sigma \left(P_{\cdot g}, \mu_{g},M_{g},S_{g}\right) = \left[\sum_{i = 1}^{n} P_{i g} \left\{(M_{g,i\cdot} - \mu_g){(M_{g, i\cdot} - \mu_g)}^T + D \left(S_{g ,i\cdot}\right)\right\}\right] \bigg/ \sum_{i = 1}^{n} P_{i g}. $$
The parameters $P$, $\pi$ and $\mu$ all have explicit updating formulas and can be efficiently calculated. For the parameter $S$, given all other parameters, the loss function (\ref{con:VMPLNloss}) can be decomposed into a sum of $npG$ functions, each of which only involves one $S_{g,ij}$ and can be efficiently solved by the Newton-Raphson algorithm. For the network parameters $\Theta_g$ ($g=1,\dots,G$), given all other parameters, the corresponding sub-optimization problem is equivalent to solving $G$ independent Glasso problems \citep{meinshausen2006high,friedman2008sparse}. The step for updating $M_g$ ($g=1,\dots,G$) is presented in the next Subsection \ref{subsec:Mstep}.

\subsection{The Optimization of the $M$-step in Algorithm \ref{alg:algorithm1}}
\label{subsec:Mstep}
The sub-optimization problem corresponding to $M_g$ ($g=1,\dots,G$) is
$$\underset{\{M_g,\ g\in \mathscr{G}\}}{\argmin} \sum_{g \in \mathscr{G},\ i \in \mathscr{N}} P_{i g}\left\{L_{1} \left(M_{g, i \cdot}, \hat{\Theta}_{g}^{(k)}, \hat{\mu}_g^{(k)}\right) + \sum_{j = 1}^{p} L_{2}\left(M_{g,i j}, l_i, \hat{S}_{g,i j}^{(k)}\right)\right\},$$ 
which is equivalent to $nG$ independent optimization problems (\ref{opti_M}) (See Algorithm \ref{alg:algorithm1}). We develop an efficient algorithm based on the alternating direction method of multipliers algorithm \citep{boyd2011distributed} to optimize (\ref{opti_M}). Specifically, we introduce an auxiliary matrix $N_{g}$ for $M_{g}$, and denote $L_M\left(M_{g, i \cdot},N_{g, i \cdot}\right) = L_{1} \left(N_{g, i \cdot}, \hat{\Theta}_{g}^{(k)}, \hat{\mu}_g^{(k)}\right)$ $ + \sum_{j = 1}^{p} L_{2}\left(M_{g,i j}, l_i, \hat{S}_{g,i j}^{(k)}\right)$. Solving (\ref{opti_M}) is equivalent to solving the following problem

\begin{equation}\label{opti_MN}
	\underset{M_{g, i \cdot}=N_{g,i\cdot}}{\argmin} L_M \left(M_{g, i \cdot},N_{g, i \cdot}\right).
\end{equation}
The augmented Lagrangian of optimization problem (\ref{opti_MN}) is
$$
L_M \left(M_{g, i \cdot},N_{g, i \cdot}\right) +\sum_{j = 1}^{p} {\alpha}_j \left(M_{g,i j} - N_{g,i j}\right) + \rho/2 \sum_{j = 1}^{p} {\left(M_{g,i j} - N_{g,i j}\right)}^2,
$$
where $\alpha = \left(\alpha_{1},\dots,\alpha_{p}\right)$ is the Lagrangian multiplier, and $\rho$ is the step size. Given initial values, we iteratively update $M_{g, i \cdot}$, $N_{g, i \cdot}$ and $\alpha$ (Algorithm \ref{alg:algorithm2}). The optimization problem for $M_{g,i\cdot}$ can be decomposed into $p$ independent one-dimensional optimization problems, which can be easily optimized by the one-dimensional Newton-Raphson algorithm (Step 1 in Algorithm \ref{alg:algorithm2}). The optimization problem for $N_{g,i\cdot}$  has an explicit solution and involves inverting the matrix $\rho I + \hat{\Theta}_{g}^{(k)}$. We only need to invert the matrix $\rho I + \hat{\Theta}_{g}^{(k)}$ once. In total, the $M$-step of Algorithm \ref{alg:algorithm1} involves $nG$ optimization problems (\ref{opti_M}), but we only need to calculate $G$ $p\times p$ matrix inversions in the $nG$ runs of Algorithm \ref{alg:algorithm2}.

\begin{algorithm}
	\footnotesize
	\caption{ADMM algorithm for updating $M_{g,i \cdot}$.}
	\label{alg:algorithm2}
	\KwIn{Count data $Y$, the pre-estimated scaling factor $l$, the current estimation of parameters $\hat{S}_{g,i \cdot}, \hat{\Theta}_{g}, \hat{\mu}_g$, the maximum iteration number $T>0$ and the convergence threshold $\epsilon_{M}$.}
	\KwOut{$\hat{M}_{g,i \cdot}$.}
	\underline{Initialization step}. Let $t = 0, \delta_{M} = 10^{6}$, and initialize $\hat{M}_{g,i\cdot}^{\left(0\right)} = \hat{N}_{g,i\cdot}^{\left(0\right)} = \hat{M}_{g,i\cdot}$, $\hat{\alpha}^{(0)} = 0$.
	
	\While{$t \leq T$ \text{ and } $\delta_{M} > \epsilon_{M}$}{
		
		\underline{Step 1}. For each $j \in \mathscr{P}$, update $$\hat{M}_{g,i j}^{(t+1)} = \underset{M_{g,i j}}{\argmin} \bigg\{L_{2}\left(M_{g,i j}, l_i, \hat{S}_{g,i j}\right) + {\hat{\alpha}}_{j}^{(t)} (M_{g,i j} - \hat{N}_{g,i j}^{(t)}) +\frac{\rho}{2} {(M_{g,i j} - \hat{N}_{g,i j}^{(t)})}^2\bigg\}.$$
		
		\underline{Step 2}. Update
		$$\hat{N}_{g,i \cdot}^{(t+1)} = {\left(\rho I + \hat{\Theta}_{g}\right)}^{-1} \left(\rho \hat{M}_{g,i \cdot}^{(t+1)} - \hat{\alpha}^{(t)} + \hat{\Theta}_{g} \hat{\mu}_g  \right).$$
		
		\underline{Step 3}. Update 
		$${\hat{\alpha}}^{(t+1)} = {\hat{\alpha}}^{(t)} + \rho \left(\hat{M}_{g,i \cdot}^{(t+1)} - \hat{N}_{g, i \cdot}^{(t+1)}\right).$$
		
		\underline{Evaluation step}. \ Update \begin{equation*}
			\begin{aligned}
				\delta_{M} &= \delta\left(L_M \left(\hat{M}_{g, i \cdot}^{(t+1)},\hat{N}_{g, i \cdot}^{(t+1)}\right),  L_M \left(\hat{M}_{g, i \cdot}^{(t)},\hat{N}_{g, i \cdot}^{(t)}\right)\right),\ \ \delta \left(a,b\right) = \left| a - b\right|/b, 
				\ t = t + 1.
			\end{aligned}
		\end{equation*} 
		\BlankLine
	}
\end{algorithm}

\subsection{Tuning Parameters Selection} 
We select the tuning parameter $\lambda_n > 0$ for each $\Theta_g$ independently by minimizing the integrated complete likelihood criterion (ICL) \citep{biernacki2000assessing}
\begin{equation}
	-2\ {l^{(g)}_{\text{E}}} \left(\hat{\eta},\hat{\theta} \right) + \log \left(1_{n}^{T} \hat{P}_{\cdot g}\right) s(\hat{\Theta}_{g}),
\end{equation}
where $s(\hat{\Theta}_{g})$ denotes the number of nonzero elements in $\hat{\Theta}_{g}$. We can also select the tuning parameter $\lambda_n$ such that the estimated network has a desired density.

\section{Simulation}
\label{sec:simu}

In this section, we perform simulation to evaluate the performance of VMPLN and compare with state-of-the-art graphical methods and single-cell gene regulatory network estimation methods, including Glasso \citep{friedman2008sparse}, LPGM \citep{allen2013local}, VPLN \citep{chiquet2019variational}, PPCOR \citep{kim2015ppcor}, GENIE3 \citep{huynh2010inferring} and  PIDC \citep{chan2017gene}. For VMPLN, we use clustering results given by the K-means algorithm as the initial value and infer the networks jointly for all populations. For the other algorithms, we cluster the samples using the K-means algorithm and infer network for each cluster separately. 

\subsection{Simulation Setups}
We first generate simulation data based on the mixture Poisson log-normal model. The number of populations is set as $G=3$ and the proportion parameter $\pi$ is set as $(1/3,1/3,1/3)$. The number of observations is $n=3000$. We consider 48 different simulation scenarios, which are 3 population-mixing levels (low, middle and high) $\times$ 2 dropout levels (i.e. the percent of zeros in data, low and high) $\times$  2 dimension setups ($p=100$ and $300$) $\times$ 4 graph structures. In each scenario, we generate 50 datasets, and the details of the data generation process are shown in appendix Section \ref{Simulation:datageneration}. The four graph structures are as follows.

1. Random Graph: Pairs of nodes are connected with probability 0.1. The nonzero edges are randomly set as 0.3 or -0.3.

2. Hub Graph: 20 $\%$ nodes are set as hub nodes. A hub node is connected with another node with probability 0.1. Non-hub nodes are not connected with each other. The nonzero edges are randomly set as 0.3 or -0.3.

3. Blocked random Graph: The nodes are divided into 5 blocks of equal sizes. Pairs of nodes within the same block are connected with probability 0.1.  Nodes in different blocks are not connected. The nonzero edges are randomly set as 0.3 or -0.3.

4. Scale-free Graph: The Barabasi-Albert model \citep{barabasi1999emergence} is used to generate a scale-free graph with power 1. The nonzero edges are randomly set as 0.3 or -0.3.
 
\subsection{Performance Comparison}
\begin{figure}[htbp]
	\centering
	\includegraphics[width = 1.02\textwidth, height=0.8\textwidth]{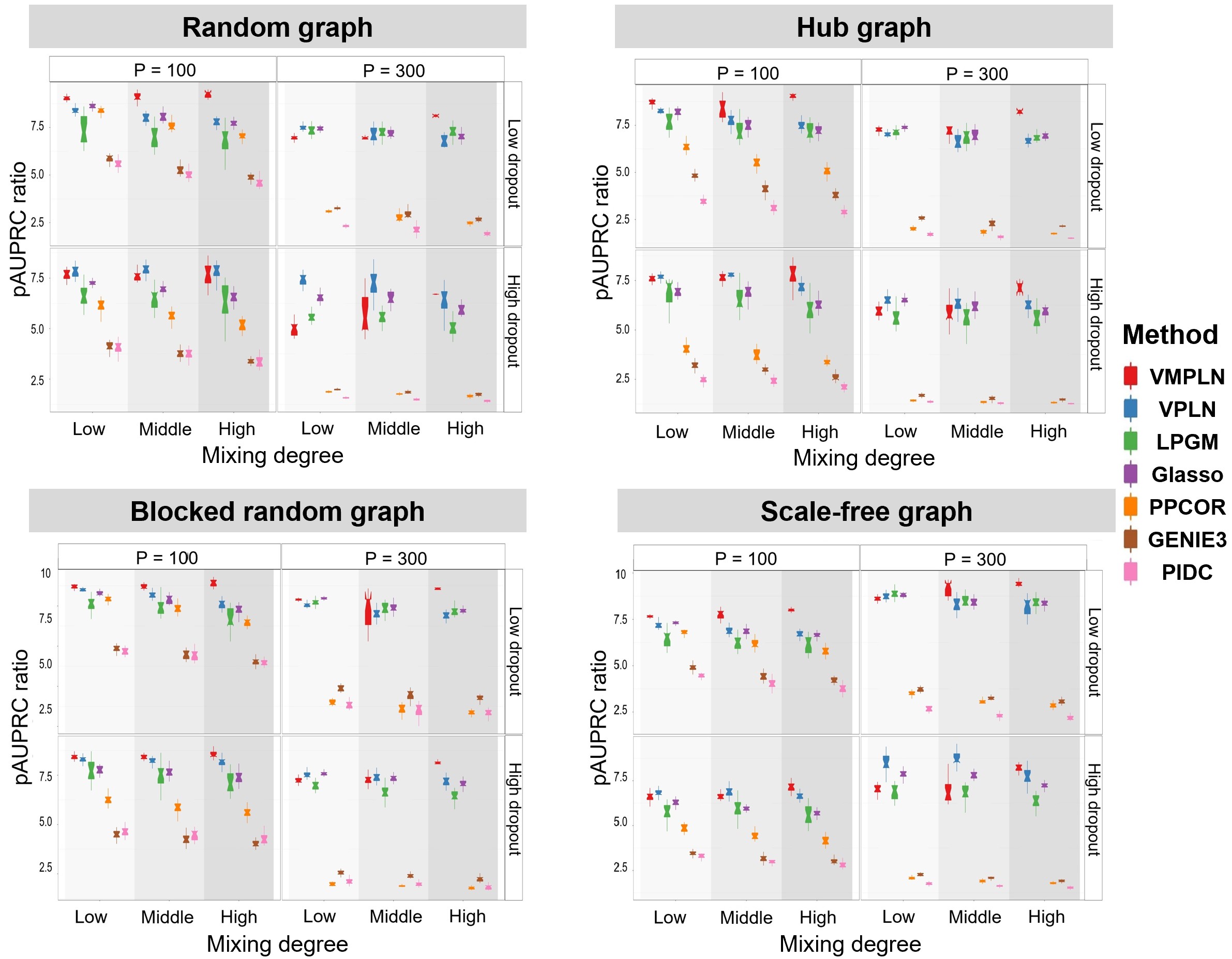}
	\caption{The pAUPRC ratios for four graphs. The parameters are set as their default values or are tuned by their default methods.}
	\label{AUPRC-EPR-1}
\end{figure}
The available methods often report network estimates with very different densities. Dense network predictions usually have a high sensitivity and a low specificity, but sparse network predictions have a low sensitivity and a high specificity. In order to make dense and sparse network predictions comparable, we adopt one criteria used in a previous benchmark work on network estimation methods \citep{pratapa2020benchmarking}, called the pAUPRC ratio, to evaluate the algorithms. Briefly, given a network estimation by an algorithm, we calculate its area under the partial precision-recall curve (called pAUPRC) by applying varying thresholds to the edge scores given by the algorithm (appendix Section \ref{Simulation:furtherint}). The pAUPRC ratio is defined as the ratio between the pAUPRC and the expected pAUPRC of the random network prediction of the same density.

We first compare the algorithms using their default parameters or default ways of selecting tuning parameters (appendix Section \ref{Simulation:furtherint}). Figure \ref{AUPRC-EPR-1} show the boxplots of pAUPRC ratios of different algorithms. Overall, VMPLN is the best-performing algorithm in terms of the pAUPCR ratio. The advantage of VMPLN is more pronounced when the populations have a higher mixing level, suggesting that compared with the two-step procedure, the joint analysis of network inference and clustering can help to improve the network inference. 
\begin{figure}[htbp]
	\centering
	\includegraphics[width = 1.02\textwidth, height=0.80\textwidth]{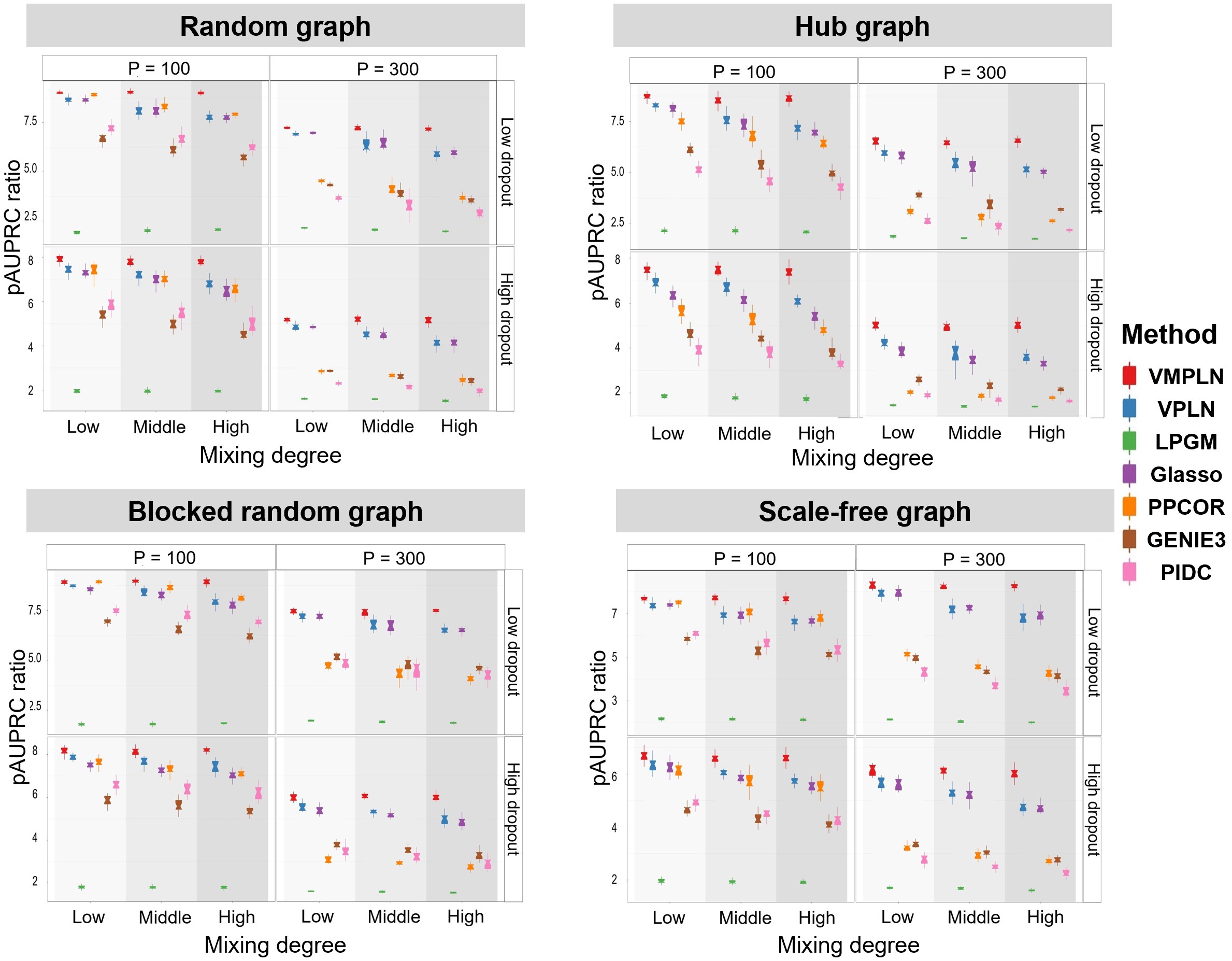}
	\caption{The pAUPRC ratios for four graphs. The edge score cutoffs or the tuning parameters are selected such that the network density is 20\%.}
	\label{AUPRC-EPR-2}
\end{figure}

Tuning parameter selection methods can have a large impact on the network estimates. To eliminate this influence, we tune the parameters such that the estimated network from different algorithms having 20\% nonzero edges (2 $\times$ the density of the true network) (appendix Section \ref{Simulation:furtherint}), and compare their pAUPRC ratios (Figure \ref{AUPRC-EPR-2}). Similarly, VMPLN has the best-performance in most scenarios, especially in the cases with high mixing levels. For example, in the simulation of the hub graph with $p=100$ and low-dropout, VMPLN has mean AUPRC ratios $8.63$ and $8.72$ in the high and low mixing scenarios, respectively, about $21\%$ and $5\%$ larger than the AUPRC ratios ($7.12$ and $8.26$) of VPLN under the same scenarios. All algorithms tend to have decreased performances in higher dimensions or with high dropout rates and VMPLN consistently has better performances in these more difficult settings.

\section{Real Data Analysis} 
\label{sec:appl}
\begin{figure}[htbp]
	\centering
	\includegraphics[width = 1.00\textwidth, height=0.86\textwidth]{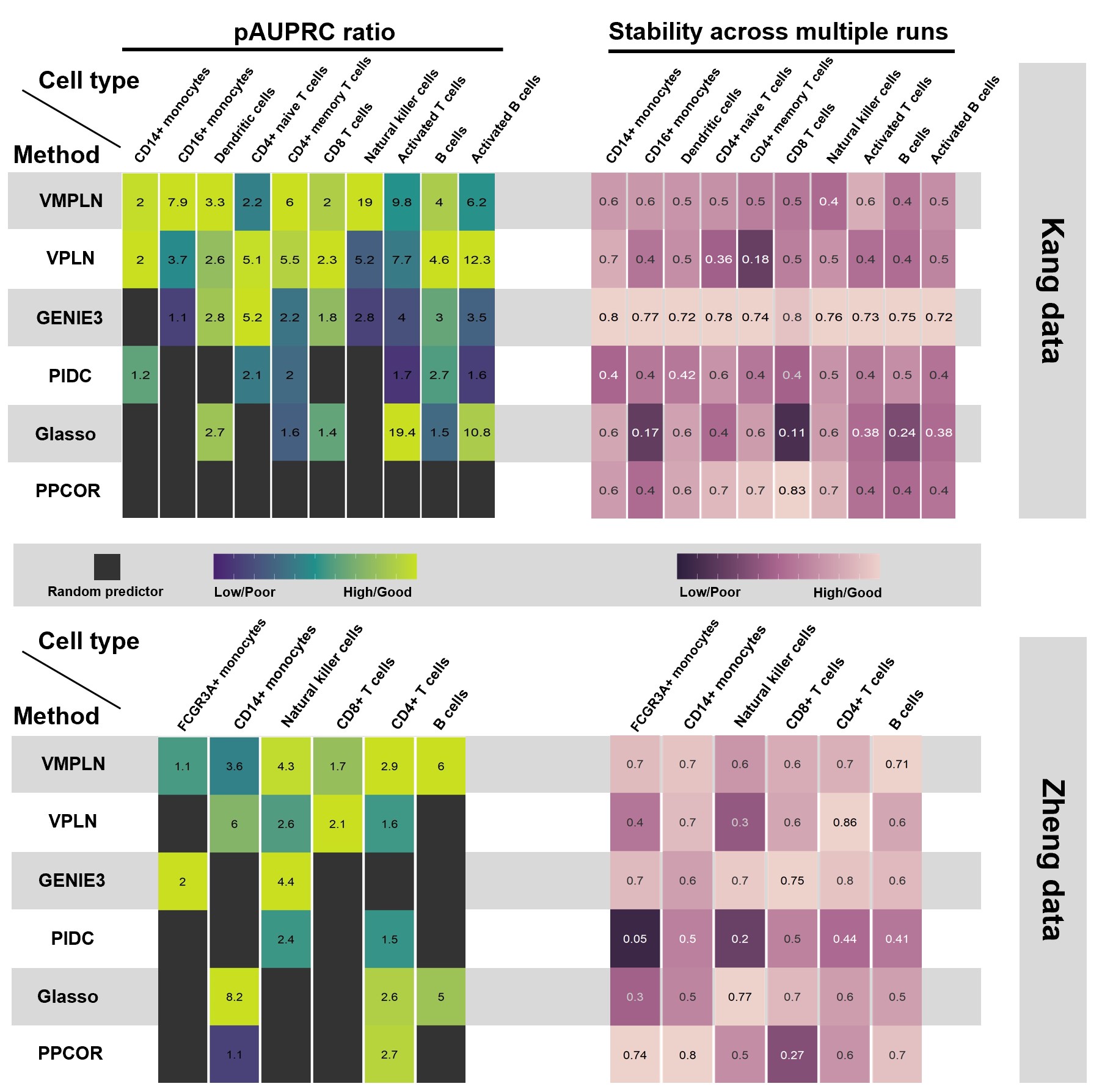}
	\caption{The performance of the netwok inference algorithms in two evaluation scRNA-seq datasets. The colors represent the scaled values of these metrics within each cell type and the actual values are marked in the boxes. Black color in the boxes: random predictor performs better.}
	\label{realdata_benchmark}
\end{figure}
In this section, we first evaluate the performance of VMPLN and compare with other methods using two real scRNA-seq datasets for gene regulatory network inference. Then, we demonstrate an application of VMPLN to a large scRNA-seq dataset from patients infected with SARS-CoV-2. The two evaluation datasets are scRNA-seq data profiled by \citet{kang2018multiplexed} (Kang data, 7217 cells from 10 cell types) and \citet{zheng2017massively} (Zheng data, 5962 cells from 6 cell types), respectively. Both datasets consist of two batches. We use one of the two batches and public gene regulatory network databases to construct silver standards (appendix Section \ref{Realdata:GRNdb} - \ref{Realdata:silverstandard}), and test different algorithms using another batches. Regulatory relationships are inferred for the top 500 highly variable genes given by Seurat \citep{stuart2019comprehensive} and evaluated by comparing with the silver standards. 

We compare the algorithms at the same network density ($5\%$) in terms of the pAUPRC ratio and the stability (Figure \ref{realdata_benchmark}). The pAUPRC ratios are calculated by comparing with the silver standard. The stability is defined as the median of pairwise Jaccard indexes between the networks estimated from 100 random down-sampled ($90\%$) datasets. VMPLN has the highest pAUPRC ratios in most cases and is the only method that consistently performs better than the random predictor. The stability of VMPLN is also reasonable and roughly similar to GENIE3, which is expected because GENIE3 uses data perturbation for network estimation and should have good stability.  

\begin{figure}[htbp]
	\centering
	\includegraphics[width = 1.10\textwidth, height=0.55\textwidth]{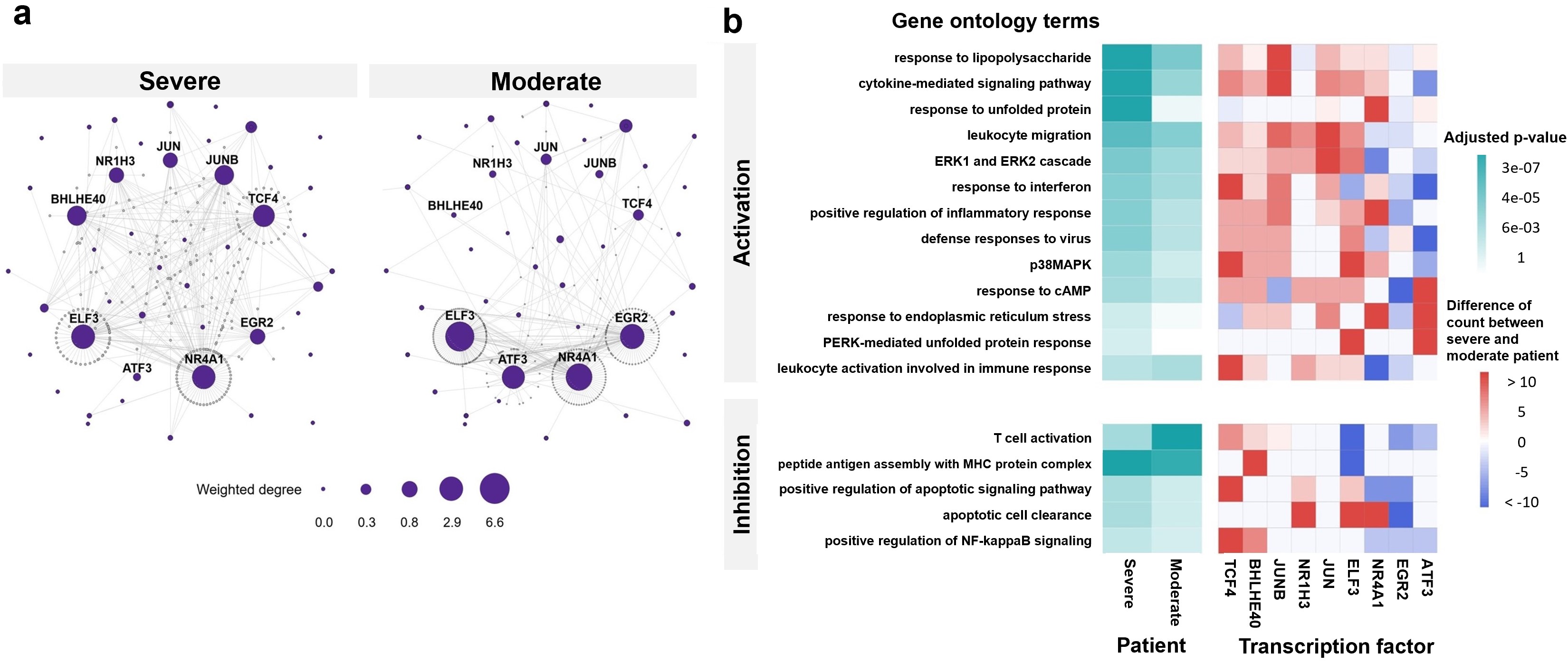}
	\caption{{The gene regulatory network analysis of the SARS-CoV-2 data}. (a) The inferred gene regulatory networks of Group4 macrophages in severe and moderate patients. The size of the node represents the weighted node degree. (b) Gene ontology enrichment analysis of Transcription factors' target genes. Transcription factors are selected as those with a large weighted degree differences ($> 0.2$) between severe and moderate patients. Activation and Inhibition mean that the regulatory relationships are positive and negative, respectively.  Left panel: p-values of gene ontology terms. Right panel: differences of number of genes in the gene ontology terms between severe and moderate patients.}
	\label{COVID_19_whole}
\end{figure}

We then apply VMPLN to a SARS-CoV-2 data dataset consisting of 29,980 bronchoalveolar lavage fluid macrophage cells from 8 patients, including 2 patients with SARS-CoV-2 infection and 6 patients with severe infection. \citet{liao2020single} clustered the macrophages to four clusters including two classic M1-like macrophage groups (Group1 and Group2), the alternative M2-like macrophages (Group3) and the alveolar macrophages (Group4). 
The gene regulatory network of the top 1000 highly variable genes are inferred using VMPLN  (Figure \ref{COVID_19_whole} a; appendix Section \ref{Realdata:SARSCOV2} and Figure \ref{GRN-group123}). We focus on the alveolar macrophages, since unlike other macrophage groups, the proportion of the alveolar macrophages tends to be smaller in patients with severe infection \citep{liao2020single}. 

A number of transcription factors exhibit a large weighted degree (weighted by absolute partial correlations) difference between the gene regulatory networks in the moderate and severe patients. 
Gene oncology enrichment analysis of their target genes (Figure \ref{COVID_19_whole} b) shows that, as expected, many of target genes are involved in immune responses such as leukocyte migration and regulation of T cell activation. Interestingly, we observe that a number of gene ontology terms are only enriched in severe patients such as response to unfolded protein and response (UPR) to endoplasmic reticulum (ER) stress, mainly due to the activation of target genes of NR4A1 (Figure \ref{COVID_19_whole} b). 
The UPR and ER stress processes are frequently activated in cells infected by viruses \citep{janssens2014emerging} including coronavirus \citep{chan2006modulation}, indicating that alveolar macrophages might be infected by SARS-CoV-2. In fact, a recent study showed that macrophages can be infected by SARS-CoV-2 and the infected macrophages activate T-cells to promote alveolitis in severe patients \citep{grant2021circuits}. In light of these findings, we reason that NR4A1 might play an important role in regulating cellular responses to SARS-CoV-2 infection.

\section{Discussion}
\label{sec:disc}
In this paper, we develop a regulatory network inference method called VMPLN for count data from mixed populations. Instead of using the two-step procedure for network inference, VMPLN performs clustering and network inference simultaneously, thus are better suitable for applications such as scRNA-seq studies. Since the Poisson log-normal model does not have a finite moment-generating function, common techniques for proving mixture models' identifiability and positive-definiteness of the Fisher information matrix do not work for the mixture Poisson log-normal model. We instead use all moments to prove these basic properties of mixture models. The techniques used here may also be useful for other mixture models.


\acks{This work was supported by the National Key Basic Research Project of China, the National Natural Science Foundation of China, and Sino-Russian Mathematics Center.}

\newpage

\appendix
\section{Appendix: Technical proofs.}
\label{Appendix:Technicalproofs}



\subsection{Notation}
\label{Appendix:Technicalproofs_notation}
For notational simplicity, we do not distinguish different lower (upper) bounds in Condition 1-3, and always use $m$ to denote the lower bound and $M$ to denote the upper bound. We always assume $M>0$. When $m$ is the lower bound for eigenvalues of precision matrices, we always assume $m>0$.  	We define two vectorization operators,  $\vecone$ and $\vectwo$. For a symmetric matrix $A \in \mathbb{R}^{p \times p} = [a_{ij}]$, $\vecone(A)$ is defined as
$$\vecone(A) = (a_{11}, a_{12}, a_{13},\ldots,a_{1p},  a_{22}, a_{23}, \ldots, a_{2p},\ldots,a_{(p-1)(p-1)}, a_{(p-1)p}, a_{pp})\trans,$$
and $\vectwo(A)$ is 
$$\vectwo(A) = (a_{11}, 2a_{12},2 a_{13},\ldots,2a_{1p},  a_{22}, 2a_{23}, \ldots, 2a_{2p},\ldots,a_{(p-1)(p-1)}, 2a_{(p-1)p}, a_{pp})\trans.$$
Observe that  $\vecone$ and $\vectwo$  only differ at off-diagonal elements. Define ${\nu} =  (\nu_1^{\rm T},\ldots,  \nu_g^{\rm T}, \ldots, \nu_G^{\rm T})^{\rm T},$ where  $\nu_g = {\rm vech}(\Theta_g)$. We assume the true parameter $\nu^*$  is an interior point of $\mathcal{D}$. 
For the Poisson log-normal model, we can write the log-likelihood as follows,
$$\log\, p(y,  \l;\Theta,  \mu) = \sum_{i=1}^{n} \log \int f(x;\Theta,  \mu)p(y_i\mid x, l_i) dx +\sum_{i=1}^{n}\log \left\lbrace p(l_i)\right\rbrace,$$
where $f(x; \Theta,  \mu)$ is the probability density function of ${\rm N}(\Theta, \mu)$. 
Observe that $\sum_{i=1}^{n}\log \left\lbrace p(l_i)\right\rbrace $ does not depend on $\Theta$.  We only need to consider the conditional distribution $p(y_i \mid l_i ;\Theta, \mu_g)$.
Since $p(l)$ is independent of the  unknown parameters and has a bounded support,  
it can be seen from the following section that the scaling factor $l$ does not have essential influence on the proof. For brevity, we assume that the scaling factor is the constant $1$ in the proof. 

In the following sections,  we always use  $p(y; \Theta, \mu)$ and  $p\left(y; \nu, \{\mu_g\}_{g=1}^G\right)$ to represent the  density of  the Poisson log-normal distribution ${\rm PLN }\left(\Theta, \mu\right)$ and  the  density of  the mixture Poisson log-normal distribution  MPLN $\left(\nu, \{\mu_g\}_{g=1}^G\right)$, respectively.
Given a single sample  $i$, we write the log-likelihood function  of the Poisson log-normal model at $y_i$ as 
\begin{align*}
	\ell(\Theta, y_i)& = \log\left\lbrace p(y_i;\Theta, \mu)\right\rbrace  = 2^{-1} \log \det(\Theta) \\ &  + \log 
	\int  \det(\Theta) ^{1/2} \exp\left\lbrace - 2^{-1} (x- \mu)\trans \Theta (x- \mu)\right\rbrace {h(y_i,x)}dx + C(y_i),
\end{align*}
where 
\begin{equation} \label{equ:h}
	h(y_i,x) = \prod_{j=1}^{p} \exp(x_{j} y_{ij}) \exp\left\lbrace -\exp(x_j)\right\rbrace 
\end{equation}
and $C(y_i) = \sum_{j=1}^p \log y_{ij}! - 2^{-1}p \log(2\pi)$. In the following sections, we always write $h(y,x) = \prod_{j=1}^{p} \exp(x_{j} y_{j}) \exp\left\lbrace -\exp(x_j)\right\rbrace $. Also, we define $\ell_n(\Theta) = \sum_{i=1}^{n} \ell(\Theta, y_i)$ as the log-likelihood in the Poisson log-normal model. 
For the mixture Poisson log-normal model,  its log-likelihood function at $y_i$ is
\begin{align*}
	\ell(\nu, y_i)& = \log \left\lbrace \sum_{g=1}^{G}\pi_g  p(y_i ;\Theta_g, \mu_g)\right\rbrace  \\ 
	&  = \log \left[\sum_{g=1}^{G}\pi_g  \int \det\left(\Theta_g\right) ^{1/2}\exp \left\lbrace - 2^{-1} (x- \mu_g)\trans \Theta_g (x- \mu_g)\right\rbrace    {h(y_i,x)}dx\right] + C(y_i),
\end{align*}
where  $C(y_i) = \sum_{j=1}^p \log y_j! - 2^{-1}p \log(2\pi)$. 
The log-likelihood of the mixture Poisson log-normal model is
${\ell}_n(\nu) =  \sum_{i=1}^{n}	\ell(\nu, y_i).$
If we define 
$$L_g(\nu_g, y) = \int \det(\Theta_g) ^{1/2}\exp \left\lbrace - 2^{-1} (x- \mu_g)\trans \Theta_g (x- \mu_g)\right\rbrace  
{h(y,x)}dx ,$$
and $L_M(\nu, y) = \sum_{g=1}^{G}\pi_g L_g(\nu_g, y)$, then ${\ell}(\nu, y_i)  = \log \left\lbrace L_M(\nu, y_i)\right\rbrace  + C(y_i)$.
Observe that the function $L_g(\nu_g, y)$  is proportional to the density 
$p(y ;\Theta_g, \mu_g)$.
Let 
$\mathcal{L}_n(\nu) = -n^{-1}\ell_n (\nu)$.
The optimization problem (3) in the main manuscript can be written as 
\begin{equation} \label{prob:opt}
	\hat{\nu}_{n} \in \argmin_{\nu \in {\mathcal{D}}} \left\{\mathcal{L}_n(\nu) + {\lambda_n} \mathcal{R}(\nu)\right\},
\end{equation}
where $\mathcal{R}(\nu) = \sum_{g=1}^G ||\Theta_g||_{1,\rm off}.$

For the Poisson log-normal model, denote the derivative (the score function) and the Hessian matrix of its log-likelihood as 
$$\mathcal{S}(\Theta, y) = \frac{\partial\ell(\Theta , y)}{\partial \vecone(\Theta)} , \quad H(\Theta, y) = \frac{\partial^2 \ell(\Theta, y)}{\partial \vecone(\Theta)\partial\vecone(\Theta)\trans} .$$ 
For the mixture Poisson log-normal model, we can similarly define its  score function $\mathcal{S}^M(\nu, y)$, its Hessian matrix $\F (\nu, y )$, and  its Fisher information matrix $\Gamma^*= \Gamma(\nu^*)$.    
\begin{equation} \label{equ:D}
	\D(\nu)=\Ex_{\nu^*} \left\{\F (\nu, y )\right\}. 
\end{equation}
Observe that $\D(\nu^*) = -\Gamma(\nu^*)$. 

Finally, we denote $\mathbb{N}^p$ as the set of all $p$-dimensional non-negative integer vectors.  For a vector ${a} = (a_1,\ldots,a_p)$, we denote $||{a}||_2 = \sqrt{\sum_{j=1}^p a_j^2}$ as its $L_2$-norm and $||{a}||_{\infty} = \max_{j} |a_j|$  as its $L_{\infty}$-norm. For a matrix $\bA$, we denote $||{A}||_2$ as its largest singular value of ${A}$ and $||{A}||_{1,\infty}$ as its maximum absolute row sum of ${A}$.  Given $\bTheta$ and $\bmu$, we define an operator $\mathcal{T}$ that maps functions in $x$ to functions in $y$,
$\mathcal{T}(f) =  \int \exp\left\lbrace -2^{-1} (\bx-\bmu)\trans \bTheta (\bx-\bmu) \right\rbrace f(\bx) h(\by,\bx)d\bx .$ We use $ \mathbb{I}(\bx) \equiv 1$ as the constant function taking value 1.

\subsection{Some Lemmas }
\begin{lemma}\label{lem:poly}
	Let $\by \sim {\rm PLN} (\bTheta, \bmu)$. 
	For any $n,y \in \mathbb{N}$,
	we define $	\phi(y,n) =    \prod_{i = 1}^n (y-i+1)$ if $n>0$ and $	\phi(y,n) = 1$ if $n = 0$.
	Then, for ${N} = (n_1,\dots, n_p)\trans$,
	we have 
	$
	\Ex\left(\prod_{j=1}^p \phi({n_j},y_j)\right) = \exp \left({{ N}}\trans \bmu + {{ N}}\trans \bTheta^{-1} {N}/2 \right).
	$  
\end{lemma}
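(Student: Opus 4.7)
\textbf{Proof plan for Lemma \ref{lem:poly}.} The function $\phi(y,n) = y(y-1)\cdots(y-n+1)$ is the $n$-th falling factorial, so the left-hand side is a mixed factorial moment of $y$. The plan is to condition on the latent Gaussian vector $x$ and exploit two well-known facts: the factorial moments of the Poisson distribution and the moment-generating function of the multivariate normal.

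First, I would recall the standard identity for a Poisson variable: if $Y \sim \mathrm{Poisson}(\lambda)$, then $\mathrm{E}\{\phi(Y,n)\} = \lambda^n$ for every non-negative integer $n$. This follows from a one-line computation with the Poisson probability mass function (rewrite $\phi(y,n)/y!$ as $1/(y-n)!$ and re-index the sum). Applied to the definition of the Poisson log-normal model, $y_j \mid x_j \sim \mathrm{Poisson}\{\exp(x_j)\}$, this gives
\begin{equation*}
\mathrm{E}\{\phi(y_j, n_j) \mid x_j\} = \exp(n_j x_j).
\end{equation*}

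Second, because the components $y_1,\ldots,y_p$ are conditionally independent given $x$, the conditional expectation factorizes:
\begin{equation*}
\mathrm{E}\left\{\prod_{j=1}^p \phi(n_j, y_j) \,\bigg|\, x\right\} = \prod_{j=1}^p \exp(n_j x_j) = \exp(N^{\mathrm{T}} x).
\end{equation*}
Taking expectation with respect to $x \sim \mathrm{N}(\mu, \Theta^{-1})$ and using the MGF of the multivariate normal yields
\begin{equation*}
\mathrm{E}\left\{\prod_{j=1}^p \phi(n_j, y_j)\right\} = \mathrm{E}\{\exp(N^{\mathrm{T}} x)\} = \exp\left(N^{\mathrm{T}}\mu + \tfrac{1}{2} N^{\mathrm{T}}\Theta^{-1} N\right),
\end{equation*}
which is the claimed identity.

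There is no real obstacle here; the lemma is essentially a bookkeeping statement that combines the factorial-moment formula for the Poisson with the Gaussian MGF. The only point that warrants a sentence of justification in a formal write-up is the exchange of expectation and product, which is immediate from conditional independence, together with Fubini applied to the iterated expectation (all quantities are non-negative, so no integrability hypothesis is needed beyond finiteness of the right-hand side, which is obvious from the explicit closed form).
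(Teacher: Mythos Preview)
Your proposal is correct and matches the paper's proof essentially line for line: condition on $x$, use the Poisson factorial-moment identity $\mathrm{E}\{\phi(y_j,n_j)\mid x_j\}=\exp(n_jx_j)$ together with conditional independence, then apply the Gaussian moment-generating function $\mathrm{E}\{\exp(N^{\mathrm{T}}x)\}=\exp(N^{\mathrm{T}}\mu+N^{\mathrm{T}}\Theta^{-1}N/2)$. The paper's write-up is slightly terser than yours, but the argument is identical.
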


\begin{lemma}[1-dimensional dominating function] \label{1dim}
	Suppose $\theta \in [m,M]$ ($m, M> 0$), and $|\mu| \leq M$.   Let 
	$$f^1(y,\theta,\mu) = {\int \exp \left\lbrace -2^{-1}\theta(x -  \mu)^2\right\rbrace  \exp\left\lbrace -\exp(x)\right\rbrace  \exp(xy)dx}.$$ 
	Then, for any large enough positive integer $y$, we have 
	$
	f^1(y,\theta,\mu)
	\geq  C \exp\left\lbrace y\log(y+1)/2\right\rbrace ,
	$
	where $C>0$ is  constant depending on $m,M$.
	
\end{lemma}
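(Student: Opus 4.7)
The plan is to prove the bound by a crude Laplace-type argument: restrict the integral to a short interval around the near-maximizer of the exponent, and estimate the integrand there by elementary inequalities. The exponent of the integrand is $g(x)=-\tfrac{\theta}{2}(x-\mu)^2-e^x+xy$, whose unconstrained maximizer satisfies $\theta(x^*-\mu)+e^{x^*}=y$ and therefore behaves like $x^*\approx\log y$ for large $y$. Evaluating $g$ at $x^*$ gives roughly $y\log y - y$ (the Gaussian penalty being only of order $(\log y)^2$), so the true growth of $f^1$ is like $\exp(y\log y)/\sqrt{y}$, far larger than the claimed $\exp\{y\log(y+1)/2\}$. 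This slack means a very loose estimate is enough.

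Concretely, I would take $y$ large enough that $\log y\geq 1$, set $x_0=\log y$, and restrict integration to $[x_0-1,x_0+1]$. On this length-$2$ window,
\[
xy\geq(\log y-1)y=y\log y-y,\qquad -e^x\geq -e^{\log y+1}=-e\,y,
\]
and from $|x-\mu|\leq 1+|\log y-\mu|\leq \log y+M+1$ together with $\theta\leq M$,
\[
-\tfrac{\theta}{2}(x-\mu)^2\geq -\tfrac{M}{2}(\log y+M+1)^2.
\]
Summing these three pointwise lower bounds for $g(x)$ and integrating over the window yields
\[
f^1(y,\theta,\mu)\ \geq\ 2\exp\!\left\{y\log y-(1+e)y-\tfrac{M}{2}(\log y+M+1)^2\right\}.
\]

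Finally I would check the dominance relation. Since $y\log y-\tfrac{y}{2}\log(y+1)=\tfrac{y}{2}\log y-\tfrac{y}{2}\log(1+1/y)=\tfrac{y}{2}\log y-O(1)$, subtracting $\tfrac{y}{2}\log(y+1)$ from the above exponent leaves $\tfrac{y}{2}\log y-(1+e)y-\tfrac{M}{2}(\log y+M+1)^2-O(1)$, and the leading $\tfrac{y}{2}\log y$ term dominates both the linear-in-$y$ correction and the $(\log y)^2$ Gaussian correction. Hence for all sufficiently large integers $y$ the right-hand side is at least $C\exp\{y\log(y+1)/2\}$ for any fixed $C>0$, with a threshold and constant depending only on $m,M$. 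The only mildly subtle point is that the Gaussian penalty is unbounded as $y\to\infty$ because $|x_0-\mu|\to\infty$, but its $(\log y)^2$ growth is absorbed by the much larger $(y/2)\log y$ gain from the $xy$ term, so no finer Laplace estimate is actually needed.
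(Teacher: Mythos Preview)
Your proof is correct and follows essentially the same strategy as the paper: restrict the integral to a short window around $\log y$ and use crude pointwise lower bounds on each factor, noting that the $(\log y)^2$ Gaussian penalty is negligible compared with the $y\log y$ gain. The paper integrates over $[\log y-1,\log y]$ and is marginally more careful with the $\int_{-1}^0\exp\{y(t-e^t)\}\,dt$ factor (extracting an extra $e^{-y}/\sqrt{y}$ via $t-e^t\ge -1-t^2/2$), whereas you use the blunter bounds $xy\ge y\log y-y$ and $-e^x\ge -ey$ on $[\log y-1,\log y+1]$; since the target $\exp\{y\log(y+1)/2\}$ is much smaller than the true size $\sim\exp(y\log y-y)$, either level of precision suffices.
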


\begin{lemma}[$p$-dimensional dominating function] \label{pdim}
	Let 
	\begin{equation} \label{pdimequa}
		f^p(\by,\bTheta ,\bmu) = \frac{\mathcal{T}\left\{(x_{1} - \mu_1)^4\right\}}{\mathcal{T}(\mathbb{I})}, 
	\end{equation}
	where $\by=(y_1, \dots, y_p) \in \mathbb{N}^p$. Assuming $\by \sim {\rm PLN}(\bTheta^*, \bmu)$, under Condition 1-3, there exists a polynomial function $g(\by)= \left(4||\by||_2/m\right)^4 + C_p$ with a constant $C_p$ only depending on $p,m$ and $M$,
	such that  $\Ex_{\bTheta^*}(g(\by))<\infty$ and $|f^p(\by, \bTheta,\bmu)| \leq g(\by)$ for any $\bTheta, \bmu$ satisfying Condition 1-3.  
\end{lemma}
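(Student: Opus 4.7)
The plan is to recognize $f^p(\by,\bTheta,\bmu)$ as the posterior fourth central moment of $X_1$ about $\mu_1$ in the Poisson log-normal model, then bound it via strong log-concavity. Since $h(\by,x)=\prod_{j}\exp\{x_jy_j-\exp(x_j)\}$, the integrand in $\mathcal{T}$ from (\ref{pdimequa}) equals $\exp(-V(x))$ with
$$V(x)=\tfrac{1}{2}(x-\bmu)^{\rm T}\bTheta(x-\bmu)+\sum_{j=1}^p\bigl[\exp(x_j)-x_jy_j\bigr],$$
so $f^p(\by,\bTheta,\bmu)=\Ex_\pi[(X_1-\mu_1)^4]$ with $\pi(x\mid\by)\propto\exp(-V(x))$. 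Under Condition C1, $\bTheta\succeq mI$, and each $\exp(x_j)$ is convex, so $V$ is $m$-strongly convex and $\pi(\cdot\mid\by)$ is $m$-strongly log-concave. This structural fact drives the proof.

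Let $\hat{x}=\argmin_x V(x)$ denote the MAP. The first step bounds $\|\hat{x}-\bmu\|_2$. First-order optimality gives $\bTheta(\hat{x}-\bmu)=\by-\exp(\hat{x})$; taking the inner product with $\hat{x}-\bmu$ and using $\bTheta\succeq mI$ yields
$$m\|\hat{x}-\bmu\|_2^2\leq\|\by\|_2\|\hat{x}-\bmu\|_2-(\hat{x}-\bmu)^{\rm T}\exp(\hat{x}).$$
The last term is handled coordinatewise: the scalar function $t\mapsto(t-\mu_j)\exp(t)$ is minimized at $t=\mu_j-1$ with value $-\exp(\mu_j-1)\geq-\exp(M-1)$ by Condition C3, so $-(\hat{x}-\bmu)^{\rm T}\exp(\hat{x})\leq p\exp(M)$. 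Solving the resulting quadratic in $\|\hat{x}-\bmu\|_2$ gives $\|\hat{x}-\bmu\|_2\leq\|\by\|_2/m+\sqrt{p\exp(M)/m}$.

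The second step uses strong log-concavity to control posterior moments about $\hat{x}$. Brascamp--Lieb yields $\Var(X_j\mid\by)\leq1/m$; more generally, standard sub-Gaussian moment bounds for $m$-strongly log-concave measures (variance proxy $1/m$ in every direction) give $\Ex[(X_1-\Ex[X_1\mid\by])^4\mid\by]\leq C/m^2$ with $C$ absolute. A complementary sub-Gaussian concentration argument further shows that the posterior mean and mode are within $O(m^{-1/2})$, so $|\Ex[X_1\mid\by]-\hat{x}_1|\leq C'/\sqrt{m}$. Writing $X_1-\mu_1$ as the sum of $(X_1-\Ex[X_1\mid\by])$, $(\Ex[X_1\mid\by]-\hat{x}_1)$, and $(\hat{x}_1-\mu_1)$ and iterating $(a+b)^4\leq 8(a^4+b^4)$ then yields
$$\Ex[(X_1-\mu_1)^4\mid\by]\leq 64(\|\by\|_2/m)^4+C(p,m,M).$$
Since $64\leq 4^4=256$, this gives the required pointwise bound $|f^p|\leq(4\|\by\|_2/m)^4+C_p$ with $C_p$ absorbing $C(p,m,M)$.

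Finally, $\Ex_{\bTheta^*}[g(Y)]<\infty$ is immediate from Lemma \ref{lem:poly}: the closed-form factorial moments show each $Y_j\sim\mathrm{PLN}$ has all ordinary moments finite, so $\Ex_{\bTheta^*}[\|Y\|_2^4]\leq p\sum_j\Ex_{\bTheta^*}[Y_j^4]<\infty$. The main obstacle will be the posterior-mean-to-mode bound in the second step: with no closed form for either quantity, one must exploit the sharp sub-Gaussian behavior of strongly log-concave measures, whose mean, median, and mode all coincide up to $O(m^{-1/2})$. The MAP bound of the first step, though algebraically short, is the other essential ingredient, being the sole source of $\|\by\|_2$-dependence in the final estimate.
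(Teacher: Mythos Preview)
Your approach is correct and takes a genuinely different route from the paper's proof. The paper never invokes log-concavity; instead it splits the numerator integral at the ball $\{\|\bx-\bmu\|_2\le A(\by)\}$ with $A(\by)=4\|\by\|_2/m$. On the ball, $(x_1-\mu_1)^4\le A(\by)^4$ trivially, which gives the leading term. On the complement, the paper drops the $\exp\{-e^{x_j}\}$ factors in $h$, uses Cauchy--Schwarz and polar coordinates to bound the numerator by a constant times $\prod_j e^{\mu_j y_j}$, and divides by the explicit lower bound $\mathcal{T}(\mathbb{I})\ge C^p\exp\bigl\{\sum_j y_j\log(y_j+1)/2\bigr\}$ coming from Lemma~\ref{1dim}. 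So the paper's argument is elementary calculus and crucially relies on the one-dimensional Lemma~\ref{1dim}, whereas your argument imports Brascamp--Lieb and sub-Gaussian moment bounds for strongly log-concave measures and never touches Lemma~\ref{1dim}. Your route is more conceptual and makes the extension in Remark~\ref{pdimremark} transparent; the paper's route is fully self-contained, but the choice $A=4\|\by\|_2/m$ looks ad hoc until one reaches Step~3, where it is exactly what makes the tail integral collapse.

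Two small points to tighten. First, your constant tracking is loose: after the three-term split you bound $(\hat{x}_1-\mu_1)^4$ by $(\|\by\|_2/m+\sqrt{pe^M/m})^4$, whose expansion contains cross terms of order $(\|\by\|_2/m)^3$, $(\|\by\|_2/m)^2$, etc.; these must be absorbed (e.g.\ via Young's inequality) before you can claim a coefficient of $64$ on $(\|\by\|_2/m)^4$. Since $64<256=4^4$ there is ample room, but say so explicitly. Second, the claim that the posterior mean and mode are within $O(m^{-1/2})$ deserves a one-line justification, as sub-Gaussianity about the mean does not by itself locate the mode. A clean argument: integration by parts gives $\Ex_\pi[\langle X-\hat{x},\nabla V(X)\rangle]=p$, while strong convexity and $\nabla V(\hat{x})=0$ give $\langle x-\hat{x},\nabla V(x)\rangle\ge m\|x-\hat{x}\|_2^2$; hence $\Ex_\pi[\|X-\hat{x}\|_2^2]\le p/m$ and $\|\Ex_\pi[X]-\hat{x}\|_2\le\sqrt{p/m}$, which is absorbed into $C_p$.
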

\begin{remark}\label{pdimremark}
	Applying the same proof as in Lemma \ref{pdim}, the polynomial $(x_{1} - \mu_1)^4 $ can be replaced by any polynomial with respect to $\bx$, e.g. $(x_{i} -\mu_i)^2(x_{j} - \mu_j)^2$ and $(x_{i} - \mu_i)(x_{j} - \mu_j)(x_{k} - \mu_k)^2$. Further, for any two polynomial functions ${\psi}_1 (\bx) , {\psi}_2 (\bx)$ and 
	$f^p(\by, \bTheta ,\bmu) = {\mathcal{T}\left\{{\psi}_1 (\bx)\right\} \mathcal{T}\left\{{\psi}_2 (\bx)\right\}}/{\mathcal{T}^2(\mathbb{I})}, 
	$ 
	there exists a polynomial function $g(\by)$ with $\Ex_{\Theta^*}(g(\by))<\infty$ such that  
	$|f^p(\by,\bTheta,\bmu)| \leq g(\by)$ for any $\bTheta,\bmu$ satisfying Condition 1-3.  
\end{remark}
\begin{proof}[of Lemma \ref{lem:poly}]
	By the property of conditional expectation, we have 
	$
	\Ex\left(\prod_{j=1}^p \phi({n_j},y_j)\right) = \Ex_{\bx}\Ex_{\by}\left(\prod_{j=1}^p \phi({n_j},y_j)\mid\bx\right). 
	$
	From the moments of the Poisson distribution, we have 
	$$\Ex_{\by}\left(\prod_{j=1}^p \phi({n_j},y_j)\mid\bx\right) = \prod_{j=1}^p \exp(n_j x_j) .$$
	Further, since $\bx \sim {\rm N}(\bmu, \bTheta^{-1})$, we have
	$$\Ex_{\bx}\left( \prod_{j=1}^p \exp(n_j x_j)\right) = \Ex_{\bx}\left\{\exp({N}\trans\bx)\right\} = \exp\left({N}\trans \bmu +  {N}\trans \bTheta^{-1} {N}/2\right),$$
	and the conclusion follows.
\end{proof}

\begin{proof}[of Lemma \ref{1dim}]
	Since for any $y \in \mathbb{N}$, $f^{1}(y, \theta, \mu) > 0$,  we only need to consider $y$ large enough.
	Let 
	$g(x,y) = \exp \left\lbrace -2^{-1}\theta(x - \mu)^2\right\rbrace  \exp\left\lbrace -\exp(x)\right\rbrace  \exp(xy).$
	Clearly, we have, for $y$ large enough,
	$\int g(x,y)dx 
	\geq \int_{\log(y)-1}^{\log(y)} g(x,y)dx.$
	Let $t = x-\log(y)$, then 
	\begin{align}\label{1dimlower}
		&\quad \int_{\log(y)-1}^{\log(y)} g(x,y)dx  \notag \\
		&= \exp\left\lbrace y\log(y)\right\rbrace \int_{-1}^{0} \exp\left[-2^{-1}\theta\left\lbrace \log(y)+t- \mu\right\rbrace ^2\right] \exp\left[-\exp\left\lbrace \log(y)+t\right\rbrace \right] \exp(yt)dt \notag\\
		&\geq \exp\left\lbrace y\log(y)\right\rbrace \min_{t\in[-1,0]} \left( \exp\left[ -2^{-1}\theta\left\lbrace \log(y)-t- \mu\right\rbrace ^2\right] \right) 
		\int_{-1}^0 \exp\left[ y\left\lbrace t-\exp(t)\right\rbrace \right] dt.\notag\\
		&\geq \exp\left\lbrace y\log(y)-y\right\rbrace \min_{t\in[-1,0]} \left( \exp\left[ -2^{-1}\theta\left\lbrace \log(y)-t- \mu\right\rbrace ^2\right] \right)
		\int_{-1}^0 \exp(-t^2/2)dt /\sqrt{y}. \notag\\
		&\geq \exp \left\lbrace y\log(y)-y\right\rbrace C_2 \exp \left\lbrace -C_3\log^2(y)+C_4\log(y)\right\rbrace /\sqrt{y},
	\end{align}
	where $C_2$, $C_3$ and $C_4$ are three constants only depending on $m,M$.
	Since the leading order of (\ref{1dimlower}) is $\exp \left\lbrace y\log(y)-y\right\rbrace  $ , when $y\geq 1$, there exists a constant $C$ depending on $m,M$ such that for any large enough positive integer $y$, we have 
	$	f^1(y,\theta,\mu) \geq C\exp \left\lbrace y\log(y+1)/2\right\rbrace ,$
	which proves the conclusions.
\end{proof}

\begin{proof}[of Lemma \ref{pdim}]
	Similarly, we only need to consider $||\by||_2$ large enough.  The proof consists of the following three steps.
	
	\noindent
	\textbf{ Step 1.}  We first give a lower bound for the denominator of (\ref{pdimequa}). 
	Observe that 
	$$\exp \left\lbrace -2^{-1}  (\bx - \bmu)\trans \bTheta  (\bx - \bmu) \right\rbrace  \geq \exp \left\lbrace -2^{-1} M (\bx - \bmu)\trans   (\bx - \bmu) \right\rbrace .$$ We have
	\begin{align*}
		&\quad \int \exp \left\lbrace -2^{-1}  (\bx - \bmu)\trans \bTheta  (\bx - \bmu) \right\rbrace h(\by,\bx)d\bx 
		\geq \int\exp \left\lbrace -2^{-1} M (\bx - \bmu)\trans   (\bx - \bmu) \right\rbrace h(\by,\bx)d\bx \\
		&= \prod_{j=1}^p \int \exp \left\lbrace -2^{-1} M(x_j - \mu_j)^2\right\rbrace \exp \left\lbrace -\exp(x_j)\right\rbrace  \exp(x_jy_j)dx_j.
	\end{align*}
	By Lemma \ref{1dim}, we have, for any fixed $j$, $ \int \exp \left\lbrace -2^{-1}  M(x_j -\mu_j)^2 \right\rbrace \exp \left\lbrace -\exp(x_j)\right\rbrace  \exp(x_jy_j)dx_j$ is greater than $C\exp \left\lbrace y_j\log(y_j + 1)/2 \right\rbrace $. Hence,  we have
	$\int \exp \left\lbrace -2^{-1} (\bx - \bmu)\trans \bTheta  (\bx - \bmu)\right\rbrace h(\by,\bx)d\bx \geq C^p  \exp \left\lbrace \sum_j y_j\log(y_j + 1)/2\right\rbrace .$
	
	\noindent
	\textbf{ Step 2.}  When $||\bx- \bmu||_2 \leq A(\by) = 4||\by||_2/m$, we have $(x_1 - \mu_1)^4 \leq ||\bx- \bmu||_2^4\leq A^4(\by)$, where $x_1$ is the first element of $\bx$. Then, we have 
	$$
	\frac{\int_{||\bx-\bmu||_2 \leq A} \exp \left\lbrace -2^{-1}  (\bx - \bmu)\trans \bTheta  (\bx - \bmu)\right\rbrace   (x_{1} - \mu_1)^4h(\by,\bx)d\bx}{\int \exp \left\lbrace -2^{-1}  (\bx - \bmu)\trans \bTheta  (\bx - \bmu) \right\rbrace h(\by,\bx)d\bx} \leq A^4(\by). 
	$$
	Since $A^4(\by)$ is a polynomial function of $\by$, we have $\Ex_{\bTheta^{*}}(A^4(\by)) < \infty.$
	
	\noindent
	\textbf{ Step 3.}  When $||\bx-\bmu||_2 > A(\by)$,  we have
	\begin{align*}
		&\quad \int_{||\bx-\bmu||_2 > A(\by)} \exp \left\lbrace -2^{-1}  (\bx - \bmu)\trans \bTheta  (\bx - \bmu)\right\rbrace (x_{1} -\mu_1)^4 h(\by,\bx)d\bx \\
		&\leq  \int_{||\bx - \bmu||_2 > A(\by)} \exp \left\lbrace -2^{-1}  (\bx - \bmu)\trans \bTheta  (\bx - \bmu)\right\rbrace  (x_{1} - \mu_1)^4 \exp\left(\sum_{j=1}^{p}x_{j}{y_{j}}\right) d\bx\\
		&= \prod_{j=1}^{p} \exp(\mu_j{y_{j}}) \int_{||{{u} }||_2 > A(\by)} \exp\left(-2^{-1}  {{u} }\trans \bTheta {{u} }\right) u_1^4\prod_{j=1}^{p} \exp(u_{j}{y_{j}}) d{{u} } \quad({{u} }= \bx - \bmu)\\
		&\leq \prod_{j=1}^{p} \exp(\mu_j{y_{j}})\int_{||{{u} }||_2 > A(\by)} \exp\left(-2^{-1} m ||{{u} }||_2^2\right) ||{{u} }||_2^4 \exp(||{{u} }||_2||\by||_2)d{{u} },
	\end{align*}
	where the last inequality is by Cauchy's inequality. Observe that the area of the $p$-dimensional sphere of radius $r$ is $M_p r^{p-1}$ with $M_p$ being a constant only depending on $p$. Let $r = ||{{u} }||_2$. By the polar decomposition, the $p$-dimensional integral can be rewritten as
	\begin{align*}
		&\quad \int_{||{{u} }||_2 > A(\by)} \exp\left( -2^{-1} m ||{{u} }||_2^2\right)  ||{{u} }||_2^4 \exp\left( ||{{u} }||_2||\by||_2\right) d{{u} } \\
		&= \int_{r>A(\by)} \exp \left\lbrace -2^{-1} r \left( m r - 2||\by||_2\right) \right\rbrace  M_p r^{p+3}dr \\
		&\leq \int_{r>A(\by)} \exp(-r ||\by||_2) M_p r^{p+3}dr  \quad \left( r>4\frac{||\by||_2}{m}\right) \\
		&\leq \int_{r>A(\by)} \exp(-r) M_p r^{p+3}dr = C'_p,
	\end{align*}
	where $C'_p$ is a constant  only depending on $p$. Then, under Condition 1-3, we have
	$$\frac{\int_{||\bx-\bmu||_2 \leq A} \exp \left\lbrace -2^{-1}  (\bx - \bmu)\trans \bTheta  (\bx - \bmu)\right\rbrace   (x_{1} - \mu_1)^4h(\by,\bx)d\bx}{\int \exp \left\lbrace -2^{-1}  (\bx - \bmu)\trans \bTheta  (\bx - \bmu) \right\rbrace h(\by,\bx)d\bx} \leq  \frac{ \prod_{j=1}^{p} \exp(\mu_j{y_{j}}) C'_p}{\exp \left\lbrace \sum_j y_j\log(y_j + 1)/2 \right\rbrace  } \leq C_p$$
	where $C_p$ is a constant depending on $p,m$ and $M$. 
	
	Finally, combining the results in Step 2 and 3,  we get the dominating function 
	$g(\by) = \left( 4||\by||_2/m\right) ^4 +C_p$
	which is a polynomial. By Lemma \ref{lem:poly}, we have $\Ex_{\bTheta^{*}}(g(\by)) < \infty$.
\end{proof}

\subsection{Proof of Theorem  1,  Part I}

To prove the the first conclusion of Theorem 1, we introduce the following definition and give two lemmas.

\begin{definition}[Good vector]
	We call a vector $\bxi = (\xi_1, \dots, \xi_G)\trans \in \mathbb{R}^G$ as a good vector if one $\xi_g$ only appears once  in $\bxi$, i.e. $\xi_{g'} \neq \xi_{g}$ for all $g' \neq g$. We call the index $g$ as a good index with respect to $\bxi$.
\end{definition}
\begin{lemma}\label{goodvec}
	Let $\bxi= (\xi_1, \dots, \xi_G)\trans $ be a good vector with a good index $s$,  $\bsigma = (\sigma_1,\dots, \sigma_G)\trans$ satisfy $\sigma_g >0$ for $g=1,\dots,G$ and $\balpha = (\alpha_1,\dots, \alpha_G)\trans$. If for any $z \in \mathbb{N}$,
	$\sum_{g=1}^{G}\alpha_g \exp(\xi_g z +  \sigma_g z^2/2) = 0,$
	then $\alpha_s = 0$.
\end{lemma}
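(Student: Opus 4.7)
The plan is to isolate $\alpha_s$ via an asymptotic dominance argument as $z\to\infty$ through the integers, exploiting the uniqueness of $\xi_s$ among the $\xi_g$'s. The observation driving everything is that among summands, the quadratic term $\sigma_g z^2/2$ controls the leading order of growth, and ties in $\sigma_g$ are broken by the linear term $\xi_g z$.

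First I would group the indices $\{1,\ldots,G\}$ by the pair $(\sigma_g,\xi_g)$; within a group, all summands share an identical exponential factor, so only the sum of the $\alpha_g$'s over the group contributes. Writing $\tilde\alpha_{(\sigma,\xi)}$ for this grouped coefficient, the hypothesis becomes
\[
\sum_{(\sigma,\xi)\ \text{distinct}} \tilde\alpha_{(\sigma,\xi)}\,\exp(\xi z + \sigma z^2/2) = 0,\qquad z\in\mathbb{N}.
\]
Because $\xi_s$ appears only once in $\xi$, the pair $(\sigma_s,\xi_s)$ has no collisions, so its group is the singleton $\{s\}$ and $\tilde\alpha_{(\sigma_s,\xi_s)}=\alpha_s$. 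It therefore suffices to show that every grouped coefficient in the above identity vanishes, i.e.\ that the functions $z\mapsto \exp(\xi z + \sigma z^2/2)$ indexed by distinct pairs $(\sigma,\xi)$ are linearly independent on $\mathbb{N}$.

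To prove this, I would peel off terms iteratively in lexicographically decreasing order of $(\sigma,\xi)$. Let $(\sigma^*,\xi^*)$ be the lex-maximal pair among the remaining distinct ones. Dividing the identity by $\exp(\xi^* z + \sigma^* z^2/2)$ and letting $z\to\infty$ along $\mathbb{N}$, each surviving term picks up a factor $\exp\bigl((\sigma-\sigma^*)z^2/2 + (\xi-\xi^*)z\bigr)$: if $\sigma<\sigma^*$ the quadratic exponent is strictly negative for large $z$ and dominates, so the ratio tends to $0$; if $\sigma=\sigma^*$ and $\xi<\xi^*$ the linear exponent is strictly negative and again drives the ratio to $0$. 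Hence $\tilde\alpha_{(\sigma^*,\xi^*)}=0$, and an induction on the number of distinct pairs finishes the argument.

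Applying this to the singleton group at $(\sigma_s,\xi_s)$ yields $\alpha_s=0$. The main (minor) obstacle is the two-regime decay check in the peeling step; this is the only place the good-index hypothesis is used, since without it $\alpha_s$ would only be forced to equal the negative of the other $\alpha_g$'s sharing its $(\sigma,\xi)$ pair rather than vanishing individually. The positivity of $\sigma_g$ is inessential: the lexicographic dominance goes through for any real $\sigma_g$, and is only invoked here through the positivity context of the paper.
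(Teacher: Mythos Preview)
Your argument is correct and matches the paper's proof: group indices by the pair $(\sigma_g,\xi_g)$, peel off the lexicographically dominant term by dividing and letting $z\to\infty$, then use the good-index hypothesis to see that the group containing $s$ is a singleton and hence $\alpha_s=0$. Your lexicographic order on $(\sigma,\xi)$ is the one that actually makes the peeling step work; the paper writes ``first by $\xi$ then by $\sigma$'', but the intended mechanism is identical.
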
 
\begin{lemma} \label{lem:subspace}
	For any $n>0$, let $\mathcal{M}_i \subset \mathbb{R}^p, i=1,\dots, n$ be $n$ linear proper subspaces. Then, there  exists a non-negative integer vector $\bgamma$ such that $\bgamma \not\in \bigcup_{i=1}^n \mathcal{M}_i$.
\end{lemma}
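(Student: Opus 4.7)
The plan is to reduce the statement to the case of hyperplanes and then exhibit an explicit moment-curve of integer points that escapes any finite union of hyperplanes. Since each $\mathcal{M}_i$ is a proper linear subspace of $\mathbb{R}^p$, its dimension is at most $p-1$, so its orthogonal complement contains a nonzero vector $a_i \in \mathbb{R}^p$, and $\mathcal{M}_i$ is contained in the hyperplane $\mathcal{H}_i = \{x \in \mathbb{R}^p : a_i\trans x = 0\}$. It therefore suffices to produce $\gamma \in \mathbb{N}^p$ that avoids $\bigcup_{i=1}^n \mathcal{H}_i$, which will automatically avoid $\bigcup_{i=1}^n \mathcal{M}_i$.

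For each positive integer $t$, define the moment-curve vector $\gamma(t) = (1, t, t^2, \dots, t^{p-1})\trans \in \mathbb{N}^p$. Then for each fixed $i$,
$$a_i\trans \gamma(t) = a_{i,1} + a_{i,2}\, t + a_{i,3}\, t^2 + \cdots + a_{i,p}\, t^{p-1}$$
is a univariate polynomial in $t$ of degree at most $p-1$. Because $a_i \neq 0$, not all of its coefficients vanish, so this polynomial is not identically zero and therefore has at most $p-1$ real roots. Taking the union over $i=1,\dots,n$, there are at most $n(p-1)$ positive integers $t$ for which $\gamma(t) \in \mathcal{H}_i$ for some $i$. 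Picking any positive integer $t$ outside this finite bad set yields a non-negative integer vector $\gamma(t)$ lying outside every $\mathcal{H}_i$, and hence outside every $\mathcal{M}_i$, which is exactly the conclusion of the lemma.

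There is no real obstacle in the argument; the only point requiring a brief check is the reduction step, which uses nothing more than the fact that a proper subspace has codimension at least one. Everything else reduces to the classical Vandermonde/moment-curve observation that the polynomial values $\{a_i\trans \gamma(t)\}_{t \in \mathbb{N}}$ vanish on only finitely many integers whenever the coefficient vector $a_i$ is nonzero.
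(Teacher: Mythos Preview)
Your proof is correct. The reduction to hyperplanes is valid (any proper subspace sits inside a hyperplane through the origin), and the moment-curve argument is airtight: for each nonzero $a_i$ the map $t\mapsto a_i\trans\gamma(t)$ is a nonzero polynomial of degree at most $p-1$, so it vanishes at only finitely many integers, and the union over $i$ is still finite.

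The paper takes a different route: it proceeds by induction on $n$. Given $\alpha\in\mathbb{N}^p$ avoiding $\mathcal{M}_1,\dots,\mathcal{M}_n$ and $\beta\in\mathbb{N}^p$ avoiding $\mathcal{M}_2,\dots,\mathcal{M}_{n+1}$, the paper looks along the arithmetic progression $\alpha+k\beta$ and argues that each $\mathcal{M}_i$ can swallow at most one value of $k$. Your argument is more direct and fully constructive (one explicit one-parameter family suffices for all $n$ at once), and gives an explicit bound $n(p-1)$ on the number of bad $t$. The paper's inductive ``line through two good points'' technique, on the other hand, adapts more readily to the companion Lemma~\ref{lem:mixsubspace}, where one must also avoid a quadric $\mathcal{V}=\{y:y\trans A y=0\}$: along $\alpha+k\beta$ the quadratic form becomes a quadratic in $k$, hence has at most two roots. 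Your moment curve does not automatically handle that extension, since for symmetric $A\neq 0$ the polynomial $\gamma(t)\trans A\,\gamma(t)$ can vanish identically (e.g.\ in $p=3$ take $A_{22}=2$, $A_{13}=A_{31}=-1$, all other entries zero). So for the present lemma your argument is cleaner; for the paper's later purposes its inductive method is the more reusable one.
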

\begin{proposition}\label{prop:density}
	$p\left(\by; \bTheta_1, \bmu_1\right), \dots, p\left(\by; \bTheta_G, \bmu_G\right)$ are linearly independent for $\bmu_g$ ($g=1,\dots, G$) that are bounded and different from each other. 
\end{proposition}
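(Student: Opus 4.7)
The plan is to exploit Lemma \ref{lem:poly} to translate the pointwise identity $\sum_{g=1}^G \alpha_g p(\by;\bTheta_g,\bmu_g)\equiv 0$ on $\mathbb{N}^p$ into an algebraic identity about exponential generating functions, and then use Lemmas \ref{goodvec} and \ref{lem:subspace} to peel off the coefficients one by one. Concretely, suppose $\sum_g \alpha_g p(\by;\bTheta_g,\bmu_g)=0$ for every $\by\in\mathbb{N}^p$. Multiplying by the non-negative factor $\prod_{j=1}^p\phi(n_j,y_j)$ and summing over $\by$ (Fubini is legitimate because each summand is non-negative and, by Lemma \ref{lem:poly}, each component expectation is finite), I obtain
\begin{equation*}
\sum_{g=1}^G \alpha_g \exp\!\left(N\trans\bmu_g + N\trans\bTheta_g^{-1} N/2\right)=0 \qquad \text{for every } N\in\mathbb{N}^p.
\end{equation*}

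Next I reduce this $p$-dimensional identity to the one-variable form that matches the hypothesis of Lemma \ref{goodvec}. Fix an index $s\in\{1,\dots,G\}$. Since $\bmu_s\neq\bmu_g$ for $g\neq s$, each hyperplane $\mathcal{M}_g=\{x\in\mathbb{R}^p:x\trans(\bmu_s-\bmu_g)=0\}$ is a proper linear subspace of $\mathbb{R}^p$. Applying Lemma \ref{lem:subspace} to the $G-1$ subspaces $\{\mathcal{M}_g\}_{g\neq s}$ produces a non-negative integer vector $\gamma_s\notin\bigcup_{g\neq s}\mathcal{M}_g$; in particular $\gamma_s\neq 0$ since the origin lies in every $\mathcal{M}_g$. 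Substituting $N=z\gamma_s$ with $z\in\mathbb{N}$ transforms the displayed identity into
\begin{equation*}
\sum_{g=1}^G \alpha_g \exp\!\left(\xi_g z + \sigma_g z^2/2\right)=0, \qquad z\in\mathbb{N},
\end{equation*}
where $\xi_g:=\gamma_s\trans\bmu_g$ and $\sigma_g:=\gamma_s\trans\bTheta_g^{-1}\gamma_s$. By the choice of $\gamma_s$ we have $\xi_s\neq\xi_g$ for every $g\neq s$, so $(\xi_1,\dots,\xi_G)\trans$ is a good vector with $s$ as a good index; Condition C1 combined with $\gamma_s\neq 0$ ensures $\sigma_g>0$ for all $g$. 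Lemma \ref{goodvec} then delivers $\alpha_s=0$, and since $s\in\{1,\dots,G\}$ was arbitrary, $\alpha_1=\cdots=\alpha_G=0$, proving linear independence.

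The main obstacle I anticipate is the passage from the functional identity to the exponential-quadratic identity above. Because the Poisson log-normal distribution has no finite moment generating function, standard Laplace-transform or characteristic-function techniques do not apply. The factorial-moment identity of Lemma \ref{lem:poly} supplies exactly the right substitute, producing closed-form expressions of the form $\exp(N\trans\bmu+N\trans\bTheta^{-1}N/2)$; once these are available, the combinatorial separation afforded by Lemma \ref{lem:subspace} makes it possible to isolate one component at a time via a single well-chosen integer direction $\gamma_s$, after which Lemma \ref{goodvec} finishes the argument.
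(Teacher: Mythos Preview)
Your proof is correct and follows essentially the same route as the paper: use the factorial-moment identity of Lemma \ref{lem:poly} to convert the linear dependence into $\sum_g \alpha_g \exp(N\trans\bmu_g + N\trans\bTheta_g^{-1}N/2)=0$, invoke Lemma \ref{lem:subspace} to find a direction $N$ that separates one $\bmu_s$ from the rest, and then apply Lemma \ref{goodvec}. The only difference is cosmetic: the paper uses induction on $G$ (kill one $\alpha_s$, then apply the $G{-}1$ case), whereas you run the separation argument once for each index $s$, which is slightly cleaner and avoids the induction wrapper.
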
 

\begin{proof}[of Theorem  1  part I]
	By \cite{yakowitz1968identifiability}, under Condition 1-3, the identifiability of the mixture Poisson log-normal model is equivalent to the linear independence of the Poisson log-normal components. Thus, we aim to prove   Proposition \ref{prop:density}.
	We prove this by mathematical induction.
	
	The independence for $G=1$ is trivial.  
	Now we assume that Proposition \ref{prop:density} holds for $G-1$. For any $\bmu_1,\dots, \bmu_G$ that are bounded and different from each other, if  we can prove that there exists $\balpha = (\alpha_1, \dots, \alpha_{G})\trans$ and an index $s \in \{1,\dots,G\}$ such that 
	$\sum_{g=1}^G \alpha_g p\left(\by;\bTheta_g, \bmu_g\right) = 0 \mbox{ and } \alpha_s=0,$
	then by induction, we have  $\balpha = 0$ and hence $p(\by;  \bTheta_g, \bmu_g)$ ($g=1,\dots,G$) are linearly independent. So our goal is to prove that if $\sum_{g=1}^G \alpha_g p(\by;\bTheta_g, \bmu_g) = 0$, then we can always find an index $s$ such that $\alpha_s = 0$.

	Let ${N} = (n_1,\dots, n_p)\trans$ be any non-negative integer vector.  Then, for any positive integer $z$,  by Lemma \ref{lem:poly},  there exists a polynomial function $p_z(\by) = \prod_{j=1}^p  \phi(z n_j,y_j), z\in \mathbb{N} $ such that $$\sum_{g=1}^G \alpha_g \Ex_g\left\{p_z(\by)\right\} = 0 \mbox{ and } \Ex_g\left\{p_z(\by)\right\} = \exp\left(z N\trans \bmu_g + z^2{N}\trans \bTheta^{-1}_g {N} /2\right),$$
	where $\Ex_g$ represents taking expectation with respect to  ${\rm PLN}\left(\bTheta_g, \bmu_g\right)$.  Let 
	$$\bxi = \left(N\trans \bmu_1, \dots,  N\trans \bmu_G\right) \ \mbox{and} \ \bsigma = \left({N}\trans \bTheta^{-1}_1 N/2 , \dots, {N}\trans \bTheta^{-1}_G {N}/2\right).$$ By  Lemma \ref{goodvec}, if  there exists an $N$ such that $\bxi$ is a good vector with good index $s$, then $\alpha_s = 0$ and we complete the proof. If, on the other hand,  $\bxi = \left(N\trans \bmu_1, \dots,  N\trans \bmu_G\right)$ is not a good vector for any non-negative integer vector $N$.  Therefore, for any $N$, there exists $s
	\neq 1$ such that $ N\trans \bmu_1 =  N\trans \bmu_s$. Thus, $N$ is the solution to the linear equation 
	${ x}\trans( \bmu_1 - \bmu_s) = 0$. We define $\mathcal{M}_g$ as the linear space consisting of solutions to the linear equation ${ x}\trans\left(\bmu_1 - \bmu_g\right) = 0$ ($g \neq 1$) and $\mathcal{M} = \cup_{g=2}^G \mathcal{M}_g$.  Thus, for any non-negative integer vector $N$, we have $N \in \mathcal{M}$. Since $\bmu_g$ are different form each other, then $\dim{\mathcal{M}_g} = p-1$ and $\mathcal{M}_g$ is a proper subspace of $\mathbb{R}^p$. This is contradictory to Lemma \ref{lem:subspace}.  So there exists an $N$ such that $\bxi = \left(N\trans \bmu_1, \dots,  N\trans \bmu_G\right)$ is  a good vector and hence for any $\bmu_g$ ($g=1,\dots,G$) that are different from each other, $p(\by;\bTheta_1, \bmu_1), \dots, p(\by; \bTheta_{G}, \bmu_{G})$ are linearly independent. 
\end{proof}	

\begin{proof}[of Lemma \ref{goodvec}]
	Without loss of generality,  we assume that $(\xi_g,\sigma_g )$ ($g=1,\dots,G$) are increasingly ordered  (first by $\xi$ then by $\sigma$). We say that $(\xi_g,\sigma_g)$ and $(\xi_s,\sigma_s)$ are equivalent if $(\xi_g,\sigma_g) = (\xi_s,\sigma_s)$. By this equivalence relationship, 
	$\{(\xi_g,\sigma_g ) \}_{g=1}^G$ can be partitioned into $Q$ groups ($Q\geq 1$). Let $S_q$ be the index set of the $q$-th group. We have 
	$\sum_{q=1}^{Q} \sum_{j \in S_q} \alpha_j \exp(\xi_j z +  \sigma_j z^2/2) = 0$
	for all $z \in \mathbb{N}$.  
	Dividing $\exp(\xi_{G}z +  \sigma_{G} z^2/2) $  on both sides of the above equation, we get
	\begin{equation}\label{equ3}
		\sum_{q=1}^{Q-1} \sum_{j \in S_q} \alpha_j\exp(\xi_j z + \sigma_j z^2/2 - \xi_{G}z-  \sigma_{G} z^2/2) + \sum_{j \in S_{Q}} \alpha_j = 0
	\end{equation}  
	for all $z \in \mathbb{N}$.  By the choice of $\sigma_{G}, \xi_{G} $, the first summation of  (\ref{equ3}) converges to zero when $z$ goes to infinity. So, we have $\sum_{j \in S_{Q}} \alpha_j = 0$. By  mathematical induction, we have 
	$\sum_{j \in S_q} \alpha_j  = 0$ for $q = 1,\dots, Q$.  Since $\bxi$ is a good vector with a good index $s$, ($\xi_s,\sigma_s$) itself forms a group, and hence $\alpha_s = 0$.  
\end{proof}

\begin{proof}[of Lemma \ref{lem:subspace}]
	We prove by mathematical induction. The conclusion clearly holds for $n=1$.  Now we assume that Lemma \ref{lem:subspace}  holds for $n$ and we aim to prove that it also holds for $n+1$. 
	
	By induction hypothesis, we can take $\balpha \in \mathbb{N}^p	\setminus \bigcup_{i=1}^n \mathcal{M}_i$. If $\balpha \not\in \mathcal{M}_{n+1}$, we have $\balpha  \not\in \bigcup_{i=1}^{n+1} \mathcal{M}_i$, and the proof is finished.  Thus, we only need to consider  $\balpha \in \mathcal{M}_{n+1}$. Similarly, we can take ${{\beta}} \in \mathbb{N}^p	\setminus \bigcup_{i=2}^{n+1} \mathcal{M}_i$ and ${{\beta}} \in \mathcal{M}_1$.  
	For any $i\neq 1$, we can prove that  there is at most one $k_1$ such that $\balpha+k_1{\beta}\in  \mathcal{M}_i$. 
	In fact, if there are $k_1, k_2$ such that $k_1 \neq k_2$ and $\balpha + k _1{{\beta}}  \in  \mathcal{M}_i, \balpha + k _2{{\beta}}  \in  \mathcal{M}_i$, then 
	${{\beta}} \in   \mathcal{M}_i$, which is contradictory to the fact that ${{\beta}} \in \mathbb{N}^p	\setminus \bigcup_{i=2}^{n+1} \mathcal{M}_i$.   Furthermore, there is no $k\in \mathbb{N}$ such that $\balpha+k{{\beta}}\in \mathcal{M}_1$. If otherwise, there exists a  $k \in \mathbb{N}$ such that $\balpha + k {{\beta}}  \in  \mathcal{M}_1$. Then, we have
	$\balpha \in  \mathcal{M}_1$, which is also a contradiction. So we could find at most $n$ positive integers $k$  such that $\balpha + k {{\beta}} \in  \bigcup_{i=1}^{n+1} \mathcal{M}_i$. Since 
	there are infinitely many non-negative numbers, 
	we prove that
	there exists $k \in \mathbb{N}$ such that $\balpha + k {{\beta}} \not\in  \bigcup_{i=1}^{n+1} \mathcal{M}_i$, and Lemma \ref{lem:subspace}  is proved. 
\end{proof}

\subsection{Proof of Theorem  1,  Part II}
To prove this conclusion, we need to give the explicit formula for the score functions and the Fisher information matrices of the Poisson log-normal model  and mixture Poisson log-normal model.
The Hessian matrix $\bH\left(\bTheta, \by\right)$  of the Poisson log-normal model is a $p(p+1)/2 \times p(p+1)/2$ matrix. For  notational convenience, we let $\bH_{(i,j)(i',j')} = \partial^2 \ell(\bTheta, \by)/(\partial \Theta_{i'j'}\partial \Theta_{ij})$, $(i \leq j, i' \leq j')$ as the element at the $(2p-i+1)i/2 - p + j$ row and $(2p-i'+1)i'/2 - p + j'$  column of the Hessian matrix. 
It is clear that the densities of the Poisson log-normal distribution and mixture Poisson log-normal distribution satisfy the regularity conditions in \cite{shao2003mathematical}. Then, we can calculate the score function and the Fisher information as follows.
The score function  of the Poisson log-normal distribution can be   written as  
\begin{align*}
	\mathcal{S}(\bTheta, \by) = & 2^{-1} \vectwo\left( \bTheta^{-1}\right) \\
	& - 2^{-1} \frac{\int \exp \left\lbrace -2^{-1} (\bx-\bmu)\trans \bTheta (\bx-\bmu)\right\rbrace  \vectwo \left\lbrace (\bx-\bmu)(\bx-\bmu)\trans\right\rbrace h(\by,\bx)d\bx}{\int \exp \left\lbrace -2^{-1} (\bx-\bmu)\trans \bTheta (\bx-\bmu)\right\rbrace h(\by,\bx)d\bx}.
\end{align*}
Especially,  at the true parameter  $\bTheta^*$ , we have
$$\mathcal{S}(\bTheta^*,  \by) = 2^{-1} \Ex_{\bx}\left\{\vectwo\left(\bTheta^{*-1} - (\bx-\bmu)(\bx-\bmu)\trans \right) \mid \by\right\}.$$ 
We use $(i,j), i \leq j$ to index $\mathcal{S}$. Using the operator $\mathcal{T}$,  the element of the score function at $(2p-i+1)i/2 - p + j$ can be rewritten as
$\mathcal{S}_{(i,j)}\left(\bTheta,  \by\right) =2^{-1} \vectwo\left(\bTheta^{-1}\right)_{(i,j)} - 2^{-1} \frac{ \mathcal{T}\left[\vectwo \left\lbrace (\bx-\bmu)(\bx-\bmu)\trans\right\rbrace _{(i,j)}\right]}{\mathcal{T}\left(\mathbb{I}\right)}.$
Using the operator $\mathcal{T}$, the Fisher information matrix can be written as follows.  Let $\Sigma = \bTheta^{-1}.$
When $i=j, i'=j'$, 
\begin{align*}
	\bH_{(i,i)(i',i')}\left(\bTheta, \by\right) &=  -2^{-1}\Sigma_{ii'}\Sigma_{i'i} + \frac{1}{4} \frac{\mathcal{T} \left\lbrace {({\bx - \bmu})}_{i}^2({\bx - \bmu})_{i'}^2\right\rbrace  }{ \mathcal{T}\left(\mathbb{I}\right)} \\
	& - \frac{1}{4} \frac{ \mathcal{T} \left\lbrace ({\bx - \bmu})_{i'}^2\right\rbrace  \mathcal{T} \left\lbrace {({\bx - \bmu})}_{i}^2\right\rbrace  }{ \mathcal{T}^2\left(\mathbb{I}\right)}.
\end{align*}
When $i\neq j, i'=j'$, 
\begin{align*}
	\bH_{(i,j)(i',i')}\left(\bTheta, \by\right) &=  -\Sigma_{ii'}\Sigma_{i'j} + 2^{-1}  \frac{ \mathcal{T} \left\lbrace {({\bx - \bmu})}_{i}{({\bx - \bmu})}_{j}({\bx - \bmu})_{i'}^2\right\rbrace }{ \mathcal{T}\left(\mathbb{I}\right)} \\
	&- 2^{-1} \frac{ \mathcal{T} \left\lbrace ({\bx - \bmu})_{i'}^2\right\rbrace \mathcal{T} \left\lbrace {({\bx - \bmu})}_{i}{({\bx - \bmu})}_{j}\right\rbrace }{ \mathcal{T}^2\left(\mathbb{I}\right)}.
\end{align*}
When $i=j, i'\neq j'$, 
\begin{align*}
	\bH_{(i,i)(i',j')}\left(\bTheta, \by\right) &=  -\Sigma_{ii'}\Sigma_{j'i} + 
	2^{-1}  \frac{\mathcal{T} \left\lbrace {({\bx - \bmu})}_{i'}{({\bx - \bmu})}_{j'}({\bx - \bmu})_{i}^2\right\rbrace  }{ \mathcal{T}\left(\mathbb{I}\right)} \\
	&- 2^{-1} \frac{ \mathcal{T} \left\lbrace ({\bx - \bmu})_{i}^2\right\rbrace  \mathcal{T} \left\lbrace {({\bx - \bmu})}_{i'}{({\bx - \bmu})}_{j'}\right\rbrace  }{ \mathcal{T}^2(\mathbb{I})}.
\end{align*}
When $i\neq j, i'\neq j'$, 
\begin{align*}
	\bH_{(i,j)(i',j')}\left(\bTheta, \by\right) &= -(\Sigma_{ii'}\Sigma_{j'j} + \Sigma_{ij'}\Sigma_{i'j}) + 
	\frac{ \mathcal{T} \left\lbrace {({\bx - \bmu})}_{i'}{({\bx - \bmu})}_{j'}({\bx - \bmu})_{i}({\bx - \bmu})_{j}\right\rbrace  }{ \mathcal{T}\left(\mathbb{I}\right)} \\
	&- \frac{ \mathcal{T} \left\lbrace ({\bx - \bmu})_{i}({\bx - \bmu})_{j}\right\rbrace  \mathcal{T} \left\lbrace {({\bx - \bmu})}_{i'}{({\bx - \bmu})}_{j'}\right\rbrace }{ \mathcal{T}^2\left(\mathbb{I}\right)}.
\end{align*}
\begin{lemma}\label{lem:PLNFisher}
	Assume $\by \sim {\rm PLN}(\bTheta, \bmu)$. Under Condition 1-3,  there exist two polynomial functions $K_1(\by),K_2(\by)$  with $\Ex \left( K_1(y)\right) <\infty$ and $\Ex\left( K_2(y)\right) <\infty$ such that   for any $i, j, i^\prime, j^\prime$, $|\mathcal{S}_{(i,j)} (\bTheta, \by )|\leq K_1(\by)$, $|\bH_{(i,i)(i',i')}(\bTheta,\by)| \leq K_2(\by)$.
	
\end{lemma}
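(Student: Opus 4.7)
The plan is to majorize each term appearing in the explicit formulas for $\mathcal{S}(\bTheta, \by)$ and $\bH(\bTheta, \by)$ given just above the lemma, separately bounding (a) the $\Sigma$-entries coming from differentiating $\log\det\bTheta$, which under Condition C1 are bounded by constants depending only on $m$, and (b) the remaining fraction-terms $\mathcal{T}\{\psi(\bx)\}/\mathcal{T}(\mathbb{I})$ and $\mathcal{T}\{\psi_1\}\mathcal{T}\{\psi_2\}/\mathcal{T}^2(\mathbb{I})$, to which I apply Lemma \ref{pdim} together with Remark \ref{pdimremark}. Since the index set for $(i,j,i',j')$ is finite and the polynomial majorants depend only on $p$, $m$, $M$, taking a finite sum of the pointwise bounds will produce a single polynomial $K_1$ (resp.\ $K_2$) that dominates every component of $\mathcal{S}$ (resp.\ of the relevant Hessian entries) and has a finite expectation under $\bTheta^*$ by Lemma \ref{lem:poly}.

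For the score function, the first term $2^{-1}\vectwo(\bTheta^{-1})_{(i,j)}$ is bounded by $1/(2m)$ under Condition C1. The second term is a ratio of the form $\mathcal{T}\{(x_i-\mu_i)(x_j-\mu_j)\}/\mathcal{T}(\mathbb{I})$, which is a degree-$2$ instance of Remark \ref{pdimremark} and is therefore dominated by a polynomial $g_{ij}(\by)$ in $\by$ with $\Ex_{\bTheta^*}\{g_{ij}(\by)\}<\infty$. Summing $g_{ij}$ over all $1\le i\le j\le p$ and adding the constant $1/(2m)$ gives $K_1(\by)$.

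For the Hessian, I handle the four index cases uniformly: each bilinear $\Sigma$ product $\Sigma_{ii'}\Sigma_{j'j}$ etc.\ is bounded by $1/m^2$ under Condition C1, while every remaining contribution is of one of two types handled by Remark \ref{pdimremark} — either a single ratio $\mathcal{T}\{\psi\}/\mathcal{T}(\mathbb{I})$ with $\psi$ a polynomial of degree at most $4$ in $\bx-\bmu$, or a product-ratio $\mathcal{T}\{\psi_1\}\mathcal{T}\{\psi_2\}/\mathcal{T}^2(\mathbb{I})$ with each $\psi_k$ of degree at most $2$. In each case the remark supplies a polynomial dominating function $g(\by)$ in $\by$ with $\Ex_{\bTheta^*}\{g(\by)\}<\infty$. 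Taking $K_2(\by)$ to be the sum of these polynomials (again over the finite index set) plus the constant $1/m^2$ contributions yields the required dominating polynomial for all $\bH_{(i,i)(i',i')}(\bTheta,\by)$.

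The principal obstacle is ensuring that the polynomial majorants can be chosen uniformly in $\bTheta$ and $\bmu$ over the feasible set defined by Conditions C1--C3, rather than pointwise. This is exactly the content of the three-step proof of Lemma \ref{pdim}: the lower bound on $\mathcal{T}(\mathbb{I})$ via Lemma \ref{1dim} uses only $\lambda_{\max}(\bTheta)\le M$ and $\|\bmu\|_\infty\le M$; the inner polynomial bound on the set $\{\|\bx-\bmu\|_2\le 4\|\by\|_2/m\}$ uses only $m$; and the Gaussian-tail bound on the complement uses only $\lambda_{\min}(\bTheta)\ge m$. Consequently the resulting coefficients depend only on $p$, $m$, $M$, so the same $K_1, K_2$ work for every $\bTheta$ and $\bmu$ satisfying Condition 1--3, and their finite expectation follows from the fact that $\Ex_{\bTheta^*}\{\|\by\|_2^{2k}\}<\infty$ for every $k$, a direct consequence of Lemma \ref{lem:poly}.
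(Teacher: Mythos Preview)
Your proposal is correct and follows essentially the same approach as the paper: bound the $\Sigma$-terms by constants via Condition C1 and invoke Lemma \ref{pdim}/Remark \ref{pdimremark} for the $\mathcal{T}\{\psi\}/\mathcal{T}(\mathbb{I})$ and $\mathcal{T}\{\psi_1\}\mathcal{T}\{\psi_2\}/\mathcal{T}^2(\mathbb{I})$ terms. The paper's proof is terser---it writes out only the score-function case and dispatches the Hessian with ``Others can be proved similarly''---whereas you spell out the Hessian bounds and the uniformity in $(\bTheta,\bmu)$ explicitly, but the underlying argument is identical.
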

Now we consider the score function and the Fisher information matrix of the mixture Poisson log-normal distribution.  The score function  of the mixture Poisson log-normal distribution can be written as
$$\mathcal{S}^M(\bnu, \by) = L_M(\bnu, \by)^{-1}\left({\pi_1 \frac{\partial L_1(\bnu_1, \by)}{\partial\bnu_1}}, \ldots,{\pi_G \frac{\partial L_G(\bnu_G, \by)}{\partial\bnu_G}}\right) = \left\{\mathcal{S}^M_1(\bnu,\by), \dots, \mathcal{S}^M_G(\bnu, \by)\right\}.$$
Similarly, we use $(g,i,j), i\leq j$ to index $\mathcal{S}^M$.
Recall that $\bnu_g= \vecone(\bTheta_{g})$.  We use $(g,i,j),(g',i',j'), i\leq j, i'\leq j'$ to index the  element at the $(g-1)p(p+1)/2 + (2p-i+1)i/2 - p + j$ row and $(g'-1)p(p+1)/2 +(2p-i'+1)i'/2 - p + j'$  column of $F(\bnu, \by)$, respectively.  

\noindent
When $g = g'$
\begin{align}
	\F_{(g,i,j,g,i',j')}(\bnu,\by) &= \frac{\partial}{{\partial\bTheta_{g,i'j'}}}\mathcal{S}^M_{(g,i,j)}(\bnu, \by) \notag \\
	&= \frac{\pi_g L_g(\bTheta_g, \by)}{L_M(\bnu, \by)}  \left\lbrace   H_{(i,j)(i',j')}(\bTheta_g,\by) +  \mathcal{S}_{(i',j')}(\bTheta_g, \by	) \mathcal{S}_{(i,j)}(\bTheta_g, \by) \right\rbrace \notag\\
	&- \frac{\left\lbrace \pi_gL_g(\bTheta_g, \by)\right\rbrace ^2}{L_M(\bnu, \by)^2} 
	\mathcal{S}_{(i',j')}(\bTheta_g, \by	) \mathcal{S}_{(i,j)}(\bTheta_g, \by).
\end{align}
When $g \neq g'$, we have
\begin{align}
	\F_{(g,i,j,g',i',j')}(\bnu,\by) &= \frac{\partial}{{\partial\bTheta_{g',i'j'}}}\mathcal{S}^M_{(g,i,j)}(\bnu, \by)\notag \\
	&=- \frac{\pi_g\pi_g'L_g(\bTheta_g, \by)L_{g'}(\bTheta_{g'}, \by)}{L_M(\bnu, \by)^2} 
	\mathcal{S}_{(i',j')}(\bTheta_{g'}, \by	) \mathcal{S}_{(i,j)}(\bTheta_g, \by).
\end{align}
Recall the definition (\ref{equ:D}) of $\D(\bnu)$. Then, 
$\bGamma(\bnu^*) = -\D(\bnu^*) = \Ex \left\{\mathcal{S}^M(\bnu^*, \by) \mathcal{S}^M(\bnu^*, \by)\trans\right\}$
is the Fisher information matrix  of the mixture Poisson log-normal distribution at $\bnu^*$.
\begin{lemma}\label{lem:MPLNFisher}
	Assume $\by \sim {\rm MPLN}(\bnu, \bmu)$. Under Condition 1-3,  there exists a polynomial function $K(\by)$  with $\Ex\left( K(y)\right) <\infty$ such that   for any $g,i, j, g^\prime, i^\prime, j^\prime$, 	$|\F_{(g,i,j,g',i',j')}(\bnu,\by)| \leq K(\by)$.
	
\end{lemma}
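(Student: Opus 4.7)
\textbf{Proof proposal for Lemma \ref{lem:MPLNFisher}.}

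The plan is to decouple the mixture structure from the single-component Poisson log-normal structure, reduce the bound on $|\F_{(g,i,j,g',i',j')}(\bnu,\by)|$ to bounds on PLN scores and Hessians, and then invoke Lemma \ref{lem:PLNFisher}. The central observation is that the mixture weight ratios appearing in the formulas for $\F$ are uniformly bounded by $1$: since $L_M(\bnu,\by) = \sum_{h=1}^G \pi_h L_h(\bnu_h,\by)$ with each summand non-negative, one has
\[
0 \leq \frac{\pi_g L_g(\bnu_g,\by)}{L_M(\bnu,\by)} \leq 1, \qquad 0 \leq \frac{\pi_g \pi_{g'} L_g(\bnu_g,\by) L_{g'}(\bnu_{g'},\by)}{L_M(\bnu,\by)^2} \leq 1,
\]
deterministically on the support of $\by$. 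This removes the normalising constants $L_M$ from the analysis.

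With these weight bounds, the explicit expressions for $\F$ given just before the lemma yield, in the diagonal-block case $g=g'$,
\[
|\F_{(g,i,j,g,i',j')}(\bnu,\by)| \leq |\bH_{(i,j)(i',j')}(\bTheta_g,\by)| + 2\,|\mathcal{S}_{(i,j)}(\bTheta_g,\by)|\cdot|\mathcal{S}_{(i',j')}(\bTheta_g,\by)|,
\]
and in the cross-block case $g\neq g'$,
\[
|\F_{(g,i,j,g',i',j')}(\bnu,\by)| \leq |\mathcal{S}_{(i,j)}(\bTheta_g,\by)|\cdot|\mathcal{S}_{(i',j')}(\bTheta_{g'},\by)|.
\]
Now I would invoke Lemma \ref{lem:PLNFisher} to dominate each score factor by a polynomial $K_1(\by)$ and each Hessian factor by a polynomial $K_2(\by)$, uniformly in the admissible $(\bTheta,\bmu)$. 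Combining the two cases, one can set $K(\by) = K_2(\by) + 2K_1(\by)^2$, which is itself a polynomial in $\by$ that dominates $|\F_{(g,i,j,g',i',j')}(\bnu,\by)|$ for every index combination.

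It remains to check that $\Ex_{\rm MPLN} K(\by) < \infty$. Since ${\rm MPLN}(\bnu,\bmu)$ is a finite mixture, $\Ex_{\rm MPLN} K(\by) = \sum_{g=1}^{G} \pi_g \Ex_{{\rm PLN}(\bTheta_g^*,\mu_g)} K(\by)$, and Lemma \ref{lem:poly} guarantees that every polynomial in $\by$ has finite expectation under each PLN component (its moments of all orders are finite and expressible via $\exp(N\trans \mu + N\trans \bTheta^{-1} N/2)$). This finishes the argument.

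The only subtle point, and the place I expect to have to be careful, is that Lemma \ref{lem:PLNFisher} as stated gives the Hessian bound for the on-diagonal entries $\bH_{(i,i)(i',i')}$, whereas the proof needs the bound for all four index patterns appearing in the formulas before the lemma. However, each off-diagonal Hessian entry is again a sum of terms of the form $\mathcal{T}(\psi_1)\mathcal{T}(\psi_2)/\mathcal{T}(\mathbb{I})^2$ and $\mathcal{T}(\psi)/\mathcal{T}(\mathbb{I})$ with $\psi,\psi_1,\psi_2$ low-degree polynomials in $\bx-\bmu$, so Remark \ref{pdimremark} (together with the boundedness of $\bSigma = \bTheta^{-1}$ under Condition C1) supplies a polynomial dominating function of the same type as $K_2$. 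This extension is the only non-mechanical part of the argument; everything else is collecting bounds.
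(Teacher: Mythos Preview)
Your proposal is correct and follows essentially the same route as the paper: bound the mixture weight ratios $\pi_g L_g/L_M$ and $\pi_g\pi_{g'}L_gL_{g'}/L_M^2$ by $1$, reduce to the PLN score and Hessian, and invoke Lemma \ref{lem:PLNFisher}. Your write-up is in fact more careful than the paper's two-line argument---you make the cross-block case explicit, construct $K=K_2+2K_1^2$, verify MPLN-integrability via the mixture decomposition and Lemma \ref{lem:poly}, and correctly flag (and resolve via Remark \ref{pdimremark}) that the stated Lemma \ref{lem:PLNFisher} literally covers only $\bH_{(i,i)(i',i')}$ while the general $(i,j)(i',j')$ case is needed.
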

In addition, we require the  following two lemmas.
\begin{lemma}\label{lem:tr}
	Let $\by$ and $\bx$ be random variables as in the Poisson log-normal model (1) in the main manuscript of the paper and $\phi(y,n)$ is the same as Lemma \ref{lem:poly}.
	Let ${{N}} = (n_1,\dots, n_p)\trans$ and $\bT$ be a $p\times p$ matrix.  $\bT \in \mathbb{R}^{p\times p}$.	We have 
	\begin{align*}
		&\quad \Ex\left(\prod_{j=1}^p \phi({n_j},y_j) \tr\left[\bT \left\lbrace \bTheta^{-1} -  ({\bx} -\bmu) ({\bx} -\bmu)\trans \right\rbrace \right] \right) \\ 
		&=\left({{N}}\trans \bTheta^{-1}\bT \bTheta^{-1} {{N}}\right) \exp\left({{N}}\trans \bmu + {{N}}\trans \bTheta^{-1} {{N}}/2\right).
	\end{align*}
	
\end{lemma}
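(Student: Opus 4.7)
The plan is to evaluate the expectation by iterated conditioning, first integrating out $y$ given $x$ via the factorial moments of the Poisson distribution, and then computing the remaining Gaussian expectation by an exponential tilting argument (equivalently, by differentiating the Gaussian moment generating function).

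First I would apply the law of total expectation conditioning on $x$. Since the coordinates $y_j\mid x_j\sim \mathrm{Poisson}(\exp(x_j))$ are independent, and for any $Y\sim\mathrm{Poisson}(\lambda)$ one has $\Ex[\phi(n,Y)]=\lambda^n$ (the same identity used in the proof of Lemma \ref{lem:poly}), we obtain $\Ex[\phi(n_j,y_j)\mid x_j]=\exp(n_j x_j)$ and therefore $\Ex\bigl[\prod_{j=1}^{p}\phi(n_j,y_j)\,\bigm|\,x\bigr]=\exp(N^\T x)$. Because the factor $\tr\bigl[T\{\Theta^{-1}-(x-\mu)(x-\mu)^\T\}\bigr]$ depends only on $x$, the iterated expectation reduces the problem to evaluating
$$\Ex_{x\sim\mathrm{N}(\mu,\Theta^{-1})}\!\Bigl[\exp(N^\T x)\,\tr\bigl\{T\bigl(\Theta^{-1}-(x-\mu)(x-\mu)^\T\bigr)\bigr\}\Bigr].$$

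Next I would compute this Gaussian expectation by exponential tilting. Completing the square in the exponent shows that $\exp(N^\T x)$ times the density of $\mathrm{N}(\mu,\Theta^{-1})$ equals $\exp\bigl(N^\T\mu+\tfrac{1}{2}N^\T\Theta^{-1}N\bigr)$ multiplied by the density of $\mathrm{N}(\mu+\Theta^{-1}N,\Theta^{-1})$. Under this tilted law, $x-\mu$ has mean $\Theta^{-1}N$ and covariance $\Theta^{-1}$, so the second-moment matrix is $\Theta^{-1}+\Theta^{-1}NN^\T\Theta^{-1}$. Subtracting from $\Theta^{-1}$ cancels the covariance piece, leaving only the rank-one outer product $\Theta^{-1}NN^\T\Theta^{-1}$, and applying trace cyclicity converts $\tr(T\,\Theta^{-1}NN^\T\Theta^{-1})$ into the quadratic form $N^\T\Theta^{-1}T\Theta^{-1}N$. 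Multiplying by the tilting normalizer $\exp\bigl(N^\T\mu+\tfrac{1}{2}N^\T\Theta^{-1}N\bigr)$ yields the claimed identity. As an alternative derivation, one could obtain the same conclusion by writing $\Ex[\exp(N^\T x)(x-\mu)(x-\mu)^\T]$ as the Hessian with respect to $s$ of $M(s)=\exp(s^\T\mu+\tfrac{1}{2}s^\T\Theta^{-1}s)$ evaluated at $s=N$, taken with respect to the shift by $\mu$.

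I do not anticipate a real obstacle: the argument is a mechanical Gaussian calculation once the Poisson factorial-moment step has removed the discrete randomness. The only items requiring care are tracking the mean-shift $\Theta^{-1}N$ of the tilted Gaussian, the cancellation of the $\tr(T\Theta^{-1})$ term against the covariance contribution to $\Ex[(x-\mu)(x-\mu)^\T]$ under the tilted measure, and the trace-cyclicity identity $\tr(T\Theta^{-1}NN^\T\Theta^{-1})=N^\T\Theta^{-1}T\Theta^{-1}N$.
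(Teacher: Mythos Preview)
Your proposal is correct and follows essentially the same route as the paper's proof: both first condition on $x$ to replace $\prod_j \phi(n_j,y_j)$ by $\exp(N^\T x)$ via the Poisson factorial-moment identity (exactly as in Lemma~\ref{lem:poly}), and then evaluate the remaining Gaussian expectation $\Ex_x\bigl[\exp(N^\T x)\,\tr\{T(x-\mu)(x-\mu)^\T\}\bigr]$ using the normal moment generating function. Your exponential-tilting presentation is simply a more explicit rendering of what the paper states in one line as ``by the moment generating function of the normal distribution''; the paper also separates out the constant term $\tr(T\Theta^{-1})$ and handles it directly via Lemma~\ref{lem:poly}, whereas you keep both pieces together under the tilted measure, but the computations are identical in substance.
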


\begin{lemma} \label{lem:mixsubspace}
	For any $n>0$, let $\mathcal{M}_i \subset \mathbb{R}^p, i=1,\dots, n$ be $n$ linear proper subspaces. Let $A \neq 0$ be a $p\times p$ symmetric matrix and $\mathcal{V} = \{\by: \by\trans A\by = 0 \}$.  
	Then, there  exists a non-negative integer vector $\bgamma$ such that $\bgamma \not\in \bigcup_{i=1}^n \mathcal{M}_i$ and $\bgamma \not\in \mathcal{V}$. 
\end{lemma}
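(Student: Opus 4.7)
The plan is to reduce the problem to Lemma~\ref{lem:subspace} via a ``line test'': along any affine line $\{\alpha + k\beta\}_{k \in \mathbb{N}}$, a proper linear subspace $\mathcal{M}_i$ is hit at most once (provided $\alpha \notin \mathcal{M}_i$), while the quadric $\mathcal{V}$ is hit at most twice provided the quadratic polynomial
\[
q(k) \;=\; (\alpha + k\beta)\trans A (\alpha + k\beta) \;=\; \alpha\trans A \alpha + 2k\,\alpha\trans A \beta + k^2\, \beta\trans A \beta
\]
does not vanish identically in $k$. If I can arrange for $q$ to have exact degree two, then among infinitely many non-negative integers $k$, all but at most $n+2$ will give a good vector $\gamma = \alpha + k\beta$.

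To set up the line, first I pick $\alpha \in \mathbb{N}^p \setminus \bigcup_{i=1}^n \mathcal{M}_i$ using Lemma~\ref{lem:subspace}. For the direction I want a non-negative integer vector $\beta_0$ with $\beta_0\trans A \beta_0 \neq 0$, so that the leading coefficient of $q$ is nonzero. Such a $\beta_0$ always exists: since $A$ is symmetric and nonzero, either some diagonal entry $A_{ii} \neq 0$ (take $\beta_0 = e_i$), or all diagonals vanish and some off-diagonal $A_{ij} \neq 0$, in which case $(e_i + e_j)\trans A (e_i + e_j) = 2 A_{ij} \neq 0$. With this $\beta_0$, the polynomial $q(k)$ has exact degree two in $k$ and therefore at most two real roots.

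The remaining bookkeeping mirrors the proof of Lemma~\ref{lem:subspace}: for each $i$, since $\alpha \notin \mathcal{M}_i$, the equation $\alpha + k \beta_0 \in \mathcal{M}_i$ has at most one solution $k$ (none at all if $\beta_0 \in \mathcal{M}_i$, since then $\alpha + k \beta_0 \in \mathcal{M}_i$ would force $\alpha \in \mathcal{M}_i$). Combined with the at-most-two values ruled out by $\mathcal{V}$, only finitely many $k \in \mathbb{N}$ are forbidden, so any sufficiently large $k$ yields $\gamma = \alpha + k \beta_0 \in \mathbb{N}^p$ that avoids both $\bigcup_{i=1}^n \mathcal{M}_i$ and $\mathcal{V}$.

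The main subtlety is producing $\beta_0$: Lemma~\ref{lem:subspace} itself cannot be invoked to avoid $\mathcal{V}$, because $\mathcal{V}$ is an algebraic variety rather than a linear subspace. The elementary fact that a nonzero symmetric bilinear form must take a nonzero value on one of the standard basis vectors $e_i$ or on one of the pairs $e_i + e_j$ is what bridges the gap and places the required ``non-isotropic direction'' inside $\mathbb{N}^p$; once this is in hand, the rest is a direct extension of the linear-algebraic counting used in Lemma~\ref{lem:subspace}.
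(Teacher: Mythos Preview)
Your proof is correct and follows essentially the same line-test argument as the paper: pick $\alpha\in\mathbb{N}^p$ avoiding the $\mathcal{M}_i$ via Lemma~\ref{lem:subspace}, choose a direction $\beta$ with $\beta\trans A\beta\neq 0$, and count that at most $n+2$ values of $k$ are excluded along $\{\alpha+k\beta\}_{k\in\mathbb{N}}$. Your explicit construction of $\beta_0\in\mathbb{N}^p$ as $e_i$ or $e_i+e_j$ is in fact a bit cleaner than the paper, which merely asserts the existence of $\beta\notin\mathcal{V}$ without verifying that such $\beta$ can be taken in $\mathbb{N}^p$.
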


\begin{proof}[of Theorem  1,  part II]
	Observe that 
	$\bGamma(\bnu^*) = \Ex_{\by}\left\{\mathcal{S}_M(\bnu^*, \by) \mathcal{S}_M(\bnu^*, \by)\trans\right\}.$
	If there exists a non-zero vector $\bt = (\bt_1,\dots, \bt_G)\trans$ such that 
	$\Ex(\bt \trans \mathcal{S}^M(\bnu^*, \by) \mathcal{S}^M(\bnu^*, \by)\trans \bt) = 0$, then we aim to prove that $\bt = 0$.  
	Since y is a discrete random variable,  it follows that for any $\by$, $\bt \trans \mathcal{S}^M(\bnu^*, \by) = 0$.   Then, we have 
	${L_M(\bnu, \by)} ^{-1}\sum_{g=1}^{G} \bt_g  {\pi_g \frac{\partial L_g(\bnu_g, \by)}{\partial\bnu_g}}= 0.$
	Since $L_M(\bnu, \by) \neq 0$, we have 
	$$\sum_{g=1}^{G} \bt_g \trans{\pi_g \frac{\partial L_g(\bnu_g, \by)}{\partial\bnu_g}} = 0.$$
	Let $\psi(\by) = \prod_{j=1}^p  \phi(z n_j,y_j), z\in \mathbb{N}^p$.  Then, we have 
	$\sum_{g=1}^{G} \bt_g \trans{\pi_g \frac{\partial \log L_g(\bnu_g, \by)}{\partial\bnu_g}} L_g(\bnu_g, \by)\psi(\by)= 0.$
	Since $L_r(\bnu_r, \by)$ is proportional to the density of the Poisson log-normal model  with parameters $\bTheta_g, \bmu_g$, 
	the above equation can be rewritten as 
	$\sum_{g=1}^{G} \bt_g \trans{\pi_g \frac{\partial \log L_g(\bnu_g, \by)}{\partial\bnu_g}} p({\by} ; \bTheta_g, \bmu_g) \psi(\by)= 0.$
	Summing over $\by$, we get 
	$\sum_\by \sum_{g=1}^{G} \bt_g \trans{\pi_g \frac{\partial \log L_g(\bnu_g, \by)}{\partial\bnu_g}} p({\by} ; \bTheta_g, \bmu_g) \psi(\by)= 0.$
	By Fubini 's Theorem, we get
	
	\begin{equation}\label{prop2.1}
		\sum_{g=1}^{G}\sum_{\by} \bt_g \trans{\pi_g \frac{\partial \log L_g(\bnu_g, \by)}{\partial\bnu_g}} p({\by} ; \bTheta_g, \bmu_g) \psi(\by)= 0. 
	\end{equation}
	Then, let ${N} =  (n_1,\dots, n_p)\trans$ and  $\bT_g$ be the symmetric matrix  such that $\vecone(\bT_g) = \bt_g$. For a fixed $g$, we have 
	\begin{align*}
		&\quad \sum_{{\by}} \bt_g \trans{ \frac{\partial \log L_g(\bnu_g, {\by} )}{\partial\bnu_g}} p({\by} ; \bTheta_g, \bmu_g)\psi({\by} ) = \Ex_{{\by}_g} \left\{\psi(\by_g) \bt_g \trans{ \frac{\partial \log L_g(\bnu_g, \by_g )}{\partial\bnu_g}}\right\} \\
		&=2^{-1} \Ex_{\bx_g, \by_g} \left[\psi(\by_g) \tr\left\lbrace \bT_g\bTheta^{*-1}_g - \bT_g (\bx_g -\bmu) (\bx_g -\bmu)\trans\right\rbrace\right],
	\end{align*}
	where $\by_g$ follows the Poisson log-normal distribution with parameters $\bTheta_g$ and $\bmu_g$,  $\bx_g \sim {\rm N}\left(\bmu_g, \bTheta_g^{-1}\right)$ is the corresponding latent variable.
	By Lemma \ref{lem:tr}, we get 
	$$\Ex_{\bx_g, \by_g}\left[\psi(\by_g) \bt_g \trans{ \frac{\partial \log L_g(\bnu_g, \by_g)}{\partial\bnu_g}}\right] = z^2\left({{N}}\trans \bTheta^{-1}_g\bT_g  \bTheta^{-1}_g {{N}}\right) \exp\left(z{{N}}\trans \bmu_g +  z^2{{N}}\trans \bTheta_g^{-1} {{N}}/2\right).$$
	Then, (\ref{prop2.1}) can be rewritten as, for all $z \in\mathbb{N}$, 
	$$ \sum_{g=1}^{G}\pi_gz^2 \left({{N}}\trans \bTheta^{-1}_g\bT_g  \bTheta^{-1}_g {{N}}\right) \exp\left(z{{N}}\trans \bmu_g+  z^2{{N}}\trans \bTheta_g^{-1} {{N}}/2\right)= 0.$$
	In order to show $\bT_1 = 0$,  similar to the proof of the first conclusion,  we define $\mathcal{M}_g$ as   linear space consisting of  solutions to the linear equation ${ x}\trans( \bmu_1 - \bmu_g) = 0$ ($g= 2, \dots, G$) and $\mathcal{M} = \cup_{g=2}^G \mathcal{M}_g$. For any $N \not\in \mathcal{M} $, then $(N\trans \bmu_1, \ldots, N\trans \bmu_G)$ is a good vector with a good index $1$.  Since $\pi_1 > 0$,  we must have ${{N}}\trans \bTheta^{-1}_1\bT_1  \bTheta^{-1}_1 {{N}} = 0$.  By Lemma \ref{lem:mixsubspace},  if $\bTheta^{-1}_1 \bT_1  \bTheta^{-1}_1$ is not a zero matrix, then there exists an ${{N}}$ such that $N \not\in \mathcal{M} $ and ${{N}}\trans \bTheta^{-1}_1\bT_1  \bTheta^{-1}_1 {{N}} \neq 0$, which is contradictory to the fact that $N \not\in \mathcal{M} $ implies  ${{N}}\trans \bTheta^{-1}_1\bT_1  \bTheta^{-1}_1 {{N}} = 0$. Hence, we must have $\bTheta^{-1}_1 \bT_1  \bTheta^{-1}_1 = {0}$ and thus $\bT_1 ={0}$.  Similarly, we get $\bT_g ={0}$ for all $g = 1,\ldots, G$.  It follows that $\bt  = {0}$, and we compete the proof. 
\end{proof}

\begin{proof}[of Lemma \ref{lem:PLNFisher}]
	We only prove that there is a dominating function for $\mathcal{S}_{(i,j)}(\bTheta,\by)$. Others can be proved similarly.  Since  $m \leq \lambda_{min}(\bTheta)\leq\lambda_{max}(\bTheta)\leq M$,  we have   $\lambda_{max}(\bSigma)\leq 1/m$ and thus 
	$\vectwo(\bTheta^{-1})_{(i,j)}$ is bounded by a constant. 
	Further, since
	$\vectwo \left\lbrace (\bx-\bmu)(\bx-\bmu)\trans\right\rbrace _{(i,j)}$ is a polynomial function of $\bx$,   by  Remark \ref{pdimremark}, we have 
	$2^{-1} { \mathcal{T}\left[\vectwo \left\lbrace (\bx-\bmu)(\bx-\bmu)\trans \right\rbrace_{(i,j)}\right]}\big /{\mathcal{T}\left(\mathbb{I}\right)}$
	can be bounded by an integrable  polynomial function and we prove the  existence of $K_1(\by)$. 
\end{proof}

\begin{proof}[of Lemma \ref{lem:MPLNFisher}]
	If $g = g'$, since $ \pi_gL_g(\bnu, \by) / L_M(\bnu, \by)  \leq 1 $,  then we have, 	
	$$
	|\F_{(g,i,j,g,i',j')}(\bnu, \by)|  \leq    |\bH_{(i,j,i'j')}(\bTheta_g, \by )|  + 2|\mathcal{S}_{(i,j)}(\bTheta_g, \by	)\mathcal{S}_{(i',j')}(\bTheta_g, \by	)|.
	$$
	Then, by Lemma \ref{lem:PLNFisher}, there exists a function $K(\by)$ such that 
	$|\F_{(g,i,j,g,i',j')}(\bnu, \by)| \leq K( \by)$ and $\Ex(K(\by)) < \infty$. 
	The same proof  can be applied to the $g \neq g'$ case.
\end{proof}

\begin{proof}[of Lemma \ref{lem:tr}]
	By Lemma \ref{lem:poly},
	$$
	\Ex\left[\prod_{j=1}^p \phi({n_j},y_j) \tr\left(\bT\bTheta^{-1}\right)\right] = \tr\left(\bT\bTheta^{-1}\right)\exp\left({{N}}\trans \bmu + {{N}}\trans \bTheta^{-1} {{N}}/2\right). 
	$$
	Similar to the proof of Lemma S1, by  the moment generating function of the normal distribution, we have
	\begin{align*}
		&\quad \Ex\left[\prod_{j=1}^p \phi({n_j},y_j) \tr \left\lbrace \bT({\bx} -\bmu) ({\bx} -\bmu)\trans\right\rbrace\right] = \Ex_\bx\left[ \exp({N}\trans \bx)\tr \left\lbrace \bT({\bx} -\bmu) ({\bx} -\bmu)\trans \right\rbrace \right] \\ 
		&= \left\lbrace \tr\left(\bT\bTheta^{-1}\right) + {{N}}\trans \bTheta^{-1}\bT \bTheta^{-1} {{N}}\right\rbrace \exp\left({{N}}\trans \bmu + {{N}}\trans \bTheta^{-1} {{N}}/2\right).
	\end{align*} 
	Lemma \ref{lem:tr} follows from the above two equations. 
\end{proof}

\begin{proof}[of Lemma \ref{lem:mixsubspace}]
	By the proof of Lemma \ref{lem:subspace}, there exists $\balpha\not\in  \bigcup_{i=1}^n \mathcal{M}_i$.  We can assume $\balpha \in \mathcal{V}$. Otherwise, we complete the proof.
	Since $A$ is not a zero matrix, we can take $ {{\beta}} \not\in \mathcal{V} $.
	On the one hand, since ${{\beta}} \not\in \mathcal{V} $,  there are at most two  integers $k \in \mathbb{N}$ satisfying  the quadratic equation $(\balpha + k{{\beta}} )\trans A (\balpha + k{{\beta}} ) = 0$.  
	On the other hand,   for any $\mathcal{M}_i, i = 1 ,\ldots, n$, there is at most one  integer $k$ such that $\balpha + k{{\beta}} \in \mathcal{M}_i$.  Otherwise, 
	if there exist $k_1\neq k_2$ and $i$ satisfying
	$\balpha + k_1{{\beta}} \in \mathcal{M}_i , \balpha + k_2{{\beta}}\in \mathcal{M}_i ,$
	then we have $(k_2 - k_1){{\beta}} \in \mathcal{M}_i $.  It follows that $\balpha \in \mathcal{M}_i$, which is contradictory to the  fact that $\balpha \not\in \bigcup_{i=1}^n \mathcal{M}_i$.  Hence, there are at most $(n+2)$ $k$ such that $\balpha + k{{\beta}} \in  \bigcup_{i=1}^n \mathcal{M}_i  \bigcup  \mathcal{V} $. Since there are infinitely many non-negative  integers,  there exists an integer $k$ such that  $\balpha + k{{\beta}} \not \in  \bigcup_{i=1}^n \mathcal{M}_i  \bigcup  \mathcal{V} $. 
\end{proof}

\subsection{Proof of Theorem 2}
Since we only prove the positive definiteness of the Fisher information, we can only get the local strong convexity.   In order  to prove the convergence rate,   we first give the consistency of MLE.  The proof is based on the M-estimator theory.
For convenience, we introduce some notations in the M-estimator theory. We let $m_{\bnu}(\by_i)=\ell(\bnu, \by_i)$, $M_n(\bnu)=n^{-1}\sum_{i=1}^N m_{\bnu}(\by_i) = -\mathcal{L}_n(\bnu)$ and $M(\bnu)=\Ex_{\bnu^*} \left\{m_{\bnu}(\by_i)\right\}$.  Let $\mathcal{D}_{0}=\{\bnu_0 \in {\mathcal{D}} \ \mid \  \Ex (m_{\bnu_0})=\sup_{\bnu} \Ex(m_{\bnu})\}$. By Jessen's inequality and the identifiablity of the mixture Poisson log-normal model, we get that $\mathcal{D}_{0}$ only contains one element $\bnu^*$.  Then we need the following two conditions and Lemma  \ref{lem:wald}. The proof of Lemma \ref{lem:wald} can be found in \cite{van2000asymptotic}.

\begin{condition}
	$\limsup_{\bnu_n\rightarrow\bnu}m_{\bnu_n}(\by)\leq m_{\bnu}(\by)\ \mbox{ for all } \bnu  \mbox{ and a.s. }\by$. 
\end{condition}

\begin{condition}
	For all sufficiently small ball $U\subset {\mathcal{D}}$, $\by\mapsto \sup_{\bnu\in U}m_{\bnu}(\by)$ is measurable and satisfies $\Ex\left\{\sup_{\bnu\in U}m_{\bnu}(\by)\right\}<\infty.$
\end{condition}

\begin{lemma}[Wald's consistency]\label{lem:wald}
	Assume that Condition (SC1-SC2) hold for $m_{\bnu}(\by)$. Suppose that $ \hat{\bnu}_n$ is any sequence of random vectors such that  $M_n( \hat{\bnu}_n)\geq M_n(\bnu_0)-o_p(1)$  for some $\bnu_0\in \mathcal{D}_{0}$. Then for any $ \epsilon>0$, and every compact set $K\subset {\mathcal{D}}$,  as $n\rightarrow \infty $, we have
	$$\pr \left\lbrace d(\hat{\bnu}_n,\mathcal{D}_0)\geq\epsilon\wedge\hat{\bnu}_n\in K\right\rbrace \rightarrow 0,$$
	where $d(\hat{\bnu}_n,\mathcal{D}_0) = \inf_{\bnu_0\in \mathcal{D}_{0}} ||\hat{\bnu}_n - \bnu_0 ||_2$.
\end{lemma}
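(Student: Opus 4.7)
The plan is to prove this classical Wald-type consistency result by a compactness plus covering argument, coupled with the strong law of large numbers. First I would show that, for every $\bnu_1 \notin \mathcal{D}_0$, there is an open ball $U(\bnu_1)$ around $\bnu_1$ such that $\Ex\{\sup_{\bnu \in U(\bnu_1)} m_\bnu(\by)\} < M(\bnu_0)$. To see this, pick a decreasing sequence of balls $U_k \downarrow \{\bnu_1\}$. Condition (SC1) gives $\limsup_{\bnu \to \bnu_1} m_\bnu(\by) \le m_{\bnu_1}(\by)$ a.s., which translates into $\sup_{\bnu \in U_k} m_\bnu(\by) \searrow m_{\bnu_1}(\by)$ a.s. as $k \to \infty$. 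Condition (SC2) provides an integrable envelope for the suprema over a neighborhood, so the Dominated Convergence Theorem yields $\Ex\{\sup_{\bnu \in U_k} m_\bnu\} \searrow \Ex\{m_{\bnu_1}\} = M(\bnu_1)$. Since $\mathcal{D}_0$ is the set of global maximizers of $M$ and $\bnu_1 \notin \mathcal{D}_0$, we have $M(\bnu_1) < M(\bnu_0)$, so the desired strict inequality holds for some $k$.

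Next, fix $\epsilon > 0$ and a compact set $K \subset \mathcal{D}$, and let $A_{\epsilon,K} = \{\bnu \in K : d(\bnu, \mathcal{D}_0) \ge \epsilon\}$. This set is closed inside a compact set, hence compact. Covering $A_{\epsilon,K}$ by the balls $U(\bnu_1)$ constructed in the first step and extracting a finite subcover $U_1, \ldots, U_N$, I obtain a uniform gap $\delta > 0$ with $\max_{1 \le k \le N} \Ex\{\sup_{\bnu \in U_k} m_\bnu\} \le M(\bnu_0) - \delta$. By (SC2) each such supremum is an integrable random variable, so the strong law of large numbers gives $n^{-1}\sum_{i=1}^n \sup_{\bnu \in U_k} m_\bnu(\by_i) \to \Ex\{\sup_{\bnu \in U_k} m_\bnu\}$ almost surely. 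Using the trivial bound $\sup_{\bnu \in U_k} M_n(\bnu) \le n^{-1}\sum_{i=1}^n \sup_{\bnu \in U_k} m_\bnu(\by_i)$ and taking the maximum over the finite cover, we conclude that $\sup_{\bnu \in A_{\epsilon,K}} M_n(\bnu) \le M(\bnu_0) - \delta/2$ with probability tending to one.

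Separately, the ordinary strong law gives $M_n(\bnu_0) \to M(\bnu_0)$ almost surely, so $M_n(\bnu_0) \ge M(\bnu_0) - \delta/4$ with probability tending to one. Combining the two bounds on an event whose probability tends to one, $\sup_{\bnu \in A_{\epsilon,K}} M_n(\bnu) < M_n(\bnu_0) - \delta/4$. The hypothesis $M_n(\hat{\bnu}_n) \ge M_n(\bnu_0) - o_p(1)$ implies that, on this high-probability event (intersected with $\{o_p(1) < \delta/8\}$), we have $M_n(\hat{\bnu}_n) > \sup_{\bnu \in A_{\epsilon,K}} M_n(\bnu)$, forcing $\hat{\bnu}_n \notin A_{\epsilon,K}$. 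This is exactly the claimed $\pr\{d(\hat{\bnu}_n, \mathcal{D}_0) \ge \epsilon \wedge \hat{\bnu}_n \in K\} \to 0$.

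The main obstacle is the first step: transferring the pointwise upper semi-continuity in (SC1) to the expected-value inequality $\Ex\{\sup_{\bnu \in U_k} m_\bnu\} \searrow M(\bnu_1)$. This hinges on the monotonicity of $k \mapsto \sup_{\bnu \in U_k} m_\bnu(\by)$, on the measurability of these suprema (part of the content of (SC2)), and on the integrable envelope from (SC2) to justify dominated convergence. Once this semi-continuity-to-expectation transfer is in hand, the remainder is a routine combination of compactness, finite covering, and the strong law of large numbers, with the near-maximizing property of $\hat{\bnu}_n$ ruling out any limit point outside $\mathcal{D}_0$.
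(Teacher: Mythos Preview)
Your argument is correct and is precisely the classical Wald compactness-plus-covering proof; the paper does not supply its own proof of this lemma but simply cites van der Vaart (2000), whose Theorem~5.14 proceeds exactly along the lines you describe. One small refinement: in Step~1 the passage $\Ex\{\sup_{\bnu\in U_k} m_\bnu\}\searrow M(\bnu_1)$ is more naturally justified by the monotone convergence theorem (decreasing sequence with integrable upper envelope from (SC2)) than by dominated convergence, since (SC2) only controls the upper tail.
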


\begin{lemma} \label{lem:consis}
	Assume that the estimator $\hat{\bnu}_n$ minimizes  (\ref{prob:opt})  in parameter space ${\mathcal{D}}$ and $\lambda_n$ goes to zero.  Then, for any $\epsilon>0$, as $n\rightarrow \infty$,  we have:
	$$\pr(||\hat{\bnu}_n-\bnu^*||_2 \geq\epsilon)\rightarrow 0, \mbox{ as } n \rightarrow \infty. $$
	
\end{lemma}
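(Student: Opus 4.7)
The plan is to invoke Wald's consistency (Lemma~\ref{lem:wald}) with compact set $K=\mathcal{D}$ and $\mathcal{D}_0=\{\nu^*\}$. By Condition~C1, $\mathcal{D}$ is a closed and bounded subset of a finite-dimensional Euclidean space, hence compact, and $\hat\nu_n\in\mathcal{D}$ by construction. That $\mathcal{D}_0$ reduces to the singleton $\{\nu^*\}$ combines identifiability of the mixture Poisson log-normal model (Theorem~\ref{thm:basic}) with the standard Jensen-inequality argument showing the Kullback--Leibler divergence from $\nu^*$ is strictly positive at every other $\nu$. Two tasks then remain: verify Conditions (SC1) and (SC2) for $m_\nu(y)=\ell(\nu,y)$, and show that the $\lambda_n\mathcal{R}$ term does not destroy the near-maximization inequality $M_n(\hat\nu_n)\ge M_n(\nu^*)-o_p(1)$.

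Condition (SC1) is the easy half. For fixed $y$ the integrand defining $L_g(\nu_g,y)$ is continuous in $\nu_g=\vecone(\Theta_g)$ and is uniformly dominated on $\mathcal{D}$ by the $x$-integrable function $M^{p/2}\exp\{-\tfrac{m}{2}\|x-\mu_g\|_2^2\}h(y,x)$ (using $\det(\Theta_g)^{1/2}\le M^{p/2}$ and $\lambda_{\min}(\Theta_g)\ge m$). Dominated convergence then gives continuity of $L_M(\nu,y)$; since $L_M(\nu,y)>0$ the map $\nu\mapsto m_\nu(y)=\log L_M(\nu,y)+C(y)$ is continuous.

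The main work is (SC2). I need two-sided envelopes on $|\log L_M(\nu,y)|$ that are integrable under $\nu^*$. For the \emph{upper} bound, drop the Gaussian factor ($\le 1$) and change variables $u_j=e^{x_j}$ to get $L_g(\nu_g,y)\le M^{p/2}\prod_j\max\{\Gamma(y_j+1),C_0\}$ for an absolute constant $C_0$, so $\log L_M(\nu,y)\le \tfrac{p}{2}\log M+\sum_j\log\Gamma(y_j+1)+C'$. By Stirling this envelope is bounded by $C(1+\sum_j y_j^2)$, which is integrable since all polynomial moments of $y$ are finite by Lemma~\ref{lem:poly}. For the matching \emph{lower} bound on $L_M$ (hence an upper bound on $-\log L_M(\nu,y)$), Step~1 of the proof of Lemma~\ref{pdim} furnishes a constant $c=c(m,M)>0$ with $L_g(\nu_g,y)\ge c^p\exp\{\tfrac12\sum_j y_j\log(y_j+1)\}$ uniformly on $\mathcal{D}$, and since $\min_g\pi_g^*>0$ is fixed the same bound holds (up to a positive constant) for $L_M(\nu,y)$. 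Combining with $|C(y)|\le\sum_j\log(y_j!)+\tfrac{p}{2}\log(2\pi)$, $\sup_{\nu\in\mathcal{D}}|m_\nu(y)|$ has a $\nu^*$-integrable envelope, which is exactly (SC2).

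Finally, because $\hat\nu_n$ minimizes $\mathcal{L}_n(\nu)+\lambda_n\mathcal{R}(\nu)=-M_n(\nu)+\lambda_n\mathcal{R}(\nu)$, evaluating the inequality at $\nu^*$ gives $M_n(\hat\nu_n)\ge M_n(\nu^*)+\lambda_n\{\mathcal{R}(\hat\nu_n)-\mathcal{R}(\nu^*)\}$. Since each entry of $\Theta_g$ on $\mathcal{D}$ is bounded by $\|\Theta_g\|_2\le M$, $\mathcal{R}$ is uniformly bounded on $\mathcal{D}$, and with $\lambda_n\to 0$ the correction is deterministic $o(1)$, so the near-maximization condition holds. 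Lemma~\ref{lem:wald} then yields $\pr(\|\hat\nu_n-\nu^*\|_2\ge\epsilon)\to 0$. The principal obstacle is the uniform lower envelope in (SC2): the Poisson log-normal distribution has no moment generating function, so standard exponential concentration fails, and one must fall back on the tail/integral estimates in Lemmas~\ref{1dim}--\ref{pdim} together with the polynomial moment control of Lemma~\ref{lem:poly} to pin down $-\log L_M$.
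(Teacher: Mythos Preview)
Your overall architecture matches the paper's: apply Wald's consistency with $K=\mathcal{D}$ and $\mathcal{D}_0=\{\nu^*\}$, verify (SC1)--(SC2), and absorb the penalty into the near-maximization inequality. The (SC1) check and the near-maximization argument are fine (the paper uses the slightly simpler observation $\mathcal{R}(\hat\nu_n)\ge 0$, but your boundedness argument works too).

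The gap is in your upper envelope for (SC2). Dropping the Gaussian factor and changing variables $u_j=e^{x_j}$ gives
\[
\int_{\mathbb{R}} e^{x_j y_j}\,e^{-e^{x_j}}\,dx_j \;=\; \int_0^\infty u^{\,y_j-1}e^{-u}\,du,
\]
which equals $\Gamma(y_j)$ only for $y_j\ge 1$ and \emph{diverges} when $y_j=0$ (the integrand behaves like $1/u$ near $0$). Hence your claimed bound $L_g(\nu_g,y)\le M^{p/2}\prod_j\max\{\Gamma(y_j+1),C_0\}$ is not valid at any $y$ with a zero coordinate, and such $y$ occur with positive probability (indeed this is the high-dropout regime the paper emphasizes). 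The paper avoids this by dropping the \emph{other} factor, $\exp\{-e^{x_j}\}\le 1$, and keeping the Gaussian; the remaining integral is then the Gaussian moment generating function, giving
\[
\log L_M(\nu,y)\;\le\;\tfrac12\,y^{\rm T}\Theta_g^{-1}y + y^{\rm T}\mu_g + \tfrac{p}{2}\log(2\pi),
\]
which is a polynomial in $y$ uniformly over $\mathcal{D}$ and hence $\nu^*$-integrable by Lemma~\ref{lem:poly}.

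Two smaller points. First, Wald's condition (SC2) only asks for $\Ex\{\sup_{\nu\in U}m_\nu(y)\}<\infty$, i.e.\ an upper envelope; the lower envelope you construct via Step~1 of Lemma~\ref{pdim} is not needed here (and that step is only stated for large $\|y\|_2$, so a separate compactness argument would be required for small $y$ anyway). Second, you omit the measurability half of (SC2); the paper handles this by passing to a countable dense subset and using continuity of $\nu\mapsto m_\nu(y)$.
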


\begin{lemma} [Uniform law of large numbers] \label{lem:ULLN}
	For all $g,i,j,g',i',j'$ we have 
	$$\pr\left\lbrace \lim_{n \rightarrow \infty} \sup_{\bnu \in {\mathcal{D}}}\left| \frac{1}{n} \sum_{k=1}^n\F_{(g,i,j)(g',i',j')}(\bnu, \by_k) - \D_{(g,i,j)(g',i',j')}(\bnu)\right|  = 0\right\rbrace  = 1,$$
	where $\D(\bnu) = \Ex_{\bnu^*}(\F(\bnu, \by)).$
	Furthermore,  we have 
	$$\pr\left\lbrace \lim_{n \rightarrow \infty} \sup_{\bnu \in {\mathcal{D}}}\left\|  \frac{1}{n} \sum_{k=1}^n\F(\bnu,  \by_k) - \D(\bnu)\right\| _2 = 0\right\rbrace  = 1.$$
	
\end{lemma}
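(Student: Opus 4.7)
The plan is to apply the standard uniform law of large numbers for continuous random functions indexed by a compact parameter set, in the style of Newey and McFadden (1994, Lemma 2.4). Three ingredients are needed: compactness of $\mathcal{D}$; continuity of $\bnu \mapsto \F_{(g,i,j)(g',i',j')}(\bnu,\by)$ for each fixed $\by$; and an integrable envelope dominating every entry of $\F$ uniformly over $\bnu\in\mathcal{D}$. The envelope is supplied by Lemma \ref{lem:MPLNFisher}, which provides a polynomial $K(\by)$ with $\Ex_{\bnu^*}\{K(\by)\}<\infty$ bounding every entry of $\F(\bnu,\by)$ uniformly in $\bnu\in\mathcal{D}$. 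Compactness of $\mathcal{D}$ is immediate from the eigenvalue constraint $m\le\lambda_{\min}(\Theta_g)\le\lambda_{\max}(\Theta_g)\le M$, which defines a closed and bounded subset of $\mathbb{R}^{Gp(p+1)/2}$.

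First I would verify the continuity claim. Using the explicit formulas for $\F_{(g,i,j,g,i',j')}$ and $\F_{(g,i,j,g',i',j')}$ displayed immediately before Lemma \ref{lem:MPLNFisher}, each entry of $\F$ is a rational function of $L_M(\bnu,\by)$, $L_g(\bnu_g,\by)$, the scores $\mathcal{S}(\bTheta_g,\by)$, the entries of $\bH(\bTheta_g,\by)$, and entries of $\bTheta_g^{-1}$. The denominator $L_M(\bnu,\by)$ is strictly positive, matrix inversion is continuous on the set of positive-definite matrices whose eigenvalues lie in $[m,M]$, and the $\mathcal{T}$-integrals defining $L_g$, $\mathcal{S}$, and $\bH$ depend continuously on $\bnu$ by dominated convergence applied with the polynomial envelopes from Lemma \ref{pdim} and Remark \ref{pdimremark}.

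Next I would run the classical covering argument for a fixed Hessian index. Writing $f_{\bnu}(\by)=\F_{(g,i,j)(g',i',j')}(\bnu,\by)$, continuity of $f_{\bnu}(\by)$ in $\bnu$ together with the integrable envelope $K(\by)$ and dominated convergence imply that $\bnu\mapsto \Ex_{\bnu^*}\{f_{\bnu}(\by)\}=\D_{(g,i,j)(g',i',j')}(\bnu)$ is continuous on $\mathcal{D}$. For any $\varepsilon>0$, compactness then gives a finite cover $\mathcal{D}\subset\bigcup_{r=1}^N B(\bnu^{(r)},\delta_r)$ with $\Ex_{\bnu^*}\bigl\{\sup_{\bnu\in B(\bnu^{(r)},\delta_r)}|f_{\bnu}(\by)-f_{\bnu^{(r)}}(\by)|\bigr\}<\varepsilon$ for all $r$. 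The suprema and infima of $f_\bnu(\by)$ over each ball are measurable (by continuity of $f_\bnu$ and separability of $\bnu$) and dominated by $2K(\by)$, so the strong law of large numbers applies to each of them and to $f_{\bnu^{(r)}}$ at the centers. Sandwiching $n^{-1}\sum_k f_{\bnu}(\by_k)$ between these and letting $\varepsilon\to 0$ yields the stated almost-sure uniform convergence. The operator-norm conclusion follows because $\F-\D$ has fixed finite dimension $Gp(p+1)/2$, so $\|\F(\bnu,\by)-\D(\bnu)\|_2$ is bounded by a dimension-dependent constant times the maximum absolute entry.

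The main obstacle is the continuity step: although the formulas for $\F$ are explicit, rigorously passing uniform-in-$\bnu$ limits through the $\mathcal{T}$-integrals and through the ratios appearing in $\F$ requires the polynomial-envelope machinery of Lemma \ref{pdim} and Remark \ref{pdimremark}. Once continuity and the dominating function are secured, the covering-plus-SLLN step is routine, and the second display follows from the first by norm equivalence in finite dimension.
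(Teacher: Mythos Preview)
Your proposal is correct and follows essentially the same route as the paper: the paper simply invokes Theorem 16(a) in Ferguson's \emph{A Course in Large Sample Theory} (the same compactness--continuity--envelope ULLN as Newey--McFadden's Lemma 2.4) and points to Lemma \ref{lem:MPLNFisher} for the integrable envelope. Your write-up is more detailed---you spell out compactness of $\mathcal{D}$, the continuity argument via dominated convergence and Lemma \ref{pdim}/Remark \ref{pdimremark}, the covering step, and the norm-equivalence passage to $\|\cdot\|_2$---but the underlying argument is identical.
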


\begin{remark} \label{cont}
	By the proof of Lemma \ref{lem:ULLN}, we know that there exists a function $F_0(\by)$ such that for any $\bnu$,
	$|\F_{(g,i,j,g',i',j')}(\nu, \by)| \leq F_0(\by)$  and $\Ex(F_0(\by)) < \infty$ .  Then, by the  dominated convergence theorem, $\D(\bnu) = \Ex_{\bnu^*}(\F(\bnu, \by))$ is continuous. 
\end{remark}

Let $\Delta_n = \hat{\bnu}_{n} - \bnu^{*}$ .  We define 
$\delta \mathcal{L}_n = \mathcal{L}_n(\bnu^{*} + \Delta_n) -\mathcal{L}_n(\bnu^{*}) - \langle \nabla\mathcal{L}_n(\bnu^{*}), {\Delta_n} \rangle,$
where $\langle\nabla\mathcal{L}_n(\bnu^{*}), {\Delta_n}\rangle = \nabla\mathcal{L}_n(\bnu^{*})\trans {\Delta_n}$.  According to Lemma \ref{lem:ULLN}, we can prove the following lemma.
\begin{lemma}\label{lem:delta}
	Under Condition 1-3, with high probability,  we have
	$\delta \mathcal{L}_n \geq (\kappa/3)||{\Delta_n}||_2^2,$
	where $\kappa =  \lambda_{min}(\bGamma^*)$.
	
\end{lemma}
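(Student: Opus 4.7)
The plan is to apply Taylor's theorem to express $\delta\mathcal{L}_n$ as a quadratic form in the Hessian of $\mathcal{L}_n$ at an intermediate point, and then to argue that this Hessian is close to the Fisher information $\Gamma^*$ with high probability. Concretely, since $\mathcal{D}$ is a convex set (the constraint $mI\preceq \Theta_g\preceq MI$ defines a convex region for each $g$) and both $\nu^*$ and $\hat{\nu}_n$ lie in $\mathcal{D}$, we may write, by the integral form of Taylor's theorem,
\begin{equation*}
\delta\mathcal{L}_n = \tfrac{1}{2}\Delta_n^{\rm T}\nabla^2\mathcal{L}_n(\tilde{\nu})\Delta_n
\end{equation*}
for some $\tilde{\nu} = \nu^*+t\Delta_n$ with $t\in[0,1]$, and note that $\nabla^2\mathcal{L}_n(\tilde{\nu}) = -n^{-1}\sum_{i=1}^n \F(\tilde{\nu},y_i)$ by the definition of $\mathcal{L}_n$ and $\F$.

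The second step is to show that the random matrix $\nabla^2\mathcal{L}_n(\tilde{\nu})$ is close to $\Gamma^*$ in spectral norm with high probability. I would combine three ingredients: (i) the uniform law of large numbers (Lemma \ref{lem:ULLN}), which gives $\sup_{\nu\in\mathcal{D}}\|n^{-1}\sum_i \F(\nu,y_i) - \D(\nu)\|_2 \to 0$ almost surely; (ii) the consistency of $\hat{\nu}_n$ (Lemma \ref{lem:consis}), which implies $\|\Delta_n\|_2\to 0$ in probability and hence $\tilde{\nu}\to \nu^*$ with probability tending to one; and (iii) the continuity of $\D$ at $\nu^*$ (Remark \ref{cont}). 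Chaining a triangle inequality, for any $\varepsilon>0$ we have, with probability tending to one,
\begin{equation*}
\bigl\|\nabla^2\mathcal{L}_n(\tilde{\nu}) - \Gamma^*\bigr\|_2 \;\leq\; \Bigl\|n^{-1}\textstyle\sum_i \F(\tilde{\nu},y_i) - \D(\tilde{\nu})\Bigr\|_2 + \bigl\|\D(\tilde{\nu}) - \D(\nu^*)\bigr\|_2 \;<\; \varepsilon,
\end{equation*}
where the last inequality uses $\Gamma^* = -\D(\nu^*)$ and the sign flip from $\mathcal{L}_n = -n^{-1}\ell_n$.

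The final step is to translate the spectral closeness into a lower bound on $\lambda_{\min}(\nabla^2\mathcal{L}_n(\tilde{\nu}))$. Since $\lambda_{\min}(\Gamma^*) = \kappa$ by Theorem \ref{thm:basic} (part II), choosing $\varepsilon = \kappa/3$ yields $\lambda_{\min}(\nabla^2\mathcal{L}_n(\tilde{\nu})) \geq 2\kappa/3$ with probability tending to one. Plugging this into the Taylor expansion gives
\begin{equation*}
\delta\mathcal{L}_n \;=\; \tfrac{1}{2}\Delta_n^{\rm T}\nabla^2\mathcal{L}_n(\tilde{\nu})\Delta_n \;\geq\; \tfrac{1}{2}\cdot\tfrac{2\kappa}{3}\,\|\Delta_n\|_2^2 \;=\; \tfrac{\kappa}{3}\,\|\Delta_n\|_2^2,
\end{equation*}
which is the desired bound.

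The main obstacle is ensuring that the intermediate point $\tilde{\nu}$ stays in a region where both the uniform LLN applies and the continuity argument for $\D$ is valid. Convexity of $\mathcal{D}$ handles the former, and the consistency of $\hat{\nu}_n$ pins $\tilde{\nu}$ into an arbitrarily small neighborhood of $\nu^*$, where continuity of $\D$ at $\nu^*$ (which in turn relies on the polynomial domination furnished by Lemma \ref{lem:MPLNFisher} and dominated convergence, as noted in Remark \ref{cont}) gives the needed closeness of $\D(\tilde{\nu})$ to $-\Gamma^*$. All other pieces are essentially direct invocations of the lemmas already proved above.
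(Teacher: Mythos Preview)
Your proposal is correct and follows essentially the same route as the paper: Taylor-expand $\delta\mathcal{L}_n$ as a quadratic form in the Hessian at an intermediate point on the segment $[\nu^*,\hat{\nu}_n]$, then use the uniform law of large numbers (Lemma \ref{lem:ULLN}) together with the continuity of $\D$ (Remark \ref{cont}) and the consistency of $\hat{\nu}_n$ (Lemma \ref{lem:consis}) to push the Hessian toward $\Gamma^*$ in spectral norm. The paper carries out exactly this triangle-inequality decomposition, bounding the ULLN term and the continuity term each by $\kappa/3$ (invoking Wielandt--Hoffman for the eigenvalue perturbation) rather than bounding their sum by $\kappa/3$ as you do, but this is only a cosmetic difference in bookkeeping.
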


\begin{proof}[of Theorem 2]
	Define
	$$\mathcal{F}(\Delta) = \mathcal{L}_n(\bnu^{*} + \Delta) -\mathcal{L}_n(\bnu^{*}) + \lambda_n \left\lbrace \mathcal{R}(\bnu^* + \Delta) - \mathcal{R}(\bnu^*) \right\rbrace .$$
	Since $\mathcal{F}(0) = 0$ and $\hat{\bnu}_n$ minimizes (\ref{prob:opt}), 
	we must have $\mathcal{F}(\Delta_n) \leq 0.$
	Further, by Lemma \ref{lem:delta}, with high probability, 	$\delta \mathcal{L}_n \geq (\kappa/3) ||\Delta_n||_2^2$. Then, combining with Cauchy's inequality and the convexity of the lasso penalty, with high probability, we get
	\begin{align*}
		\mathcal{F}(\Delta_n) 
		&= \delta \mathcal{L}_n + \langle\nabla\mathcal{L}_n(\bnu^{*}), \Delta_n\rangle+ {\lambda_n} \left\lbrace  \mathcal{R}(\bnu^* + \Delta_n) - \mathcal{R}(\bnu^*)\right\rbrace  \\
		&\geq (\kappa/3) ||\Delta_n||_2^2 - (||\nabla\mathcal{L}_n(\bnu^{*})||_{\infty}  + {2\lambda_n})||\Delta_n||_1 \\
		&\geq (\kappa/3) ||\Delta_n||_2^2 -\sqrt{Gp(p+1)/2} (||\nabla\mathcal{L}_n(\bnu^{*})||_{\infty}  + {2\lambda_n})||\Delta_n||_2 .
	\end{align*}
	When 
	$||\Delta_n||_2  >  (3/\kappa) \sqrt{Gp(p+1)/2} (||\nabla\mathcal{L}_n(\bnu^{*})||_{\infty} + {2\lambda_n}),$
	we have $\mathcal{F}(\Delta_n) > 0$. By the fact that $\mathcal{F}(\Delta_n) \leq 0$.  we obtain
	$$||\Delta_n||_2  \leq   (3/\kappa) \sqrt{Gp(p+1)/2} (||\nabla\mathcal{L}_n(\bnu^{*})||_{\infty} + {2\lambda_n}),$$
	and thus we prove this theorem.
\end{proof}

\begin{proof}[of Lemma \ref{lem:consis}]
	We use  Lemma \ref{lem:wald} to prove this lemma.  We first check Condition (SC1-SC2).  Observe that  $m_{\bnu}(\by)= \ell(\bnu, \by) = \log\left\{L_M(\bnu, \by)\right\} + C(\by) $ where $C(\by) = \sum_{j=1}^p \log (y_j!) - 2^{-1}p \log(2\pi)$ and $L_M(\bnu, \by)$ is continuous at all $  \bnu \in {\mathcal{D}}$ for any fixed $\by$. Condition (SC1) thus follows.
	
	For Condition (SC2), we first check the measurability of $\sup_{\bnu\in U}m_{\bnu}(\by)$ for any small ball $U$. 
	Let $Q_U=\{\bnu \mid \bnu\ \mbox{ is a rational point }, \bnu \in U\}$. Then $Q_U$ has a countable  number of elements.  From the measurability of $m_{\bnu}(\by)$, we get that $\sup_{\bnu\in Q_U}m_{\bnu}(\by)$ is measurable. 
	On the other hand, from the continuity of $m_{\bnu}(\by)$ in $ \bnu$, we  get
	$\sup_{\bnu\in Q_U}m_{\bnu}(\by)=\sup_{\bnu\in U}m_{\bnu}(\by),$
	and hence $\sup_{\bnu\in U}m_{\bnu}(\by)$ is measurable.
	Finally, we  prove $\Ex \left\{\sup_{\bnu\in U}m_{\bnu}(\by)\right\} <\infty$. We have
	$$\begin{array}{ll}
		&\quad \log \left\lbrace L_M(\bnu, \by)\right\rbrace  \\
		&= \log\left(\sum_{g=1}^{G}\pi_g \int {\exp\left[ \sum_{j=1}^{p}\left\lbrace  x_{j}y_{j}-\exp\left( x_{j}\right) \right\rbrace  -2^{-1}(\bx-\bmu_g)^T\bTheta_g(\bx-\bmu_g) \right]}\det(\bTheta_g)^{1/2}d\bx\right)\\
		&\leq \log\left[\sum_{g=1}^{G}\pi_g \int {\exp \left\lbrace  \bx^T\by -2^{-1}(\bx-\bmu_g)^T\bTheta_g(\bx-\bmu_g)\right\rbrace }\det(\bTheta_g)^{1/2}d\bx\right]\\
		&=\log \left\lbrace  \sum_{g=1}^{G}\pi_g \exp\left( 2^{-1}\by^T\bTheta_g^{-1}\by+\by^T\mu_g\right) \right\rbrace  + 2^{-1}p \log(2\pi) . 
	\end{array}$$
	and $\sup_{\bnu\in {\mathcal{D}}}(1/2)\by^T\bTheta_g^{-1}\by+\by^T\bmu_g\leq\|\by\|_2^2/(2M)+M\|\by\|_1$. Then, we have 
	$$\sup_{\bnu\in {\mathcal{D}}}\log\left\lbrace L_M(\bnu, \by)\right\rbrace \leq \|\by\|^2_2/(2M)+M\|\by\|_1 + 2^{-1}p \log(2\pi).$$
	Also,  we have 
	$C(\by) = \sum_{j=1}^p \log (y_j!) - 2^{-1}p \log(2\pi) \leq  \sum_{j=1}^p  y_j \log(y_j + 1)- 2^{-1}p \log(2\pi). $
	Since any polynomial of a MPLN random variable $\by$ is integrable, we prove  $\Ex(\sup_{\bnu\in U}m_{\bnu}(\by))<\infty.$
	In addition, we have
	$$\begin{array}{ll}
		M_n(\hat{\bnu}_n)&\geq \lambda_n \sum_{g=1}^{G}\|\hat{\bTheta}_g\|_{1,\rm off}+ M_n(\bnu^*)-\lambda_n \sum_{g=1}^{G}\|\bTheta_{g}^*\|_{1,\rm off}\\
		&\geq M_n(\bnu^*)-\lambda_n \sum_{g=1}^{G}\|\bTheta_{g}^*\|_{1,\rm off}=M_n(\bnu^*)-o(1).
	\end{array}$$
	Thus, all conditions in Lemma \ref{lem:wald} are satisfied. 
	Finally, observe that $\mathcal{D}_{0}$ only contains one element $\bnu^*$. Taking $K = {\mathcal{D}} $,  we get, for any $\epsilon>0$,  
	$\pr(||\hat{\bnu}_n-\bnu^*||_2 \geq\epsilon)\rightarrow 0, \mbox{ as } n \rightarrow \infty, $
	and thus we complete the proof.
\end{proof}

\begin{proof}[of Lemma \ref{lem:ULLN}]
	By Theorem  16(a) in \cite{ferguson2017course},  for any $g,i,j,g',i',j'$ 
	we only need to verify  there exists a function $F_0(\by)$ such that 
	$|\F_{(g,i,j)(g',i',j')}(\bnu, \by)| \leq F_0(\by)$ and $\Ex_{\bnu^*}(F_0(\by)) < \infty$.   The existence of such function is guaranteed by Lemma \ref{lem:MPLNFisher} .
\end{proof}

\begin{proof}[of Lemma \ref{lem:delta}]
	By the definition of $\mathcal{L}_n$ and Taylor expansion, we have 
	\begin{align*}
		\delta \mathcal{L}_n &= \mathcal{L}_n(\bnu^{*} + \Delta_n) -\mathcal{L}_n(\bnu^{*}) - \langle\nabla\mathcal{L}_n(\bnu^{*}),\Delta_n\rangle = -(1/n) \sum_{i=1}^n\Delta_n \trans \F(\check{\bnu}, \by_i)\Delta_n \\
		&=\Delta_n \trans \left\lbrace \D(\check{\bnu})- (1/n) \sum_{i=1}^n F(\check{\bnu}, \by_i)\right\rbrace {\Delta_n} + 
		{\Delta_n} \trans \left\lbrace \D(\bnu^*)-\D(\check{\bnu})\right\rbrace {\Delta_n}
		+ \Delta_n \trans \left\lbrace -\D(\bnu^*)\right\rbrace {\Delta_n}
	\end{align*} 
	where $\check{\bnu} = \bnu^{*} + \theta \Delta_n, 0 \leq\theta \leq 1.$
	By Lemma \ref{lem:ULLN}, we get, for any $\epsilon$,  there exists $N$ such that when $n > N$, with probability $1-\epsilon$, 
	\begin{equation}\label{equ:delta1}
		\Delta_n \trans \left\lbrace \D(\check{\bnu})- (1/n) \sum_{i=1}^n \F(\check{\bnu}, \by_i)\right\rbrace \Delta_n \geq -(\kappa/3)||\Delta_n||_2^2. 
	\end{equation}
	Also, by the continuity  of $\D(\bnu)$ (Remark \ref{cont}) and Wielandt-Hoffman Theorem \citep{bhatia2013matrix}, there exists a constant $\tau$ such that when $||\Delta_n||_2 < \tau$,  
	we have $$  \lambda_{min} \left\lbrace \D(\bnu^*)-\D(\check{\bnu})\right\rbrace  \geq -(\kappa/3). $$ It follows that 
	\begin{equation}\label{equ:delta2}
		\Delta_n \trans \left\lbrace \D(\bnu^*)-\D(\check{\bnu})\right\rbrace \Delta_n \geq -(\kappa/3) ||\Delta_n||_2^2. 
	\end{equation}
	Finally, by $\kappa = \lambda_{min}(\bGamma^*)$, we have 
	\begin{equation}\label{equ:delta3}
		\Delta_n \trans \left\lbrace -\D(\bnu^*)\right\rbrace \Delta_n \geq \kappa ||\Delta_n||_2^2.
	\end{equation}
	Combining the above inequalities (\ref{equ:delta1}-\ref{equ:delta3}), we have  when  $||\Delta_n||_2 < \tau$, with high probability, 
	$\delta \mathcal{L}_n \geq (\kappa/3) ||\Delta_n||_2^2.$
	Finally, since $\tau$ is a constant, by the consistency of the mixture Poisson log-normal model, we have 
	$\pr(||{\Delta}_n|||_2 < \tau) \rightarrow 1 \mbox{ as } n \rightarrow \infty$. Then, we get 
	$\delta \mathcal{L}_n \geq (\kappa/3) ||{\Delta}_n|||_2^2.$
	Thus we complete the proof.
\end{proof}

\subsection{Proof of Theorem 3}
In this subsection,  we simplify the notation and use $S$ and $S^c$ to denote $S(\bnu^*)$ and $S^c(\bnu^*)$, respectively. Define
$\hat{\bGamma}^*_i = -\F(\bnu^*, \by_i) $
and $\hat{\bGamma}^* = (1/n) \sum_{i=1}^n \hat{\bGamma}^*_i$, which is an estimator of $\bGamma^*$. 
We write $\alpha = 1-||\bGamma^*_{S^cS}{(\bGamma^*_{SS})}^{-1}||_{1,\infty}$. 	By Condition 4, $\alpha > 0$.
\begin{lemma} \label{samplegamma}
	For any $\epsilon > 0$, with high probability, $\hat{\bGamma}_{SS}^*$ is invertible and
	$||\hat{\bGamma}^*_{S^cS}{(\hat{\bGamma}^*_{SS})}^{-1}||_{1,\infty} \leq 1-\alpha/2.$
\end{lemma}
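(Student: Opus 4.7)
The plan is to run a standard Ravikumar-style sample-vs-population perturbation argument, using the previously-proved uniform law of large numbers (Lemma \ref{lem:ULLN}) as the only randomness input. At the population level the statement is trivially true: by Theorem \ref{thm:basic}, $\bGamma^*$ is positive definite, so $\bGamma^*_{SS}$ is invertible (in fact $\lambda_{min}(\bGamma^*_{SS})\geq \kappa>0$), and Condition C4 together with the definition of $\alpha$ gives $\|\bGamma^*_{S^cS}(\bGamma^*_{SS})^{-1}\|_{1,\infty}=1-\alpha$. I only need to transfer this to $\hat{\bGamma}^*$ by showing that $\hat{\bGamma}^*$ is arbitrarily close to $\bGamma^*$ with high probability.

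The steps, in order, are as follows. First, apply Lemma \ref{lem:ULLN} evaluated at the fixed point $\bnu=\bnu^*$: the entrywise conclusion gives $\max_{(g,i,j),(g',i',j')}|\hat{\bGamma}^*_{(g,i,j)(g',i',j')}-\bGamma^*_{(g,i,j)(g',i',j')}|\to 0$ in probability, and since the ambient dimension $d=Gp(p+1)/2$ is fixed, this upgrades to $\|\hat{\bGamma}^*-\bGamma^*\|_{1,\infty}\to 0$ in probability. Second, use Weyl's inequality with the operator-norm form of Lemma \ref{lem:ULLN} to conclude that $\lambda_{min}(\hat{\bGamma}^*_{SS})\geq \kappa/2$ with high probability, so $\hat{\bGamma}^*_{SS}$ is invertible and $\|(\hat{\bGamma}^*_{SS})^{-1}\|_{2}$ stays bounded; combining with the dimension-dependent bound between $\|\cdot\|_2$ and $\|\cdot\|_{1,\infty}$, the quantity $\|(\hat{\bGamma}^*_{SS})^{-1}\|_{1,\infty}$ is also bounded by a constant. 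Third, apply the algebraic decomposition
\begin{align*}
\hat{\bGamma}^*_{S^cS}(\hat{\bGamma}^*_{SS})^{-1}-\bGamma^*_{S^cS}(\bGamma^*_{SS})^{-1}
&=(\hat{\bGamma}^*_{S^cS}-\bGamma^*_{S^cS})(\hat{\bGamma}^*_{SS})^{-1}\\
&\quad-\bGamma^*_{S^cS}(\bGamma^*_{SS})^{-1}(\hat{\bGamma}^*_{SS}-\bGamma^*_{SS})(\hat{\bGamma}^*_{SS})^{-1},
\end{align*}
together with submultiplicativity and the triangle inequality of $\|\cdot\|_{1,\infty}$. Both terms are products of (bounded constants) $\times$ $\|\hat{\bGamma}^*-\bGamma^*\|_{1,\infty}$, which can be made smaller than $\alpha/2$ with probability tending to one by Step 1. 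Adding the population bound $\|\bGamma^*_{S^cS}(\bGamma^*_{SS})^{-1}\|_{1,\infty}=1-\alpha$ yields the claim.

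The main obstacle, and the only non-routine point, is that matrix inversion is not directly $\|\cdot\|_{1,\infty}$-Lipschitz. I will get around this via the identity $(\hat{\bGamma}^*_{SS})^{-1}-(\bGamma^*_{SS})^{-1}=-(\hat{\bGamma}^*_{SS})^{-1}(\hat{\bGamma}^*_{SS}-\bGamma^*_{SS})(\bGamma^*_{SS})^{-1}$ used above, which reduces perturbation of the inverse to perturbation of the matrix itself, at the price of a factor $\|(\hat{\bGamma}^*_{SS})^{-1}\|_{1,\infty}\|(\bGamma^*_{SS})^{-1}\|_{1,\infty}$ that stays bounded by Step 2. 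A minor subtlety is that Lemma \ref{lem:ULLN} is stated in operator norm, but its proof (through dominated convergence applied entry by entry) in fact produces the entrywise uniform convergence needed to control $\|\cdot\|_{1,\infty}$; invoking the first half of Lemma \ref{lem:ULLN} suffices.
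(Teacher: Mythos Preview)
Your argument is correct. The paper actually states Lemma \ref{samplegamma} without proof; it is evidently regarded as an immediate consequence of the convergence $\hat{\bGamma}^*\to\bGamma^*$ furnished by Lemma \ref{lem:ULLN} (at the fixed point $\bnu=\bnu^*$) together with continuity of the map $A\mapsto A_{S^cS}(A_{SS})^{-1}$ at the positive definite matrix $\bGamma^*$. Your write-up supplies exactly these missing routine details: Weyl's inequality for invertibility of $\hat{\bGamma}^*_{SS}$, the resolvent-type identity for the perturbation of the inverse, and submultiplicativity of $\|\cdot\|_{1,\infty}$. Nothing more is needed, and nothing different is done in the paper.
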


\begin{lemma}\label{lem:KKT}
	For any $\lambda_n > 0$, with high probability, the solution $\hat{\nu}_n$ to the optimization problem (\ref{prob:opt}) is characterized by
	$$-n^{-1}\sum_{i=1}^n\mathcal{S}^M(\hat{\bnu}_n, \by_i)  + \lambda_n {\hat{\bK}} = 0,$$
	where ${\hat{\bK}}$ is the subdifferential of $\mathcal{R}(\bnu)$ at $\hat{\bnu}_n$.
\end{lemma}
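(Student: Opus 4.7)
The plan is to obtain the stated optimality condition as the standard first-order necessary condition for a local minimum of a smooth-plus-convex-nonsmooth objective, once we know that $\hat{\nu}_n$ lies in the interior of $\mathcal{D}$ so that the parameter-space constraint is locally inactive. Two ingredients will do the work: (i) the consistency of $\hat{\nu}_n$ (Lemma \ref{lem:consis}, or equivalently the rate bound in Theorem \ref{thmConvergenceRate}) together with the assumption in Section \ref{Appendix:Technicalproofs_notation} that $\nu^*$ is an interior point of $\mathcal{D}$, and (ii) the elementary subdifferential calculus for the sum $\mathcal{L}_n+\lambda_n\mathcal{R}$, where $\mathcal{L}_n$ is smooth on $\mathcal{D}$ and $\mathcal{R}$ is convex.

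First, I would argue that $\hat{\nu}_n$ is eventually interior. Since $\nu^*\in\mathrm{int}(\mathcal{D})$, there is $\delta>0$ with the closed ball $\bar B(\nu^*,\delta)\subset\mathcal{D}$ and even $\bar B(\nu^*,\delta)\subset\mathrm{int}(\mathcal{D})$ (by shrinking $\delta$, using that the eigenvalue constraints $m\le\lambda_{\min}(\Theta_g)\le\lambda_{\max}(\Theta_g)\le M$ are strict at $\nu^*$ and the eigenvalues depend continuously on $\nu$). Lemma \ref{lem:consis} gives $\Pr(\|\hat{\nu}_n-\nu^*\|_2<\delta)\to 1$, hence on an event $\mathcal{A}_n$ with $\Pr(\mathcal{A}_n)\to 1$ we have $\hat{\nu}_n\in\mathrm{int}(\mathcal{D})$.

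On $\mathcal{A}_n$, the constraint $\nu\in\mathcal{D}$ is locally inactive, so $\hat{\nu}_n$ is an unconstrained local minimizer of $F(\nu):=\mathcal{L}_n(\nu)+\lambda_n\mathcal{R}(\nu)$ over a neighborhood. The function $\mathcal{L}_n$ is continuously differentiable on $\mathrm{int}(\mathcal{D})$ (the score $\mathcal{S}^M(\nu,y_i)$ is well-defined and its integrability used to differentiate under the integral follows from the domination bounds in Lemma \ref{lem:MPLNFisher}), and $\mathcal{R}$ is a finite convex function of $\nu$. The standard first-order necessary condition for a local minimum of a smooth plus convex nonsmooth function then yields
$$
0\in\nabla\mathcal{L}_n(\hat{\nu}_n)+\lambda_n\,\partial\mathcal{R}(\hat{\nu}_n).
$$
Substituting $\nabla\mathcal{L}_n(\nu)=-n^{-1}\sum_{i=1}^n\mathcal{S}^M(\nu,y_i)$ gives the displayed characterization with some $\hat{K}\in\partial\mathcal{R}(\hat{\nu}_n)$.

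The main (and essentially only) obstacle is ruling out the boundary of $\mathcal{D}$; the subdifferential calculus step is textbook once the objective has been localized. Note that we do not need convexity of $\mathcal{L}_n$ (which fails for mixtures) because the conclusion is only a necessary condition and the penalty $\mathcal{R}$ is convex, so $\partial\mathcal{R}(\hat{\nu}_n)$ is the ordinary convex subdifferential; the inclusion above is the Fermat rule applied at a local minimum in the interior of the feasible set.
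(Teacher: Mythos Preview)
Your proposal is correct, and its core logic matches the paper's: both argue that with high probability $\hat{\nu}_n$ lies in the interior of $\mathcal{D}$, after which the displayed stationarity condition is the standard Fermat rule for a smooth-plus-convex objective. The route to interiority differs, however. You invoke the consistency result (Lemma \ref{lem:consis}) and the assumption that $\nu^*\in\mathrm{int}(\mathcal{D})$ to force $\hat{\nu}_n$ into a fixed interior ball around $\nu^*$. The paper instead argues directly that the boundary $\partial\mathcal{D}$ cannot contain the optimizer: by Jensen's inequality and identifiability, $\sup_{\nu\in\partial\mathcal{D}} \Ex_{\nu^*}\!\left[\log\{p(\by;\nu)/p(\by;\nu^*)\}\right]<0$, and a uniform law of large numbers transfers this strict gap to the empirical log-likelihood, so the penalized objective at any boundary point exceeds its value at $\nu^*$. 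Your route is cleaner in that it recycles an already-proved lemma and avoids a second ULLN argument; the paper's route is more self-contained and does not need the full consistency statement, only a comparison restricted to $\partial\mathcal{D}$. Note that both arguments implicitly require $\lambda_n\to 0$: your use of Lemma \ref{lem:consis} makes this explicit, while the paper's boundary comparison controls only the likelihood part and needs the penalty contribution to be $o(1)$. So the phrase ``for any $\lambda_n>0$'' in the lemma should be read in that asymptotic regime, consistent with how the lemma is used in the proof of Theorem \ref{signconsistency}.
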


Next, we construct the primal-dual witness solution $(\widetilde{\bnu}_n, {\widetilde{\bK}} )$. Let $\widetilde{\bnu}_n$ be  the solution to the restricted  optimization problem
$$\widetilde{\bnu}_n \in \arg \min_{\bnu_{S^c} = 0,  \bnu \in {\mathcal{D}}} \mathcal{L}_n(\bnu) + {\lambda_n} \mathcal{R}(\bnu).$$
Define 
\begin{equation} \label{eq:tildeK}
	\widetilde{\bK} = \lambda_n^{-1} \left\lbrace n^{-1}\sum_{i=1}^n\mathcal{S}^M(\widetilde{\bnu}_n, \by_i)  \right\rbrace. 
\end{equation} 
Then, we have
\begin{equation} \label{equ:res}
	-n^{-1}\sum_{i=1}^n\mathcal{S}^M(\widetilde{\bnu}_n, \by_i)  + \lambda_n {\widetilde{\bK}} = 0.
\end{equation}
Define
$\widetilde{\Delta} _n= \bnu^* - \widetilde{\bnu}_n$. 
We rewrite (\ref{equ:res}) as  
$$-n^{-1}\sum_{i=1}^n\mathcal{S}^M(\widetilde{\bnu}_n, \by_i)  + n^{-1}\sum_{i=1}^n\mathcal{S}^M(\bnu^*, \by_i)-n^{-1}\sum_{i=1}^n\mathcal{S}^M(\bnu^*, \by_i) + \hat{\bGamma}^*\widetilde{\Delta} _n- \hat{\bGamma}^*\widetilde{\Delta} _n+ \lambda_n {\widetilde{\bK}}  = 0.$$
Let 
$${\R} = -n^{-1}\sum_{i=1}^n\mathcal{S}^M(\widetilde{\bnu}_n, \by_i)  + n^{-1}\sum_{i=1}^n\mathcal{S}^M(\bnu^*, \by_i) + \hat{\bGamma}^*\widetilde{\Delta}_n,$$
and ${\bW} = -n^{-1}\sum_{i=1}^n\mathcal{S}^M(\bnu^*, \by_i)$.
Therefore, we have 
\begin{equation} \label{KKT}
	{\R} - \hat{\bGamma}^*\widetilde{\Delta} _n+ {\bW} + \lambda_n {\widetilde{\bK}} = 0.
\end{equation}
Since we  restrict  the solution to the set of true support, similarly to Theorem 1, we can proceed analogously to the proof of 
\begin{equation} \label{eq:dual}
	\pr\left\{||\widetilde{\nu}_n-\nu^*||_2 \leq  (3/\kappa) \sqrt{G{p(p+1)}/2}\left(n^{-1} ||\nabla{\ell}_n(\bnu^*)||_{\infty} + 2{\lambda_n}{}\right)\right\}\rightarrow 1, \mbox{ as } n \rightarrow \infty.
\end{equation}
Considering the local convex property of loss function at $\bnu^*$ in Lemma \ref{lem:delta}, if we verify that with high probability, the strict dual feasibility condition
$||\widetilde{\bK}||_{\infty} \leq 2$ holds,  then we prove that with high probability $\widetilde{\bnu}_n $ is equal to $\hat{\bnu}_n$. Then, the model can recover all zeros. Similarly to Lemma \ref{lem:KKT}, by the fact that $\widetilde{\bnu}_n$ is the restricted construction and the definition of $\widetilde{\bK}$ in (\ref{eq:tildeK}), we have 
\begin{equation}\label{eq:tildeKS}
	||\widetilde{\bK}_{S}||_{\infty} \leq 2 .
\end{equation} 
Therefore, we only need to show that $||\widetilde{\bK}_{S^c}||_{\infty} \leq 2 $.  
Observe that the infinity norm of $\widetilde{\bK}_{S^c}$ is less than 2 instead of 1. The reason is  $\mathcal{R}(\bnu) =  \sum_{g=1}^G ||\bTheta_g||_{1,\rm off} = 2\sum_{g=1}^G \sum_{i<j} |\bTheta_{g,ij}|$. 
Hence, we aim to  verify  the strict dual feasibility. 

\begin{lemma}[Strict dual feasibility] \label{dualf}
	Under Condition 1-4,  suppose 
	that $\hat{\bGamma}_{SS}^*$ is invertible and
	$$||{\bW}||_{\infty} + ||{\R}||_{\infty} < \alpha \lambda_n/4,\,\quad \left\| \hat{\bGamma}^*_{S^cS}{\left(\hat{\bGamma}^*_{SS}\right)}^{-1}\right\|_{1,\infty} \leq 1-\alpha/2.$$
	Then, the matrix $\widetilde{\bK}$ satisfies 
	$||\widetilde{\bK}_{S^c}||_{\infty} \leq 2.$
\end{lemma}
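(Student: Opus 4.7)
}
The plan is to run the standard primal-dual witness reduction on the KKT system \eqref{KKT}. Splitting \eqref{KKT} according to the index sets $S$ and $S^c$ and using that $\widetilde{\Delta}_{n,S^c}=0$ (both $\widetilde{\nu}_n$ and $\nu^*$ have zero components outside $S^*$ by construction of the restricted problem), I get
\begin{align*}
R_S - \hat{\Gamma}^*_{SS}\widetilde{\Delta}_{n,S} + W_S + \lambda_n \widetilde{K}_S &= 0,\\
R_{S^c} - \hat{\Gamma}^*_{S^c S}\widetilde{\Delta}_{n,S} + W_{S^c} + \lambda_n \widetilde{K}_{S^c} &= 0.
\end{align*}
Since $\hat{\Gamma}^*_{SS}$ is assumed invertible, I solve the first equation for $\widetilde{\Delta}_{n,S} = (\hat{\Gamma}^*_{SS})^{-1}(R_S + W_S + \lambda_n \widetilde{K}_S)$ and substitute into the second, obtaining
$$\lambda_n \widetilde{K}_{S^c} = \hat{\Gamma}^*_{S^c S}(\hat{\Gamma}^*_{SS})^{-1}\bigl(R_S + W_S + \lambda_n \widetilde{K}_S\bigr) - R_{S^c} - W_{S^c}.$$

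Next I take the $L_\infty$ norm on both sides and apply the submultiplicative bound $\|Av\|_\infty \leq \|A\|_{1,\infty}\|v\|_\infty$ together with the triangle inequality. Using the hypothesis $\|\hat{\Gamma}^*_{S^cS}(\hat{\Gamma}^*_{SS})^{-1}\|_{1,\infty}\leq 1-\alpha/2$, the fact $\|\widetilde{K}_S\|_\infty\leq 2$ from \eqref{eq:tildeKS}, and the bounds $\|R_S\|_\infty,\|R_{S^c}\|_\infty\leq \|R\|_\infty$ and similarly for $W$, I arrive at
$$\lambda_n\|\widetilde{K}_{S^c}\|_\infty \leq (1-\alpha/2)\bigl(\|R\|_\infty + \|W\|_\infty + 2\lambda_n\bigr) + \|R\|_\infty + \|W\|_\infty = 2\lambda_n(1-\alpha/2) + (2-\alpha/2)(\|R\|_\infty + \|W\|_\infty).$$

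Finally, I plug in the hypothesis $\|W\|_\infty + \|R\|_\infty < \alpha\lambda_n/4$ and simplify:
$$\lambda_n\|\widetilde{K}_{S^c}\|_\infty < 2\lambda_n(1-\alpha/2) + (2-\alpha/2)\frac{\alpha\lambda_n}{4} = 2\lambda_n - \frac{\alpha\lambda_n}{2} - \frac{\alpha^2\lambda_n}{8} < 2\lambda_n,$$
so dividing by $\lambda_n>0$ gives $\|\widetilde{K}_{S^c}\|_\infty < 2$, as required. The argument is essentially algebraic once the witness equation is solved; the only mild point to be careful about is that the $l_1/l_\infty$ operator norm of $\hat{\Gamma}^*_{S^cS}(\hat{\Gamma}^*_{SS})^{-1}$ must be used (not the spectral norm), so the inequality $\|Av\|_\infty\leq \|A\|_{1,\infty}\|v\|_\infty$ is the right one to invoke, and the factor $2$ (rather than $1$) on the right-hand side arises because $\mathcal{R}(\nu)=2\sum_{g,i<j}|\Theta_{g,ij}|$ carries a $2$ in front of each off-diagonal entry.
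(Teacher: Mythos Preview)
Your proposal is correct and follows essentially the same route as the paper's own proof: split the KKT identity \eqref{KKT} into $S$ and $S^c$ blocks using $\widetilde{\Delta}_{n,S^c}=0$, solve the $S$-block for $\widetilde{\Delta}_{n,S}$, substitute into the $S^c$-block, and bound via $\|Av\|_\infty\leq\|A\|_{1,\infty}\|v\|_\infty$ together with $\|\widetilde{K}_S\|_\infty\leq 2$ from \eqref{eq:tildeKS}. Your final arithmetic $2\lambda_n-\alpha\lambda_n/2-\alpha^2\lambda_n/8<2\lambda_n$ matches the paper's bound exactly.
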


\begin{lemma} \label{lem:remainder}
	$||{\R}||_{\infty}/||\widetilde{\Delta}_n||_2\rightarrow 0$, in probability as $n \rightarrow  \infty$.
\end{lemma}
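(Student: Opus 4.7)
The plan is to recognize that $\R$ is precisely the second-order Taylor remainder of the per-sample score $\mathcal{S}^M(\cdot,\by_i)$ expanded at $\bnu^*$, averaged over $i$. Writing
$$
\mathcal{S}^M(\widetilde{\bnu}_n,\by_i)-\mathcal{S}^M(\bnu^*,\by_i)=-\F(\bnu^*,\by_i)\,\widetilde{\Delta}_n+r_i(\widetilde{\Delta}_n,\by_i),
$$
the first-order piece cancels against $\hat{\bGamma}^*\widetilde{\Delta}_n=-n^{-1}\sum_i\F(\bnu^*,\by_i)\,\widetilde{\Delta}_n$, so $\R=-n^{-1}\sum_{i=1}^n r_i(\widetilde{\Delta}_n,\by_i)$. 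By the integral form of the Taylor remainder,
$$
\|r_i(\widetilde{\Delta}_n,\by_i)\|_\infty\le C(\by_i,\widetilde{\Delta}_n)\,\|\widetilde{\Delta}_n\|_2^{\,2},
$$
where $C(\by_i,\widetilde{\Delta}_n)$ is controlled by the supremum of the componentwise third derivatives of $\ell(\cdot,\by_i)$ over a neighborhood of $\bnu^*$.

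The main obstacle is to produce an integrable dominating function $K_3(\by)$, independent of $\bnu$ in a neighborhood of $\bnu^*$ inside $\mathcal{D}$, for all third partial derivatives $\partial^3\ell(\bnu,\by)/\partial\nu_a\partial\nu_b\partial\nu_c$. I would obtain these third derivatives by differentiating the explicit formulas already computed for $\F(\bnu,\by)$ in the excerpt; every resulting term is a finite sum of ratios of the form $\mathcal{T}(\psi_1)\cdots\mathcal{T}(\psi_k)/\mathcal{T}(\mathbb{I})^k$ with each $\psi_j$ a polynomial in $\bx$, multiplied by a bounded function of the entries of $\bSigma_g=\bTheta_g^{-1}$ (bounded because eigenvalues of $\bTheta_g$ are in $[m,M]$) and by the bounded weights $\pi_gL_g/L_M$. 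Lemma \ref{pdim} together with Remark \ref{pdimremark} then furnishes a single polynomial function $K_3(\by)$ with $\Ex_{\bnu^*}\{K_3(\by)\}<\infty$ that dominates all such third derivatives uniformly over a compact neighborhood of $\bnu^*$ inside $\mathcal{D}$; this is exactly the same pattern of argument used to prove Lemmas \ref{lem:PLNFisher} and \ref{lem:MPLNFisher}.

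Given the dominating function, the remainder satisfies $\|r_i(\widetilde{\Delta}_n,\by_i)\|_\infty\le K_3(\by_i)\,\|\widetilde{\Delta}_n\|_2^{\,2}$ whenever $\widetilde{\bnu}_n$ lies in the neighborhood on which the bound is valid. Hence
$$
\|\R\|_\infty\le \Bigl(n^{-1}\sum_{i=1}^n K_3(\by_i)\Bigr)\,\|\widetilde{\Delta}_n\|_2^{\,2}.
$$
The strong law of large numbers gives $n^{-1}\sum_i K_3(\by_i)\to \Ex_{\bnu^*}\{K_3(\by)\}<\infty$ almost surely, and the consistency bound (\ref{eq:dual}) combined with $\lambda_n\to 0$ and $n^{-1}\|\nabla\ell_n(\bnu^*)\|_\infty\to 0$ (the latter by the central limit theorem applied to the mean-zero score $\mathcal{S}^M(\bnu^*,\by)$, using Lemma \ref{lem:MPLNFisher}) yields $\|\widetilde{\Delta}_n\|_2\to 0$ in probability, which also validates the neighborhood assumption with probability tending to one. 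Putting these together,
$$
\frac{\|\R\|_\infty}{\|\widetilde{\Delta}_n\|_2}\le \Bigl(n^{-1}\sum_{i=1}^n K_3(\by_i)\Bigr)\,\|\widetilde{\Delta}_n\|_2\xrightarrow{p}0,
$$
which is the desired conclusion. The technical heart of the argument is thus entirely in the third-derivative dominating function; everything else is a routine combination of a Taylor expansion, the law of large numbers, and the already-established consistency bound.
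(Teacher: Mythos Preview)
Your argument is correct, but it takes a genuinely different route from the paper. The paper does not go to third derivatives at all. Instead it applies the mean value theorem to the score to write
\[
\R=\Bigl\{\hat{\bGamma}^*+n^{-1}\sum_{i=1}^n\F(\check{\bnu},\by_i)\Bigr\}\widetilde{\Delta}_n,\qquad \check{\bnu}=\bnu^*+\theta(-\widetilde{\Delta}_n),
\]
so that $\|\R\|_\infty/\|\widetilde{\Delta}_n\|_2$ is bounded by the operator norm of the bracketed matrix. Since $\hat{\bGamma}^*=-n^{-1}\sum_i\F(\bnu^*,\by_i)$, this matrix is split as
\[
\bigl[\hat{\bGamma}^*+\D(\bnu^*)\bigr]+\bigl[\D(\check{\bnu})-\D(\bnu^*)\bigr]+\Bigl[n^{-1}\sum_i\F(\check{\bnu},\by_i)-\D(\check{\bnu})\Bigr],
\]
and each piece tends to zero using only the already-proved uniform law of large numbers for the Hessian (Lemma \ref{lem:ULLN}) together with the continuity of $\D(\cdot)$ (Remark \ref{cont}) and the consistency $\check{\bnu}\to\bnu^*$.

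What each approach buys: the paper's argument is more economical, since it recycles the second-derivative dominating function from Lemma \ref{lem:MPLNFisher} and the ULLN of Lemma \ref{lem:ULLN} without any new analytic work; nothing beyond second-order control is needed. Your approach requires building a new integrable envelope $K_3(\by)$ for all third partial derivatives, which is feasible by the same $\mathcal{T}(\psi)/\mathcal{T}(\mathbb{I})$ machinery but is additional work not present in the paper. On the other hand, your route yields the sharper quantitative bound $\|\R\|_\infty=O_p(\|\widetilde{\Delta}_n\|_2^{\,2})$, whereas the paper only obtains $o_p(\|\widetilde{\Delta}_n\|_2)$; the stronger rate is not needed here, but it is a genuine byproduct of the third-derivative route.
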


Applying  Chebyshev's inequality, the following lemma is clear. 
\begin{lemma} \label{noiseterm}
	Let $a_n$ be any sequence such that ${a_n}{\sqrt{n}} \rightarrow \infty, a_n > 0$ and $a_n \rightarrow 0$. Then, we have 
	$$\pr(||{\bW}||_{\infty} \leq a_n) \rightarrow 1, \mbox{ as } n \rightarrow \infty.$$
\end{lemma}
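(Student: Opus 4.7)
The plan is to prove this via a coordinate-wise Chebyshev inequality combined with a union bound, since $\bW = -n^{-1}\sum_{i=1}^n \mathcal{S}^M(\bnu^*, \by_i)$ is an empirical mean of i.i.d.\ centered random vectors of fixed finite dimension $Gp(p+1)/2$.

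First, I would invoke the standard score identity $\Ex_{\bnu^*}[\mathcal{S}^M(\bnu^*, \by)] = \bzero$, which is available here because the mixture Poisson log-normal density already satisfies the regularity conditions used in computing $\bGamma^*$ (differentiation under the summation sign is justified by the dominating functions of Lemma \ref{lem:PLNFisher} and Lemma \ref{lem:MPLNFisher}). Consequently each coordinate $\bW_k$ satisfies $\Ex(\bW_k) = 0$ and $\Var(\bW_k) = n^{-1}\bGamma^*_{kk}$. Chebyshev's inequality then gives $\pr(|\bW_k| > a_n) \leq \bGamma^*_{kk}/(n a_n^2)$, and the hypothesis $\sqrt{n}\, a_n \to \infty$ forces $n a_n^2 \to \infty$. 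A union bound across the $Gp(p+1)/2$ coordinates yields
$$\pr(\|\bW\|_\infty > a_n) \;\leq\; \frac{Gp(p+1)}{2}\cdot\frac{\max_k \bGamma^*_{kk}}{n\, a_n^2} \;\longrightarrow\; 0,$$
which is exactly the claim.

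The one nontrivial ingredient — and the main obstacle — is verifying that the diagonal of the Fisher information is finite, i.e.\ $\Ex_{\bnu^*}[\mathcal{S}^M_k(\bnu^*, \by)^2] < \infty$ for every coordinate $k$. I would dispatch this by a dominating-function argument paralleling Lemma \ref{lem:MPLNFisher} and Remark \ref{pdimremark}: each score coordinate has the form of a bounded factor $\pi_g L_g/L_M \in [0,1]$ times a ratio $\mathcal{T}(\psi)/\mathcal{T}(\mathbb{I})$ with $\psi$ a fixed polynomial in $\bx - \bmu_g^*$. Squaring produces a product of two such $\mathcal{T}(\cdot)/\mathcal{T}(\mathbb{I})$ ratios, which by Remark \ref{pdimremark} is dominated by a polynomial in $\|\by\|_2$ whose expectation under the true MPLN distribution is finite by Lemma \ref{lem:poly}. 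With this uniform second-moment bound in hand, the Chebyshev-plus-union-bound computation above closes the proof with essentially no further work.
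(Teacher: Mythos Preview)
Your proposal is correct and matches the paper's approach exactly: the paper simply states ``Applying Chebyshev's inequality, the following lemma is clear'' without further detail. You have supplied precisely the argument the paper leaves implicit --- coordinate-wise Chebyshev plus a union bound, together with the finite-second-moment verification via the polynomial dominating functions already established in Lemmas \ref{lem:PLNFisher}--\ref{lem:MPLNFisher} and Remark \ref{pdimremark}.
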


\begin{proof}[of Theorem 3]
	Observe that ${\bW} =-n^{-1} \nabla \ell_n(\bnu^{*})$ and $\widetilde{\Delta} _n= \bnu^* - \widetilde{\bnu}_n$. 
	Applying Lemma \ref{lem:remainder}, we have 
	$  ||{\R}||_{\infty} = o_p(1) ||\widetilde{\Delta}_n||_2. $ 
	By (\ref{eq:dual}), we have 
	$$||{\R}||_{\infty} = o_p(1) \left\lbrace  (3/\kappa) \sqrt{G{p(p+1)}/2} \left(||{\bW}||_{\infty} +2 {\lambda_n}\right)\right\rbrace  .$$
	It follows that 
	\begin{align*}
		||{\bW}||_{\infty} + ||{\R}||_{\infty} & =||{\bW}||_{\infty} +  o_p(1) \left\lbrace  (3/\kappa) \sqrt{G{p(p+1)}/2} \left(||{\bW}||_{\infty} +2 {\lambda_n}\right)\right\rbrace \\
		& = \left\lbrace 1+o_p(1)\right\rbrace ||{\bW}||_{\infty} + o_p(1)\lambda_n.
	\end{align*}
	In order to show   $||{\bW}||_{\infty} + ||{\R}||_{\infty} < \alpha \lambda_n/4,$ we only need to show that with high probability, 
	$$||{\bW}||_{\infty} \leq \left\lbrace 1+o_p(1)\right\rbrace ^{-1} \left\lbrace \alpha/4-o_p(1)\right\rbrace \lambda_n.$$
	By the  choice of $\lambda_n$ such that $\lambda_n \rightarrow 0$ and ${\sqrt{n}\lambda_n} \rightarrow \infty$,
	applying Lemma \ref{noiseterm}, we have 
	$$\pr \left[ ||{\bW}||_{\infty} \leq \left\lbrace 1+o_p(1)\right\rbrace ^{-1}\left\lbrace \alpha/4-o_p(1)\right\rbrace \lambda_n\right]  \rightarrow 1.$$ 
	By Lemma  \ref{samplegamma}, we have with high probability 	$\hat{\bGamma}_{SS}^*$ is invertible and 
	$ \left\|\hat{\bGamma}^*_{S^cS}{\left(\hat{\bGamma}^*_{SS}\right)}^{-1} \right\|_{1,\infty} \leq 1-\alpha/2.$
	Then, by Lemma \ref{dualf}, with high probability, we have  the strict dual feasibility condition holds. Hence, the witness solution $\widetilde{\bnu}_n$ is equal to the original solution $\hat{\bnu}_n$. 
	Since $\widetilde{\bnu}_n$ is the restricted solution,  with high probability, $\widetilde{\bnu}_n$ can recover all zeros. It follows that $\hat{\bnu}_n$ can recover all zeros.  Finally, since $ ||\hat{\bnu}_n-\bnu^*||_2  \rightarrow 0 $ in probability, with high probability, $\hat{\bnu}_n$ can recover all non zeros.
\end{proof}

\begin{proof}[of Lemma \ref{lem:KKT}]
	Since $\bnu^*$  is an interior point of $\mathcal{D}$ ,  we only need to prove with high probability, the maximum value will not be taken at $\partial{\mathcal{D}}$.  By Jessen's inequality and the identifiablity of the mixture Poisson log-normal model, we have  
	$\sup_{\bnu \in \partial{\mathcal{D}}} 
	\Ex_{\bnu^*} \left[\log \left\{ p\left(\by_i;\bnu, \{\bmu_g\}_{g=1}^G \right) / p\left(\by_i;\bnu^*, \{\bmu_g\}_{g=1}^G\right) \right\} \right] < 0$. By uniform law of large numbers, we have  with high probability,
	$$\sup_{\bnu \in \partial{\mathcal{D}}}\left\lbrace n^{-1}\sum_{i=1}^n \log \, p\left(\by_i;\bnu, \{\bmu_g\}_{g=1}^G\right) - n^{-1}\sum_{i=1}^n \log \, p\left(\by_i;\bnu^*, \{\bmu_g\}_{g=1}^G\right)\right\rbrace  < 0.$$
	Thus we complete the proof.  
\end{proof}

\begin{proof}[of Lemma \ref{dualf}]
	We split $\widetilde{\Delta}_n$ into $\widetilde{\Delta}_{nS} $ and $\widetilde{\Delta}_{nS^c} $. 
	By $\widetilde{\Delta}_{nS^c} = 0$,  (\ref{KKT})  can be rewritten as two blocks of linear equations 
	\begin{equation}\label{part2}
		{\R}_S - \hat{\bGamma}_{SS}^*\widetilde{\Delta}_{nS} + {\bW}_S + \lambda_n {\widetilde{\bK}}_{S} = 0,
	\end{equation}
	\begin{equation}\label{part}
		{\R}_{S^c} - \hat{\bGamma}_{S^cS}^*\widetilde{\Delta}_{nS} + {\bW}_{S^c} + \lambda_n {\widetilde{\bK}}_{S^c} = 0.
	\end{equation}
	From (\ref{part2}), we have
	$\widetilde{\Delta}_{nS} ={\left(\hat{\bGamma}^*_{SS}\right)}^{-1}\left({\R}_S + {\bW}_S + \lambda_n {\widetilde{\bK}}_S\right).$
	Substituting this expression into (\ref{part}), with high probability, we have
	$${\widetilde{\bK}}_{S^c} = \lambda_n^{-1}	\left\lbrace -{\R}_{S^c} + \hat{\bGamma}_{S^cS}^*{\left(\hat{\bGamma}^*_{SS}\right)}^{-1}\left({\R}_S + {\bW}_S + \lambda_n {\widetilde{\bK}}_S\right) - {\bW}_{S^c} \right\rbrace .$$
	Let $\bA$ be a matrix and ${a}$ be a vector.
	By the fact that $||\bA {a}||_{\infty} \leq ||\bA||_{1,\infty} ||{a}||_{\infty}$, we have
	\begin{align*}
		||{\widetilde{\bK}}_{S^c}||_{\infty} &\leq {\lambda_n}^{-1}\left(1 + \bigg|\bigg|\hat{\bGamma}_{S^cS}^* {\left(\hat{\bGamma}^*_{SS}\right)}^{-1}\bigg|\bigg|_{1,\infty}\right)\left(||{\bW}||_{\infty} + ||{\R}||_{\infty}\right)  + \bigg|\bigg|\hat{\bGamma}_{S^cS}^* {\left(\hat{\bGamma}^*_{SS}\right)}^{-1}\bigg|\bigg|_{1,\infty} ||{\widetilde{\bK}}_{S}||_{\infty} \\
		&\leq  2  (1-\alpha/2) + ((2-\alpha/2)/\lambda_n)(||{\bW}||_{\infty} + ||{\R}||_{\infty}) \leq 2.
	\end{align*}
	Here we use the inequality that $||{\widetilde{\bK}}_{S}||_{\infty} \leq 2$ in (\ref{eq:tildeKS}). Thus we complete the proof.
\end{proof}

\begin{proof}[of Lemma \ref{lem:remainder}]
	By the mean value theorem, we have
	$${\R} = -n^{-1}\sum_{i=1}^n\mathcal{S}^M(\widetilde{\bnu}_n, \by_i)  + n^{-1}\sum_{i=1}^n\mathcal{S}^M(\bnu^*, \by_i) + \hat{\bGamma}^*\widetilde{\Delta} _n = \left\lbrace \hat{\bGamma}^* + n^{-1} \sum_{i=1}^n \F(\check{\bnu}, \by_i) \right\rbrace\widetilde{\Delta} _n,$$
	where $\check{\bnu} = \bnu^{*} + \theta (-\widetilde{\Delta} _n), 0 \leq\theta \leq 1.$
	Since $||\R||_{\infty} /||\widetilde{\Delta} _n||_2 \leq ||{\R}||_2 / ||\widetilde{\Delta} _n||_2$,
	we have 
	$$||\R||_{\infty} /||\widetilde{\Delta} _n||_2 \leq \left\|\hat{\bGamma}^*+ n^{-1} \sum_{i=1}^n \F(\check{\bnu}, \by_i)  \right\|_2.$$
	By the triangle inequality, we have	
	\begin{align*}
		&\quad \bigg|\bigg|\hat{\bGamma}^*+ n^{-1} \sum_{i=1}^n  \F(\check{\bnu}, \by_i) \bigg|\bigg|_2 = \bigg|\bigg|\hat{\bGamma}^*+\D(\bnu^*)-\D(\bnu^*) +\D(\check{\bnu})-\D(\check{\bnu}) + n^{-1} \sum_{i=1}^n \F(\check{\bnu}, \by_i) \bigg|\bigg|_2 \\
		&\leq ||\hat{\bGamma}^*+\D(\bnu^*)||_2+||-\D(\bnu^*) +\D(\check{\bnu})||_2+\bigg|\bigg|-\D(\check{\bnu}) + n^{-1} \sum_{i=1}^n  \F(\check{\bnu}, \by_i) \bigg|\bigg|_2.
	\end{align*}
	By $||\bnu^*-\check{\bnu} ||_2^2 \rightarrow 0$ in probability and Lemma \ref{lem:ULLN}, we have $||\R||_{\infty} /||\widetilde{\Delta} _n||_2\rightarrow 0$ in probability.
\end{proof}

\section{Appendix: Simulation}

\subsection{Details of the Data Generation Process}
\label{Simulation:datageneration}
For each simulation dataset, we first independently generate the precision matrix for each of the 3 latent normal distributions according to one of the four graph structures. When generating the precision matrices, the diagonal elements are set as 1 plus a small positive number to guarantee positive definiteness. Then, we generate the mean vectors $\bmu_1,\bmu_2,\bmu_3$ for the latent normal distributions. The first $p_d$ elements of $\bmu_g$ ($g=1,2,3$) are independently sampled from $\{{{v}_{1}},\left({{v}_{1}}+{{v}_{2}}\right)/2,{{v}_{2}}\}$. The remaining $p-p_d$ elements are shared among $\bmu_1,\bmu_2, \bmu_3$ and are independently sampled from $\{{v}_{3},{v}_{4}\}$. We set $\left(v_1,v_2,v_3,v_4\right)$ as  $(2.4,-0.1,0.9,-0.1)$  in the low dropout case (about $10\%$ zeros) and $(1.4,-1.1,-0.1,-1.1)$ in the high dropout case  (about $40\%$ zeros). We vary $p_d$ to control the mixing degree of the three populations. The scaling factors $\bl = (l_1,\dots,l_n)$ are independently generated from a log-normal distribution $\log \mbox{N}(\log 10,0.05)$. With these model parameters, we finally generate the observed expression $\bY_1,\dots,\bY_n$ from the mixture Poisson log-normal model. We calculate the Adjusted Rand Index between the true population label and  population label from the K-means clustering \citep{hartigan1979algorithm} of the normalized data $\tilde{\bY} = \log (\bY+1) - (\log \hat{\bl}) 1_{p}^{T}$ with $\hat{l}_i = \sum_{j=1}^{p} Y_{i j} / 10^4$ $(i = 1,2,\dots,n)$. We vary $p_d$ such that the low-level mixing data have an Adjusted Rand Index value in  $\left(0.9,1\right]$, the middle-level mixing data have an Adjusted Rand Index value in $\left(0.75,0.85\right]$ and the high-level mixing data have an Adjusted Rand Index value in $\left(0.65,0.75\right]$.

\subsection{The Description of Edge Scores, Partial Precision-Recall Curve, and Parameter Selection for different algorithms}
\label{Simulation:furtherint}

\textbf{1. Edge scores:} For VMPLN, VPLN and Glasso, suppose that $\hat{\Theta}$ is its estimation of a network, we define a edge score for the edge $(i,j)$ ($i\neq j$) as its absolute partial correlation, i.e. $\big|-(\hat{\Theta}_{ii} \hat{\Theta}_{jj})^{-\frac{1}{2}} \hat{\Theta}_{ij}\big|$. For LPGM, we define a edge score for each edge as its stability score. For PPCOR, GENIE3 and PIDC, we define a edge score for each edge as its estimated connected weight.\\
\textbf{2. Partial precision-recall curve:} Since the network inferred by the available method contain connected edges with different edge scores and unconnected edges with zero edge scores, the precision-recall curve constructed by varying threshold of the selected edges are incomplete and we can only obtain a partial precision-recall curve. We calculate its area under this partial precision-recall curve (pAUPRC). In order to eliminate the influence of different network densities given by the avilable methods, we further define the pAUPRC ratio as the ratio between the pAUPRC and the expected pAUPRC of the random network prediction with the same network density for a fair comparison.\\
\textbf{3. Parameter selection for different algorithms:} 

Selecting parameters using each algorithm's default: We use default parameters for PPCOR, GENIE3 and PIDC. We tune the parameter of VMPLN using the integrated complete likelihood criterion, the parameters of VPLN and Glasso using the Bayesian information criterion, and the parameters of LPGM using the stability.

Selecting parameters such that the density of the estimated network is 20\%: VMPLN, VPLN, Glasso and
LPGM, we first tune their tuning parameters such that the densities of the estimated networks are
20\%. For PPCOR, GENIE3 and PIDC, we select the edge score cutoffs such that the estimated network densities are 20\%.

\section{Appendix: Real Data Analysis}
\subsection{The used public gene regulatory databases}
\label{Realdata:GRNdb}
The used public gene regulatory databases include \\
\textbf{PPI databases}: STRING \citep{szklarczyk2019string}, HumanTFDB \citep{hu2019animaltfdb}\\
\textbf{ChIP-seq databases}: hTFtarget \citep{zhang2020htftarget}, ChEA \citep{lachmann2010chea}, ChIP-Atlas\citep{oki2018ch}, ChIPBase \citep{zhou2016chipbase}, ESCAPE \citep{xu2013escape}\\
\textbf{Integrated databases}: TRRUST \citep{han2018trrust}, RegNetwork \citep{liu2015regnetwork}.

\subsection{Silver Standard Construction for Benchmarking on scRNA-seq data}
\label{Realdata:silverstandard}

The Kang data consists of two batches, the interferon Beta 1 (IFNB1)-stimulated and control groups. The Zheng data also consists of two batches, which are respectively sequenced by $3^\prime$ and $5^\prime$ scRNA-seq technologies. Silver standards are constructed using the IFNB1-stimulated group and the $3^\prime$ batch for the Kang data and Zheng data, respectively. The gene pairs that occur in the public gene regulatory network databases (See appendix Section \ref{Realdata:GRNdb}) are taken as potential regulatory relationships. Each of these  potential regulatory relationships involves at least one transcription factor. Then, for each cell type in the construction batch, we calculate the Spearman's $\rho$ correlation between the gene pairs having potential regulatory relationships. If a gene pair has a significant Spearman's $\rho$ correlation, we consider the gene pair having a true regulatory relationship and add the edge to the silver standard edge set of the cell type. 

\subsection{Additional details of gene regulatory network inference of SARS-COV-2 dataset}
\label{Realdata:SARSCOV2}
We select top 2000 highly variable genes for each patient and use the union of the highly variable genes from all patients for VMPLN analysis. The gene set of interest for gene regulatory network inference is selected as the overall top 1000 highly variable genes as defined by Seurat \citep{stuart2019comprehensive}. The edges that do not appear in the public gene regulatory network databases listed in appendix Section \ref{Realdata:GRNdb} are set to 0. We only focus on the gene regulatory network among genes in the gene set of interest for gene regulatory network inference. We perform VMPLN analysis for each patient separately and select the parameters such that the density of the estimated networks (i.e. the number of inferred edges divided by the number of edges in the prior gene regulatory network set) was ~5\%. Then, for each macrophage group, we weighted average the estimated partial correlations from each moderate patients with the number of cells as weight to get the gene regulatory network under the moderate condition. Similarly, we obtain the gene regulatory network for every macrophage group under the severe condition.



\begin{figure}[htbp]
	\centering
	\includegraphics[width = 0.9\textwidth, height=0.88\textwidth]{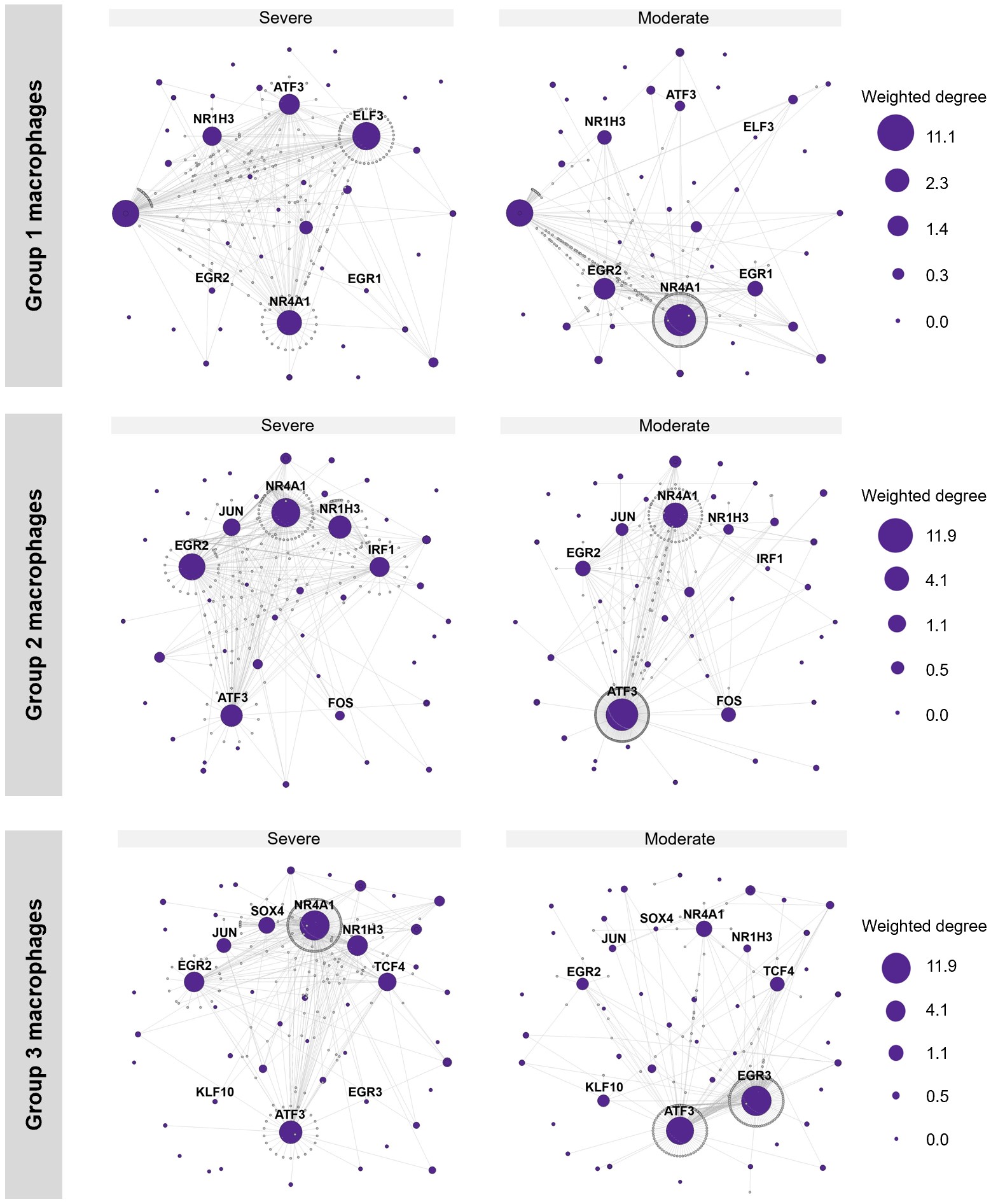}
	\caption{The inferred gene regulatory networks of Group1, Group2 and Group3 macrophages in severe and moderate patients.}
	\label{GRN-group123}
\end{figure}


\vskip 0.2in
\bibliography{VMPLN}

\end{document}